\documentclass[aps,prx,twocolumn,superscriptaddress,floatfix,nofootinbib]{revtex4-2}
\usepackage{subcaption}
\usepackage{graphicx}
\usepackage{quantikz}
\usepackage{amsthm}
\usepackage{hyperref}
\usepackage{cleveref}
\usepackage{mathtools}
\usepackage{tabularx}
\usepackage{booktabs}
\usepackage{tikz}
\usepackage{amsmath, amsfonts}
\usepackage{makecell}
\usepackage{blkarray}
\usepackage{float}
\usepackage{todonotes}
\usepackage{ragged2e}
\usepackage{ifthen}  

\DeclarePairedDelimiter\norm{\lVert}{\rVert}
\newcommand{\smallnorm}[1]{\Vert{#1}\Vert}
\newtheorem{lemma}{Lemma}

\newtheorem{theorem}{Theorem}

\newcommand{\tr}[2]{\textnormal{Tr}_{#1}(#2)}

\newcommand{\abs}[1]{\left | {#1}\right |}

\newcommand\numeq[1]%
  {\stackrel{\scriptscriptstyle(\mkern-1.5mu#1\mkern-1.5mu)}{=}}
\newcommand\numleq[1]%
  {\stackrel{\scriptscriptstyle(\mkern-1.5mu#1\mkern-1.5mu)}{\leq}}
\newcommand\numgeq[1]%
  {\stackrel{\scriptscriptstyle(\mkern-1.5mu#1\mkern-1.5mu)}{\geq}}

\newcommand{\symbolinline}[2]{%
  \tikz[baseline=-0.5ex]{%
    \ifthenelse{\equal{#1}{circle}}{%
      \node[circle,draw,minimum size=8pt,inner sep=0pt] (c) {};}{}
    \ifthenelse{\equal{#1}{doublecircle}}{%
      \node[circle,draw,minimum size=8pt,inner sep=0pt] (c) {};
      \node[circle,draw,minimum size=4pt,inner sep=0pt] at (c) {};}{}
  }%
}
\usepackage{tikz}

\begin{document}
\title{Dissipative preparation of injective tensor network states}
\author{Drishti Baruah}
\email{drishti.baruah@mpq.mpg.de}
\affiliation{Max Planck Institute of Quantum Optics, Hans Kopfermann Str. 1, 85748 Garching, Germany}
\affiliation{Munich Center for Quantum Science and Technology, Schellingstr. 4, 80799 Munich, Germany}
\author{Georgios Styliaris}
\affiliation{Max Planck Institute of Quantum Optics, Hans Kopfermann Str. 1, 85748 Garching, Germany}
\affiliation{Munich Center for Quantum Science and Technology, Schellingstr. 4, 80799 Munich, Germany}
\author{J.~Ignacio Cirac}
\affiliation{Max Planck Institute of Quantum Optics, Hans Kopfermann Str. 1, 85748 Garching, Germany}
\affiliation{Munich Center for Quantum Science and Technology, Schellingstr. 4, 80799 Munich, Germany}
\author{Rahul Trivedi}
\email{rahul.trivedi@mpq.mpg.de}
\affiliation{Max Planck Institute of Quantum Optics, Hans Kopfermann Str. 1, 85748 Garching, Germany}
\affiliation{Munich Center for Quantum Science and Technology, Schellingstr. 4, 80799 Munich, Germany}

\begin{abstract}
The preparation of tensor network states is a fundamental prerequisite for a wide range of quantum simulation tasks. While many unitary protocols for preparing these states have been investigated, dissipative state preparation provides a powerful alternative since it can be robust to noise and initialization errors. In this paper, we construct both continuous-time and discrete-time geometrically local dissipative processes whose unique steady state is a given injective tensor network  state. Our method prepares all injective matrix product states on $N$ sites to an error $\varepsilon$ in $O(\log (N/\varepsilon))$ time, yielding an exponential improvement over previously known dissipative preparation schemes. For two and higher-dimensional tensor network states, we prove that when the tensors of the state are \emph{highly injective}, the constructed dissipative processes are rapid-mixing i.e., they prepare a state $\varepsilon$-close to the $N$-site target state in $O(
 \log (N/\varepsilon))$ time. For these states, our approach provides a polynomial speedup over known unitary methods for states defined on lattices and an exponential speedup for states on general bounded-degree graphs. We corroborate our theoretical results with numerical studies that indicate that the dissipative protocol can rapidly prepares states outside the high-injectivity assumption.
\end{abstract}
\maketitle
\emph{Introduction.---} Preparing many-body states on quantum devices is a prerequisite for a wide range of quantum algorithms. In many cases, the relevant target states are tensor network states, which form a physically motivated and broadly applicable class. For instance, tensor network states serve as highly correlated initial states for quantum simulation~\cite{bauer2023quantum}, trial wave functions for variational algorithms~\cite{cerezo2021variational}, states to encode information in error correction~\cite{kitaev2003fault}, and resourceful states for quantum metrology~\cite{giovannetti2006quantum}. Prominent examples are Matrix Product States (MPS) in one dimension, and their higher-dimensional generalizations, Projected Entangled Pair States (PEPS)~\cite{cirac_mps}. While the tensor network description provides an efficient specification of complex many-body states, circumventing the need to explicitly store exponentially many amplitudes, the tensors themselves do not directly correspond to unitary gates acting on physical degrees of freedom. Consequently, their efficient preparation needs to be addressed separately. Several closed-system approaches have been developed to prepare MPS and specific subclasses of PEPS, including quantum circuit constructions~\cite{schon2005sequential,wei2022sequential,malz2024}, protocols based on measurements and feedforward~\cite{sahay2025classifying,zhang2024characterizing,stephen2024preparing,ren2025efficient}, and adiabatic state preparation methods~\cite{Ge_2016}.
 
A different paradigm for preparing MPS and PEPS treats them as ground states of local and frustration-free Hamiltonians~\cite{fannes1992finitely,nachtergaele1996spectral,perez_garcia2008PEPS}. In this setting, state preparation can be thought of as a cooling process, which lowers the energy via engineered dissipation through interactions with an environment. This paradigm can be formulated as a Lindbladian, or a quantum circuit of channels, that drives the system toward a unique steady state corresponding to the target ground state~\cite{verstraete2009quantum,kraus2008preparation,roy2020measurement,zhou2021symmetry,cubitt2023dissipative}, or to a thermal state at a target temperature~\cite{brandao2019finite,chen2023quantum,rouze2024optimal,bergamaschi2025quantum}.  However, cooling techniques that rely solely on the Hamiltonian, without using explicit knowledge of the target ground state, are fundamentally limited beyond 1D due to complexity-theoretic barriers\footnote{Even for 2D geometrically local Hamiltonians that are commuting and frustration-free, with a unique (gapped) product ground state, cooling cannot be efficient in general; the required preparation time can scale exponentially with system size, since this setting already suffices to encode hard classical SAT problems.}. In contrast to purely unitary preparation models, in the dissipative paradigm the initial state is arbitrary, and the resulting protocols are therefore expected to be robust against noise and errors. This robustness can be rigorously established for local observable expectation values, e.g., when the underlying process is local and rapidly mixing~\cite{cubitt2015stability}. For general tensor network families, dissipative preparation schemes have been established for injective MPS~\cite{verstraete2009quantum} and for injective PEPS with a commuting parent Hamiltonian, satisfying an additional condition~\cite{verstraete2009quantum}, and for matrix product density operators families at the fixed point~\cite{liu2025parent}.

Here, we present a provably efficient dissipative algorithm for preparing MPS and PEPS with highly injective tensors. Our results are formulated for both the analog Lindbladian setting and the discrete quantum channel model, where the target tensor network state is the unique fixed point. In both formulations, the generators and channels are geometrically local, and the mixing time of the resulting dynamics scales logarithmically with system size. As such, the preparation is provably robust for local expectation values~\cite{cubitt2015stability}. We show that the required sufficient condition for our preparation method holds for arbitrary injective MPS, even when they are not initially highly injective, by blocking a finite number of consecutive sites. For PEPS, injectivity alone is not sufficient to guarantee logarithmic-time preparation~\cite{verstraete2006criticality}. However, we expect that a similar blocking argument can be applied to injective PEPS when the associated parent Hamiltonian is gapped.

For the MPS case, our algorithm yields an exponential speedup over the previously best known dissipative preparation scheme with mixing time $O(N^{\log N})$ for $N$ sites~\cite{verstraete2009quantum}, while also implying stability of local observables due to rapid-mixing~\cite{cubitt2015stability}. At the same time, it matches the asymptotically optimal preparation algorithm in the local circuit model, achieving logarithmic complexity in $N$~\cite{malz2024}. Moreover, it provides a polynomial speedup over the state-of-the-art method for highly injective PEPS based on adiabatic evolution~\cite{Ge_2016}, whose runtime scales as $O(\log^{d+1}(N/\varepsilon))$ in $d$ spatial dimensions with error $\varepsilon$. Finally, our approach is the first to achieve logarithmic-time preparation for highly injective PEPS on expander graphs.

\emph{Preliminaries.---}
Following the notation in~\cite{cirac_mps}, we first formally define a  projected entangled pair state (PEPS) on an undirected graph $\mathcal{G} = (\mathcal{V}, \text{E})$ with vertices $\mathcal{V}=\{1,2,\ldots,N\}$, together with a set of edges $\text{E}$ connecting the vertices. We will restrict ourselves to graphs $\mathcal{G}$ with bounded-degree $\mathfrak{d}_0$ i.e., the number of vertices connected to any one vertex by an edge $\leq \mathfrak{d}_0$. This is a natural setting in many-body physics and includes lattices in finite-number of dimensions as well as certain geometrically non-local models \cite{breuckmann2021ldpc, hastings2021ldpc}. As depicted in Fig.~\ref{fig:peps}, To specify a PEPS $\ket{\psi}$, we first place on each edge $e = (i, j) \in \text{E}$ an entangled state $\ket{\phi_0}_{e} \in \mathbb{C}^{D} \otimes \mathbb{C}^{D}$, where $D$ is the bond-dimension of the PEPS. Each vertex now holds several qudits, one from each edge incident on the vertex. Next, on the qudits at vertex $i$, we now apply a local linear map $A_i$
\begin{equation}\label{eq:tensorA}
A_i:  {(\mathbb{C}^{D})}^{\otimes \vert\mathcal{N}_i\vert} \longrightarrow \mathbb{C}^{d},
\end{equation} 
where $\mathcal{N}_i = \{j\in \mathcal{V} : (i, j) \in \text{E}\}$ is the neighbourhood of $i$ and $d$ is the physical dimension associated with each site. Mathematically, the resulting PEPS $\ket{\psi} \in {(\mathbb{C}^{d})}^{\otimes N}$ is given by 
\begin{equation}
\label{eq:PEPS-def}
\ket{\psi} = \left( \bigotimes_{i\in \mathcal{V}} A_i \right) \bigotimes_{e\in \text{E}} \vert\phi_0\rangle_{e}.
\end{equation}
The maps $\{A_i\}_{i \in \mathcal{V}}$ will also be referred to as the tensors specifying the state $\ket{\psi}$. Since each $A_i$ is local, it cannot increase the entanglement entropy of the original set of entangled pairs across any bipartition. Therefore, by construction, the state $\ket{\psi}$ will have an entanglement area law i.e., the entanglement entropy of any subset of sites $S$ will scale at-most as the graph-area of $S$ (i.e., the number of graph edges connecting $S$ to $S^c$).
\begin{figure}[t]
    \centering
    \includegraphics[width=1.0\linewidth]{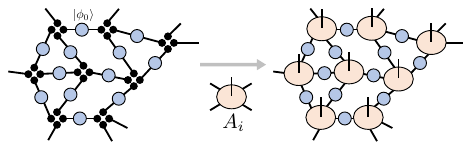}
    \caption{\justifying Projected entangled pair state (PEPS) on an arbitrary graph. Each vertex hosts a local tensor with one physical index and virtual indices associated with incident edges.}
    \label{fig:peps}
\end{figure}

We will restrict ourselves to a class of PEPS called \emph{injective} PEPS~\cite{perez_garcia2008PEPS}---the PEPS $\ket{\psi}$ is called injective if all the maps $A_i$ are \emph{injective} i.e. there exists a left inverse $A^{-1}_i$
\begin{equation}
A_i^{-1}: \mathbb{C}^{d} \longrightarrow 
{(\mathbb{C}^{D})}^{\otimes \vert\mathcal{N}_i\vert} 
\end{equation}
such that $A_i^{-1}A_i = I$. Injective tensor-network states on geometrically local graphs are believed to accurately describe ground-states of gapped non-topological Hamiltonians~\cite{schuch2010topo}. They are also studied as the ``typical" states within the space of tensor-network states since almost all tensors $A_i$ will be invertible under blocking~\cite{perez_garcia2022random}. In this paper, we will present dissipative processes that prepare injective PEPS as their unique fixed points---we will present both a continuous-time process, specified by a Lindbladian, as well as a discrete time-process, specified by a sequence of channels [Fig.~\ref{fig:peps}], which would be more amenable to implementation on digital platforms. Both of these processes will be geometrically local with respect to the underlying graph $\mathcal{G}$. Furthermore, we will provide evidence that these processes are rapid-mixing when the PEPS $\ket{\psi}$ have exponentially decaying correlations i.e., they prepare the tensor-network state in $O(\log(N))$ time.

\emph{Continuous-time dissipative protocol.---}
We first consider a setting relevant to analog simulators \cite{kashyap2025accuracy, zanardi_lindbladian_2016} and first construct a local Lindbladian $\mathcal{L}$ which has the target state $\ket{\psi}$ as its unique fixed point i.e., $\ket{\psi}$ is the only state that satisfies $\mathcal{L}(\ket{\psi}\!\bra{\psi})=0$.  We refer to $\mathcal{L}$ as the \emph{Parent Lindbladian} of the PEPS $\ket{\psi}$. Given the tensors $\{A_i\}_{i \in \mathcal{V}}$, we construct $\mathcal{L}$ as follows:
\begin{subequations}\label{eq:lindbladian_peps}
    \begin{align}
        \mathcal{L}&=\sum\limits_{e\in \text{E}}\mathcal{L}_{e} + \sum_{i\in \mathcal{V}}\mathcal{L}_i.
    \end{align}
    Here 
    \begin{enumerate}
        \item[(i)]Defining $S_i = \text{Im}(A_i)$ as the image of the map $A_i$ defined in Eq.~\eqref{eq:tensorA}, $\mathcal{L}_i$ is a Lindbladian acting on site $i$ given by
        \[
        \mathcal{L}_i = \frac{1}{\text{dim}(S_i)} P_{S_i} \text{Tr}_i(P_{S_i}^\perp (\cdot)) - \frac{1}{2}\{P_{S_i}^\perp, \cdot\}. 
        \]
        \item[(ii)] For every edge $e = (i, j) \in \text{E}$,
        \[
        \mathcal{L}_e = \sum_{\alpha = 1}^{D^2 - 1}\bigg(L_{\alpha, e}(\cdot) L_{\alpha, e} - \frac{1}{2}\{L_{\alpha, e}^\dagger L_{\alpha, e}, \cdot \}\bigg),
        \]
        is a two-site Lindbladian acting on sites $i, j$ with
            \begin{align}
        L_{\alpha,(i,j)}&=\vcenter{\hbox{\includegraphics[scale=0.5]{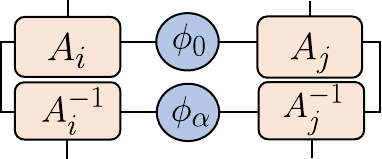}}}.
    \end{align}
    where $\{\ket{\phi_\alpha}\}_{\alpha = 1}^{D^2 - 1}$ being a set of orthonormal states also orthogonal to $\ket{\phi_0}$ used in constructing $\ket{\psi}$. 
    \end{enumerate}
\end{subequations}
Physically, the single-site Lindbladian $\mathcal{L}_i$ ensures that the single-site reduced state is in the correct subspace of the physical Hilbert space. Furthermore, when the jump operators of  $\mathcal{L}_{e}$ for $e = (i, j)$ act on a state that is not the target PEPS $\ket{\psi}$, they first undo the maps $A_i$ and $A_j$ on each edge $(i, j)$, then project the state shared by the edge $(i,j)$ onto the target state $|\phi_0\rangle_{(i,j)}$ and finally re-apply the tensors. Our first result provides a characterization of the spectrum of the parent Lindbladian.
\begin{theorem}\label{theorem:theorem_fp}
    The PEPS $\ket{\psi}$ $[$Eq.~\eqref{eq:PEPS-def}$]$ is unique fixed-point of Lindbladian $\mathcal{L}$ defined in Eq.~\eqref{eq:lindbladian_peps}. Furthermore, $\mathcal{L}$ has no other purely imaginary eigenvalues.
\end{theorem}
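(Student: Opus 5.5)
Our plan is to show first that $\ket{\psi}$ is annihilated by every term, then that any state annihilated by $\mathcal{L}$ must be $\ket{\psi}\!\bra{\psi}$, and finally to rule out other purely imaginary eigenvalues by a Lyapunov (dissipation) argument.

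\emph{Fixed point.} Since $\ket{\psi}$ lies in $\bigotimes_i S_i$, we have $P_{S_i}^\perp\ket{\psi}=0$, so both terms of $\mathcal{L}_i(\ket{\psi}\!\bra{\psi})$ vanish. For an edge $e=(i,j)$, we interpret the jump operator as $L_{\alpha,e}=(A_i\otimes A_j)\,\ket{\phi_\alpha}\!\bra{\phi_0}_e\,(A_i^{-1}\otimes A_j^{-1})$, with the other virtual legs passed through. Applying $A_i^{-1}\otimes A_j^{-1}$ to $\ket{\psi}$ recovers the virtual configuration, in which edge $e$ carries $\ket{\phi_0}$.

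\emph{Correction to the reading of the picture.} The picture must therefore contain $\ket{\phi_0}\!\bra{\phi_\alpha}$, i.e.\ $L_{\alpha,e}=(A_i\otimes A_j)\ket{\phi_0}\!\bra{\phi_\alpha}_e(A_i^{-1}\otimes A_j^{-1})$. This operator maps the component orthogonal to $\phi_0$ back to $\phi_0$, and it annihilates $\ket{\psi}$ because $\bra{\phi_\alpha}\phi_0\rangle=0$. We also need $A_i A_i^{-1}$ to act as the identity on $S_i$, which holds if we choose $A_i^{-1}$ vanishing on $S_i^\perp$. Hence $L_{\alpha,e}\ket{\psi}=0$ and $\mathcal{L}(\ket{\psi}\!\bra{\psi})=0$.

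\emph{Uniqueness.} We introduce the Hamiltonian
\[
H=\sum_i P^\perp_{S_i}+\sum_{e,\alpha}L_{\alpha,e}^\dagger L_{\alpha,e}\ \ge 0 .
\]
For any state $\rho$, the trace identity gives $\frac{d}{dt}\textnormal{Tr}(P\rho)$ with $P$ the projector onto $\bigotimes_i S_i$. In fact $\mathcal{L}_i$ moves all weight from $S_i^\perp$ into $S_i$, while edge terms preserve $\bigotimes S_i$. This shows that any fixed point is supported on $\mathcal{S}=\bigotimes_i S_i$.

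On $\mathcal{S}$, the map $\mathcal{A}=\bigotimes_i A_i$ is an isomorphism from the virtual space $\bigotimes_e(\mathbb{C}^D\otimes\mathbb{C}^D)$, and the edge dynamics is conjugate (up to the non-unitary similarity $\mathcal{A}$) to a process that on each edge resets the pair to $\phi_0$. We plan to use the standard criterion: if $\rho\ge0$ is stationary and supported on $\mathcal{S}$, then $\textnormal{Tr}(\rho\, L^\dagger L)$ must vanish. This follows from the dissipation/quantum regression argument: for a fixed point, $\frac{d}{dt}\textnormal{Tr}(Q\rho)$ with $Q$ the projector onto $\ker(\sum L^\dagger L)\cap\mathcal{S}$. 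Hence every vector in the support satisfies $\bra{\phi_\alpha}_e\mathcal{A}^{-1}v=0$ for all $e,\alpha$, i.e.\ $\mathcal{A}^{-1}v\propto\bigotimes_e\ket{\phi_0}$, so $v\propto\ket{\psi}$.

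\emph{Main obstacle.} The key difficulty is showing that the weight on the complement of $\ket{\psi}$ strictly decreases. The jumps $L_{\alpha,e}$ map into the span of vectors with $\phi_0$ on edge $e$, but because $\mathcal{A}$ is not unitary they can re-populate orthogonal components elsewhere. I would handle this by working in the virtual picture: the number of edges not in $\phi_0$ (measured via $\mathcal{A}^{-1}$) is non-increasing under jumps. I would then use an induction or a Lyapunov functional $\textnormal{Tr}(\rho\,\mathcal{A}^{-1\dagger}N_{\rm bad}\mathcal{A}^{-1})$, suitably weighted. I am unsure this functional is monotone, and this is the step requiring the most care.

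\emph{Purely imaginary eigenvalues.} Suppose $\mathcal{L}(X)=i\omega X$. Then $e^{t\mathcal{L}}X=e^{i\omega t}X$ does not decay. We will show that the semigroup converges, $e^{t\mathcal{L}}(\rho)\to\ket{\psi}\!\bra{\psi}\textnormal{Tr}\rho$ for every $\rho$ (hence for every operator by linearity). Convergence follows from the finite-dimensional structure theorem for Lindbladians (Frigerio/Spohn): the peripheral spectrum is spanned by the fixed points and rotating points supported on the minimal invariant subspaces. Since the unique fixed state is pure with full "attractivity", as established above (no invariant subspace orthogonal to $\ket{\psi}$), the peripheral spectrum reduces to $\{0\}$ with multiplicity one.
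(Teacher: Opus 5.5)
The fixed-point check is correct, and so is confining stationary states to $\bigotimes_i S_i$. For the confinement, track $\textnormal{Tr}(P_{S_i}^\perp\rho)$ site by site; its derivative is exactly $-\textnormal{Tr}(P_{S_i}^\perp\rho)$. The derivative of $\textnormal{Tr}(P\rho)$ for the global product projector only sees weight with exactly one site outside.

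\textbf{The gap is uniqueness inside $\bigotimes_i S_i$.} Your ``standard criterion'' (a stationary $\rho$ must satisfy $\textnormal{Tr}(\rho L^\dagger L)=0$) is not a general fact. It says every stationary state is dark for all jumps. That fails in general: a qubit with jumps $\sigma^+,\sigma^-$ has stationary state $I/2$. Here it is essentially the claim to be proved.

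Your justification does not deliver it either. Since $\ker(\sum L^\dagger L)\cap\mathcal{S}=\textnormal{span}(\ket{\psi})$, your $Q$ is just the projector onto $\ket{\psi}$. Because $L_{\alpha,e}\ket{\psi}=0$, one gets $\frac{d}{dt}\bra{\psi}\rho\ket{\psi}=\sum_{e,\alpha}\bra{\psi}L_{\alpha,e}\rho L_{\alpha,e}^\dagger\ket{\psi}$. Stationarity then yields only $\bra{\psi}L_{\alpha,e}\ket{\phi}=0$ for $\ket{\phi}$ in the support of $\rho$. That is finitely many linear constraints, not $L_{\alpha,e}\rho=0$.

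The Lyapunov fallback is unproven and doubtful. Only the jump operators are similar to edge resets; the Lindbladian is not, because the anticommutator terms carry $\mathcal{A}^\dagger\mathcal{A}$. So a bad-edge count need not be monotone. The paper's related energy argument for Theorem~2 controls the cross-edge terms only for small $\delta$. Finally, the ``no invariant subspace orthogonal to $\ket{\psi}$'' that your last paragraph treats as established was never shown.

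\textbf{The missing idea is the paper's Lemma~1, which is algebraic and needs no monotone functional.} Let $P_\infty$ project onto $\mathcal{S}_\infty=\textnormal{supp}(\rho_\infty)$.
\begin{itemize}
\item Sandwich $\sum_e\mathcal{L}_e(\rho_\infty)=0$ between $P_\infty^\perp$. This kills the anticommutator terms and leaves $\sum_{e,\alpha}P_\infty^\perp L_{\alpha,e}\rho_\infty L_{\alpha,e}^\dagger P_\infty^\perp=0$. Hence $P_\infty^\perp L_{\alpha,e}P_\infty=0$: the support is invariant under every jump.
\item Combine this with the relation above. You get $\bra{\psi}L_{\alpha_1,e_1}\cdots L_{\alpha_m,e_m}\ket{\phi}=0$ for all $\ket{\phi}\in\mathcal{S}_\infty$ and all products.
\item Write $\ket{\phi}=\mathcal{A}\sum_{\vec\alpha}\phi_{\vec\alpha}\ket{\vec\alpha}$ in the basis $\{\ket{\phi_0},\ket{\phi_\alpha}\}$ of each edge. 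In this frame $L_{\alpha,e}$ acts as $\ket{\phi_0}\!\bra{\phi_\alpha}_e$, so jumps on distinct edges commute.
\item Suppose $\ket{\phi}$ had a nonzero non-vacuum coefficient, and choose such a $\vec\alpha$ of maximal support. Then $\prod_{e:\alpha_e\neq 0}L_{\alpha_e,e}\ket{\phi}=\phi_{\vec\alpha}\ket{\psi}$. Pairing with $\bra{\psi}$ contradicts the orthogonality, so $\ket{\phi}\propto\ket{\psi}$.
\end{itemize}

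Once uniqueness is in hand, your appeal to the structure theory of the peripheral spectrum is a legitimate, if less self-contained, alternative. The paper instead uses Lemma~2: a channel with a unique pure fixed point has no other unimodular eigenvalues. It applies this to $e^{t\mathcal{L}}$.
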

\noindent Thus, starting from any initial state, we are guaranteed to eventually evolve to the target PEPS $\ket{\psi}$ under the Lindbladian $\mathcal{L}$. Theorem \ref{theorem:theorem_fp} indicates that it is not possible for the parent Lindbladian to have multiple steady-states, or have oscillations (due to purely imaginary eigenvalues) persisting at long-times. From a practical standpoint, this theorem rigorously proves that the proposed dissipative protocol is robust to initialization errors unlike unitary protocols which tend to be sensitive to the initial state. The proof of this theorem utilizes the fact that the jump operators $L_{\alpha, e}$ defined in Eq.~\eqref{eq:lindbladian_peps} are commuting and nilpotent, and is provided in the supplementary material.

Next, we analyze how the mixing time of $\mathcal{L}$ i.e., long does the parent Lindbladian take to prepare $\ket{\psi}$. While it is generally hard to provide mixing time-bounds on many-body Lindbladians, we show that when the tensors $A_i$ are close to isometric, the parent Lindbladian is rapid mixing~\cite{lucia2015rapid_mixing,cubitt2015stability}. We introduce a parameter $\delta$ controlling how non-isometric the tensors $\{A_i\}_{i \in \mathcal{V}}$ are: More precisely,
\begin{equation}\label{eq:isometry}
    \delta = \min_{i \in \mathcal{V}} \norm{A_i^\dagger A_i - I}.
\end{equation} 
For small $\delta$, the  PEPS is known to have exponentially decaying correlation functions~\cite{schuch2011}, which is not true for injective PEPS in general~\cite{verstraete2006criticality, malz2025complexity}. 
Our next result shows that when $\delta \leq \delta_\text{th}$, with $\delta_\text{th}$ being controlled entirely by the underlying graph $\mathcal{G}$, $\mathcal{L}$ rapidly reaches the target PEPS.

\begin{theorem}
    \label{theorem:peps}
    There exists a constant $\delta_\textnormal{th}$ dependent only on the maximum degree $\mathfrak{d}_0$ of the graph $\mathcal{G}$ such that if $\delta \leq \delta_\textnormal{th}$, then\footnote{The $1\to 1$ norm of a superoperator $\mathcal{E}$ is the induced trace norm, defined as $\norm{\mathcal{E}}_{1\to 1} = \sup_{\norm{X}_1 = 1} \norm{\mathcal{E}(X)}_1$, where $\norm{\cdot}_1$ denotes the Schatten 1-norm (trace norm) for operators.}
    $$\norm{e^{\mathcal{L}t} - \textnormal{Tr}(\cdot) \ket{\psi}\!\bra{\psi}}_{1\to 1}\leq N\mathfrak{d}_0 \exp(-\Theta(t)).$$ 
     Furthermore, the threshold $\delta_\textnormal{th} = \mathcal{O}(1/\mathfrak{d}_0)$ as $\mathfrak{d}_0 \to \infty$.
\end{theorem}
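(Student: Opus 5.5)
The idea is to pull the Lindbladian back to the virtual (entangled-pair) level and exploit the structure used for Theorem~\ref{theorem:theorem_fp}. When every $A_i$ is an isometry ($\delta=0$), the map $V=\bigotimes_i A_i$ is an isometry onto $\bigotimes_i S_i$. The jump operators $L_{\alpha,e}$ then act as $V(\ket{\phi_0}\!\bra{\phi_\alpha}_e\otimes I)V^\dagger$. The edge dynamics on the virtual space is therefore a product of independent single-edge reset channels $e^{\mathcal{L}_e t}$, each of which drives the pair on $e$ to $\ket{\phi_0}$ at rate $1$. A single-edge reset contracts to its fixed point as $\norm{e^{\mathcal{L}_et}-\textnormal{Tr}(\cdot)\ket{\phi_0}\!\bra{\phi_0}}_{1\to1}\le 2e^{-t/2}$ or similar.

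The plan has the following steps:
\begin{enumerate}
\item Handle the single-site terms $\mathcal{L}_i$. They map any state into the support $\bigotimes_i S_i$ exponentially fast, since the weight outside $S_i$ decays at rate $1$ per site. They also leave operators supported on $\bigotimes_i S_i$ invariant. A union bound over sites gives an error $N e^{-\Theta(t)}$, so after this step we may restrict to states supported on $\bigotimes_i S_i$.
\item Use a telescoping (hybrid) argument over edges. This gives the bound $\norm{e^{\mathcal{L}t}-\textnormal{Tr}(\cdot)\ket{\psi}\!\bra{\psi}}_{1\to1}\le\sum_{e}(\ldots)$, which explains the prefactor $|\text{E}|\le N\mathfrak{d}_0/2$ in the statement.
\end{enumerate}

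For $\delta>0$, write $A_i = U_i (I+X_i)^{1/2}$ via the polar decomposition, with $U_i$ an isometry and $\norm{X_i}\le\delta$. The jump operator $L_{\alpha,(i,j)}$ uses $A_i^{-1}$ on the sites $i,j$ and re-applies $A_i,A_j$. Conjugating by $W=\bigotimes_i (I+X_i)^{1/2}$ maps $\ket{\psi}$ to the isometric PEPS. It also turns each jump into $\tilde L_{\alpha,e}=\ket{\phi_0}\!\bra{\phi_\alpha}_e$ sandwiched with the site-local similarity factors $(I+X_i)^{\pm1/2}$. These factors act only on the virtual qudits at $i$ and $j$. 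Consequently $\tilde L_{\alpha,e}=\ket{\phi_0}\!\bra{\phi_\alpha}_e\otimes I + K_{\alpha,e}$, where $K_{\alpha,e}$ is supported on the qudits at $i$ and $j$ and satisfies $\norm{K_{\alpha,e}}=O(\delta)$. The conjugated generator is thus $\tilde{\mathcal{L}}=\mathcal{L}_0+\mathcal{K}$. Here $\mathcal{L}_0$ is the sum of commuting single-edge resets, and $\mathcal{K}=\sum_e\mathcal{K}_e$ is a sum of perturbations. Each $\mathcal{K}_e$ has norm $O(\delta)$ and is supported on the at most $2\mathfrak{d}_0$ edges touching $i$ or $j$. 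The similarity transform costs only constant factors in the $1\to1$ norm per site. To avoid an $e^{O(\delta N)}$ blow-up, the argument will use the fact that the final comparison is local, rather than applying a global bound.

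The main obstacle is showing that the perturbed generator $\tilde{\mathcal{L}}$ still mixes locally at rate $\Theta(1)$. My approach is a Dyson expansion of $e^{\tilde{\mathcal{L}}t}$ around $e^{\mathcal{L}_0 t}$, combined with the observation that $\mathcal{K}$ annihilates the fixed point, because $\ket{\psi}$ is still a fixed point by Theorem~\ref{theorem:theorem_fp}. The expansion is organized as a cluster/polymer expansion over connected sets of edges in the line graph. In that graph every vertex has at most $2\mathfrak{d}_0$ neighbours, so a connected cluster of $k$ edges containing a given edge can be chosen in at most $(e\cdot 2\mathfrak{d}_0)^k$ ways. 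Each perturbation insertion costs $O(\delta)$, and between insertions the unperturbed reset provides $e^{-t/2}$ decay. The expansion therefore converges whenever $C\mathfrak{d}_0\delta<1/2$, which gives $\delta_{\rm th}=O(1/\mathfrak{d}_0)$. An alternative is to show that $\mathcal{L}$ has a local gap via detectability or a martingale argument, and to invoke the local-gap-implies-rapid-mixing results of~\cite{lucia2015rapid_mixing,cubitt2015stability}. Either way, the final bound is $\norm{e^{\mathcal{L}t}-\textnormal{Tr}(\cdot)\ket{\psi}\!\bra{\psi}}_{1\to1}\le N\mathfrak{d}_0\,e^{-\Theta(t)}$, with the constant in $\Theta$ degrading as $\delta\to\delta_{\rm th}$.
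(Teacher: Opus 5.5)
Your treatment of the isometric case and of the single-site leakage is fine. The content of the theorem, however, is the case $\delta>0$, and there the proposal has no working argument.

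To begin with, the ``conjugation by $W$'' conflates two different constructions.
If you only pull back by the isometry $V=\bigotimes_i U_i$, you get a genuine Lindbladian with jumps $WJ_\alpha W^{-1}$, where $J_\alpha=|\phi_0\rangle\langle\phi_\alpha|_e$. Its fixed point, however, is $W|\Phi_0\rangle$, with $|\Phi_0\rangle=\bigotimes_e|\phi_0\rangle_e$. So $\mathcal{K}=\tilde{\mathcal{L}}-\mathcal{L}_0$ in general annihilates neither $|\Phi_0\rangle\langle\Phi_0|$ nor the true fixed point: the perturbation moves the fixed point.
If instead you genuinely similarity-transform the generator, $\tilde{\mathcal{L}}=\mathcal{W}^{-1}\circ\mathcal{L}\circ\mathcal{W}$ with $\mathcal{W}(\rho)=W\rho W^\dagger$, the sandwich terms become exactly $J_\alpha\rho J_\alpha^\dagger$. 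Only the damping changes, from $Q_e=\sum_\alpha J_\alpha^\dagger J_\alpha$ to $G_e=M^{-1}\sum_\alpha J_\alpha^\dagger M J_\alpha$, where $M=\bigotimes_i A_i^\dagger A_i$. Now $\mathcal{K}$ does kill $|\Phi_0\rangle\langle\Phi_0|$. But $e^{\tilde{\mathcal{L}}t}$ is neither trace preserving nor contractive, and transferring a bound back to $\mathcal{L}$ costs $\|W\|^2\|W^{-1}\|^2\le\bigl((1+\delta)/(1-\delta)\bigr)^N$. That yields a mixing time linear in $N$.
Your remark that ``the final comparison is local'' does not fix this. The quantity $\norm{e^{\mathcal{L}t}-\mathrm{Tr}(\cdot)|\psi\rangle\langle\psi|}_{1\to 1}$ is global, and you give no mechanism that turns local control into a global trace-norm bound.

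Second, even in a consistent framework, the claim that ``each insertion costs $O(\delta)$ and between insertions the reset provides $e^{-t/2}$ decay'' is exactly the hard step, and it is asserted rather than shown.
An insertion $\mathcal{K}_e$ in the Dyson series is followed by decay only if its output lies in the non-stationary sector of $e^{\mathcal{L}_0 s}$ around its support. Here that fails. In the similarity picture, $\mathrm{Tr}\,\mathcal{K}_e(\rho)=-\mathrm{Re}\,\mathrm{Tr}[(G_e-Q_e)\rho]$ is nonzero in general; in the pull-back picture, $\mathcal{K}$ shifts the fixed point. So the naive control is $e^{t\|\mathcal{K}\|}$ with $\|\mathcal{K}\|\sim|\text{E}|\delta$, and the cluster counting does not by itself give convergence. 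Stability of a uniform decay rate of a non-reversible Lindbladian under an extensive perturbation is not an off-the-shelf fact.
Your fallback does not work either. The rapid-mixing papers you cite derive stability \emph{from} rapid mixing, not the reverse. A uniform gap alone also does not give $O(\log N)$ trace-norm mixing when the fixed point is pure: the standard prefactor $\|\sigma^{-1}\|^{1/2}$ is infinite, and the generator is not reversible anyway.

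The missing idea is the paper's Lyapunov argument with the parent Hamiltonian. It rests on two local inequalities:
\[
\mathrm{Tr}[\mathcal{L}_e^\dagger(h_e)\rho]\le-(1-O(\delta))\langle h_e\rangle,
\qquad
\mathrm{Tr}[\mathcal{L}_{e'}^\dagger(h_e)\rho]\le O(\delta)(\langle h_e\rangle+\langle h_{e'}\rangle)\ \text{for adjacent } e'.
\]
These follow from $\|H_{\mathrm{eff},e}-\tfrac12 h_e\|\le 5\delta$, $\|F_{a,i,j}\|\le 48\delta$, and the alternating-projections bound $\cos\theta\le 138\delta$ between $S_{a,i}$ and $S_{i,j}$. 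Summed over edges, they give $\frac{d}{dt}\langle H\rangle\le-\tfrac12\langle H\rangle+\text{(leakage)}$ once $\delta=O(1/\mathfrak{d}_0)$. The uniform gap $\Delta_H$, obtained by the martingale method, then turns $\langle H'(t)\rangle\le O(|\text{E}|)e^{-\kappa t}$ into $1-\langle\psi|\rho(t)|\psi\rangle\le\langle H'(t)\rangle/\Delta_H$.
This converts local control into a global bound without any similarity transform or spectral perturbation theory. It is also the source of both the $N\mathfrak{d}_0$ prefactor and the $O(1/\mathfrak{d}_0)$ threshold.
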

\noindent Thus, to prepare the state $\ket{\psi}$ to a trace-norm error $\varepsilon$, would require time $O(\log(N \mathfrak{d}_0/\varepsilon))$ as long as $\delta \leq \delta_\text{th}$. 

For injective matrix product states (MPS), which correspond to injective PEPS with $\mathcal{G}$ being chosen as a 1D lattice, the constraint $\delta \leq \delta_\text{th}$ can always be satisfied by coarse-graining or blocking the MPS by a system-size independent block-size (i.e., treating multiple adjacent sites as a single site) \cite{verstraete2005rg}---we detail this in the supplementary material. Thus, we obtain the following result for preparing injective MPSs.
\begin{theorem}
For every injective matrix product state (MPS), there is a geometrically local 1D Lindbladian which has it as its unique steady state. Furthermore, this Lindbladian prepares the injective MPS to trace-norm error $\varepsilon$ in time  $O(\log(N/\varepsilon))$.   
\label{theorem:mpslogn}
\end{theorem}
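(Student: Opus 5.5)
The plan is to reduce Theorem~\ref{theorem:mpslogn} to Theorem~\ref{theorem:peps} on the 1D lattice, whose maximum degree is $\mathfrak{d}_0=2$. The threshold $\delta_\textnormal{th}$ is then a fixed constant. It therefore suffices to rewrite an arbitrary injective MPS, after blocking and a gauge choice, as a PEPS in the form of Eq.~\eqref{eq:PEPS-def} on a chain of $N/L$ supersites with $L=O(1)$, such that each blocked tensor satisfies $\norm{B^\dagger B - I}\le \delta_\textnormal{th}$.

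\textbf{Step 1: canonical form and blocking.} Write the MPS in canonical form with matrices $A^s$. Injectivity gives a primitive transfer channel $\mathcal{E}(X)=\sum_s A^s X A^{s\dagger}$ with unique fixed point $\mathbb{1}$ and left fixed point $\rho>0$. Primitivity gives $\norm{\mathcal{E}^L - |\mathbb{1})(\rho|}\le C\lambda^L$ with $\lambda<1$. Block $L$ consecutive sites into a supersite. The map $\Gamma_L:\mathbb{C}^D\otimes\mathbb{C}^D\to(\mathbb{C}^d)^{\otimes L}$, $X\mapsto\sum_{s}\mathrm{Tr}(A^{s_1}\cdots A^{s_L}X)\ket{s}$, is injective for $L\ge L_0$.

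\textbf{Step 2: matching the PEPS form.} Put on each bond a state $\ket{\phi_0}\propto(\rho^{1/2}\otimes\mathbb{1})\ket{\Omega}$ (or a similar gauge-dependent choice). Each supersite then carries two virtual qudits, one from each neighbouring bond, and the supersite tensor $B$ is $\Gamma_L$ composed with suitable invertible virtual gauge factors $\rho^{\pm1/4}$, rescaled by a scalar. A direct computation gives
\[
B^\dagger B \;\propto\; (\text{gauge})\,\hat{\mathcal{E}}^L\,(\text{gauge}),
\]
where $\hat{\mathcal E}$ is the transfer matrix viewed as an operator on $\mathbb{C}^D\otimes\mathbb{C}^D$. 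Since $\hat{\mathcal{E}}^L\to |\mathbb{1})(\rho|$, which is rank one, this expression does not converge to the identity. I therefore need to re-route the virtual legs so that $B^\dagger B$ involves the left–right "transfer" on the pair of open boundaries. Concretely, $B^\dagger B$ acting on the two open indices becomes $\rho^{1/2}\otimes\rho^{1/2}$ up to $O(\lambda^L)$ corrections, which follows from the asymptotic factorization of $\mathcal{E}^L$. The gauges $\rho^{-1/4}$ on each virtual leg, together with the entangled pair carrying $\rho^{1/2}$, are chosen exactly to cancel this, giving $\norm{B^\dagger B - I}\le C'\lambda^L$.

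\textbf{Step 3: conclusion.} Choose $L$ so that $C'\lambda^L\le\delta_\textnormal{th}(2)$. Theorem~\ref{theorem:theorem_fp} and Theorem~\ref{theorem:peps} then apply to the blocked chain with $N/L$ sites. They yield a Lindbladian acting on at most $2L=O(1)$ consecutive original sites, hence geometrically local. Its unique steady state is the MPS, with error $(N/L)\cdot 2\, e^{-\Theta(t)}$, so time $O(\log(N/\varepsilon))$ suffices. If $N$ is not divisible by $L$, one supersite takes a block size between $L$ and $2L$; the same estimate applies.

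The main obstacle is Step~2: finding the correct gauge and bond state so that the blocked tensor is nearly isometric in the sense of Eq.~\eqref{eq:isometry}, rather than merely injective. In particular, the normalization of $\ket{\phi_0}$ and the scalar rescaling of $B$ must be consistent with the PEPS definition. I would first verify the limiting identity using the fixed-point MPS $A^{s}\to$ (the $L\to\infty$ renormalization fixed point of Ref.~\cite{verstraete2005rg}), where $B$ is exactly an isometry. I would then control the deviation by $\norm{\mathcal{E}^L-|\mathbb{1})(\rho|}$.
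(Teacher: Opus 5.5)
Your proposal is correct and has the same architecture as the paper's proof. You block $O(1)$ sites and gauge the block's virtual legs by inverse square roots of the boundary fixed points. The paper uses $\sqrt{\rho_n^{-1}}$ and $\sqrt{\sigma_{n+l}^{-1}}$, built from the left and right environments and normalized so that contracting them gives $\langle\psi|\psi\rangle=1$; this plays the role of your $\mathrm{Tr}\rho=1$ bond state. You then show that $\|B^\dagger B-I\|$ decays exponentially in the block size and invoke Theorems~\ref{theorem:theorem_fp} and~\ref{theorem:peps} with $\mathfrak{d}_0=2$.

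The difference is in the key estimate, Lemma~\ref{prop:mps} in the paper. The paper \emph{assumes} exponentially decaying correlations and uniformly bounded $\|A_i^{-1}\|$. It then dresses arbitrary virtual operators $S_X,S_Y$ on the two boundary legs with the neighbouring tensors at sites $n-1$ and $n+l$, turning matrix elements of $A^{l\dagger}A^l-I$ into a connected correlator. This allows site-dependent tensors, but correlation decay becomes a hypothesis. You instead use primitivity of a single transfer channel. That derives the estimate from injectivity alone with explicit constants, but it covers only translation-invariant MPS. For site-dependent tensors you would need a uniform contraction bound for products of distinct channels, which is essentially what the paper assumes.

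Step~2 can be cleaned up as follows. The ``re-routing'' you describe is exactly the realignment of $\mathbb{E}^L$. $\Gamma_L^\dagger\Gamma_L$ is $\mathbb{E}^L$ with indices reshuffled from (left pair)$\to$(right pair) to (ket legs)$\to$(bra legs). The realignment of the rank-one limit is full rank.

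In the right-canonical gauge of your Step~1 ($\sum_s A^sA^{s\dagger}=I$, $\sum_s A^{s\dagger}\rho A^s=\rho$), one finds $\langle \Gamma_L Y,\Gamma_L X\rangle\to\mathrm{Tr}(Y^\dagger\rho X)$. The correct gauge there is therefore $B=\Gamma_L(\rho^{-1/2}\,\cdot\,)$, with $\rho^{-1/2}$ on one leg only. Putting $\rho^{-1/4}$ on both legs instead gives the limit $X\mapsto\rho^{1/2}X\rho^{-1/2}$, which is not close to $I$. Your symmetric limit $\rho^{1/2}\otimes\rho^{1/2}$ holds only after passing to $\rho^{1/4}A^s\rho^{-1/4}$, and both prescriptions produce the same $B$.

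Realignment preserves the Hilbert--Schmidt norm, so $\|B^\dagger B-I\|\le D\,\|\rho^{-1}\|\,C\lambda^L$. With $\mathrm{Tr}\rho=1$, the bond state $(\rho^{1/2}\otimes I)\ket{\Omega}$ is normalized and reproduces the MPS exactly. No scalar rescaling of $B$ is needed, so the obstacle you flagged is resolved.
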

\noindent This result is an exponential improvement over the best known dissipative method for preparing these states \cite{verstraete2009quantum}. Furthermore, although unitary state-preparation methods for injective MPS also achieve a similar scaling as the proposed dissipative method \cite{malz2024}, the dissipative protocol has the benefit of being robust to initialization errors as well as, since it is rapid-mixing, to errors in the configured Lindbladian \cite{cubitt2015stability}.

In 2 or higher dimensions, the proposed Lindbladian only is guaranteed to be rapid-mixing for small enough $\delta$. Remarkably, for graphs of constant degree, when $\delta$ is small enough, the parent Lindbladian always prepares the state in $O(\log N)$ time, where $N$ is the system-size, \emph{irrespective of the graph}. This is in stark contrast to the best known unitary methods which are based on adiabatic algorithms: In particular, for geometrically local PEPS with small $\delta$, the best known unitary method would yield a preparation time scaling as $O(\log^{d + 1}N)$ in $d-$dimensions and as $O(N^3)$ on geometrically non-local graphs \cite{Ge_2016}. The dissipative method, therefore, yields a polynomial speedup over the best-known unitary method for highly injective PEPS defined on lattices in two and higher dimensions, and an exponential speedup over the best-known unitary method for highly injective PEPS on geometrically non-local, but constant degree, graphs.

Finally, we remark that the constraint on $\delta$ in theorem \ref{theorem:peps} is only a \emph{sufficient} condition for rapid-mixing of the parent Lindbladian: later on, we provide numerical examples showing that the parent Lindbladian mixes in $\log N$ time even when the analytical condition in theorem \ref{theorem:peps} is violated.


The proof of theorem \ref{theorem:peps} is detailed in the Supplementary Material: The key idea behind the proof is it analyze the time-evolution of the \emph{parent Hamiltonian} of the injective PEPS. The parent Hamiltonian $H$ of an injective PEPS is defined via \cite{cirac_mps}
\begin{subequations}
\label{eq:parent_ham}
    \begin{align}
        H&=\sum\limits_{(i,j)\in \text{E}}\ h_{i,j},\\
        h_{i,j}&= \vcenter{\hbox{\includegraphics[scale=0.5]{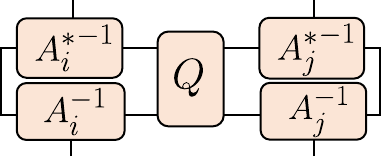}}},
    \end{align}
\end{subequations}
where $Q = I - \ket{\phi_0}\!\bra{\phi_0}$. For injective PEPS, the parent Hamiltonian is known to be frustration free with the PEPS as its unique ground state with ground-state energy $0$. Furthermore, for sufficiently small $\delta$, $H$ can be shown to have a non-zero gap $\Delta_H = \Omega(1)$ between its first excited state and ground state, that does not vanish with system-size \cite{schuch2011}. Due to this energy gap, for any state $\rho$, it follows that $\langle \psi | \rho \ket{\psi} \geq 1 - \text{Tr}(\rho H)/\Delta_H$. Therefore, if we can show that with evolution under the parent Lindbladian, the expectation value of $H$ decreases exponentially with time, it follows that the distance between the evolved state and $\ket{\psi}$ also decreases exponentially with $t$.

We then analyze the time-evolution of the parent Hamiltonian expected value: $\langle H(t) \rangle =\tr{}{H\rho(t)}= \sum_{e} \langle h_{e}(t) \rangle$. In the Heisenberg picture,
\[
\frac{d}{dt}\langle h_e(t)\rangle = \sum_{e' \in \text{E}} \text{Tr}[\mathcal{L}_{e'}^\dagger(h_{e}) \rho(t)],
\]
with $\rho(t) = e^{\mathcal{L}t}(\rho(0))$ for an arbitrary initial state $\rho(0)$. Now, $\mathcal{L}^\dagger_{e'}(h_e)$ is zero if $e'$ and $e$ do not share a vertex. Furthermore, in the supplementary material, we show that $\text{Tr}[\mathcal{L}_e^\dagger(h_e)\rho(t)] \leq -(1 - O(\delta))\langle h_e(t)\rangle$ i.e., for any edge $e$, the corresponding jump operators decrease the local energy corresponding to that edge if $\delta$ is sufficiently small. Finally, the jump operators corresponding to an edge $e' \neq e$ (that intersects with $e$) can possibly increase the local energy corresponding to $e$: Nevertheless, we show in the supplemental material that $\text{Tr}[\mathcal{L}_{e'}^\dagger(h_e)\rho(t)] \leq O(\delta)\langle h_{e'}(t)\rangle$. Therefore, for sufficiently small $\delta$, the total energy still decreases at a constant rate, which combined with together with the gap of the parent Hamiltonian implies theorem \ref{theorem:peps}. Finally, since the number of edges whose jumps can increase the energy corresponding to any one edge grows with the degree $\mathfrak{d}_0$ of the graph $\mathcal{G}$, we need to ensure that $\delta < O(1/\mathfrak{d}_0)$ for the energy to exponentially decrease in time.


\emph{Discrete-time dissipative protocol.---}
Depending on the PEPS, the Lindbladian constructed above could require complicated jump operators which may not be easily realizable in experimental platforms. As an alternative, we consider how to implement the dissipative protocol in discrete time, enabling execution on digital quantum hardware and compatibility with quantum error correction. A natural strategy is to approximate the Lindblad evolution with a sequence of quantum channels is to use a digital Lindbladian simulation algorithm~\cite{cleve_lindblad, kliesch2011dissipation,xiehang25}. However, due to the irreversible nature of dissipative dynamics, all currently available digital simulation methods for geometrically-local Lindblad evolution require circuit depths that scaling as $\text{poly}(N, t)$, where $N$ is the number of qudits and $t$ is the Lindblad-evolution time. Therefore, even though the Lindbladian constructed above mixes in $\sim \log(N)$ time, its digital simulation could only be guaranteed to mix in $\text{poly}(N)$ time.

Instead, we directly construct a discrete-time quantum channel that shares the same fixed point and mixing properties as the target Lindbladian. We define the local channels as follows:

\begin{subequations}\label{eq:discrete_channel_peps}
    \begin{enumerate}
        \item[(i)] Defining $S_i = \text{Im}(A_i)$ as the image of the map $A_i$ defined in Eq.~\eqref{eq:tensorA}, we construct a single-site channel $\mathcal{E}_i$ for every vertex $i \in \mathcal{V}$ that acts as the identity on $S_i$ and redistributes population from the orthogonal complement into $S_i$:
        \[
            \mathcal{E}_i(\cdot) = P_{S_i} (\cdot) P_{S_i} + \frac{P_{S_i}}{\text{dim}(S_i)} \text{Tr}(P_{S_i}^\perp (\cdot)).
        \]
        
        \item[(ii)] For every edge $e = (i,j) \in \text{E}$, we construct a two-site channel $\mathcal{E}_e$ with Kraus operators $K_{0,e}, K_{1,e} \dots K_{D^2-1,e}$ where:
        \begin{align}\label{eq:kraus_ops}
            K_{\alpha,e} &= \begin{cases} 
            \sqrt{\Gamma} L_{\alpha,e} & \text{for } \alpha \neq 0, \\ 
            (I - 2\Gamma H_{\text{eff},e})^{1/2} & \text{for } \alpha = 0, \end{cases}
        \end{align}
        with $L_{1,e}, L_{2,e} \dots L_{D^2-1,e}$ being the jump operators from Eq.~(4) and $H_{\text{eff},e} = \frac{1}{2}\sum_\alpha L_{\alpha,e}^\dagger L_{\alpha,e}$. We remark that $\Gamma$ is chosen to be small enough such that $\Gamma H_{\text{eff},e} \preceq I$ and thus $K_{0,e}$ is well-defined.
    \end{enumerate}
\end{subequations}
Now, to construct the full channel on all the sites, we first partition the edges $\text{E}$ into disjoint subsets $\text{E}_1, \text{E}_2 \dots \text{E}_k$ such that any two edges in the same subset do not share a vertex. In 1D for example, we have two such subsets; one consisting of odd pairs $\{ (i,i+1): i \textnormal{ is odd}\}$ and the other consisting of even pairs $\{ (i,i+1): i \textnormal{ is even}\}$. More generally, this can be done explicitly for any bounded-degree graph using a graph coloring algorithm~\cite{vizing_theorem} and it can be guaranteed that the number of such subsets $k \leq O(\frak{d}_0)$. For each subset $\text{E}_i$, we construct the channel $\mathcal{E}_{\text{E}_i}$ via
\[
\mathcal{E}_{\text{E}_i} = \prod_{e \in \text{E}_i} \mathcal{E}_e.
\]
We note that since no two edges in $\text{E}_i$ share the same vertex, the two-site channels $\mathcal{E}_e$ for $e \in \text{E}_i$ commute and thus can be applied simultaneously. Now, we construct the channel $\mathcal{E}$ by picking one of the channels $\mathcal{E}_{\text{E}_i}$ uniformly at random and applying them. Mathematically, this corresponds to a channel $\mathcal{E}$ given by
\begin{align}\label{eq:local_channel}
\mathcal{E} = \left (\frac{1}{k} \sum_{i = 1}^k \mathcal{E}_{\text{E}_i}\right ) \circ\prod_{i\in \mathcal{V}}\mathcal{E}_i 
\end{align}
\begin{figure*}[htpb]
    \includegraphics[width=1.0\linewidth]{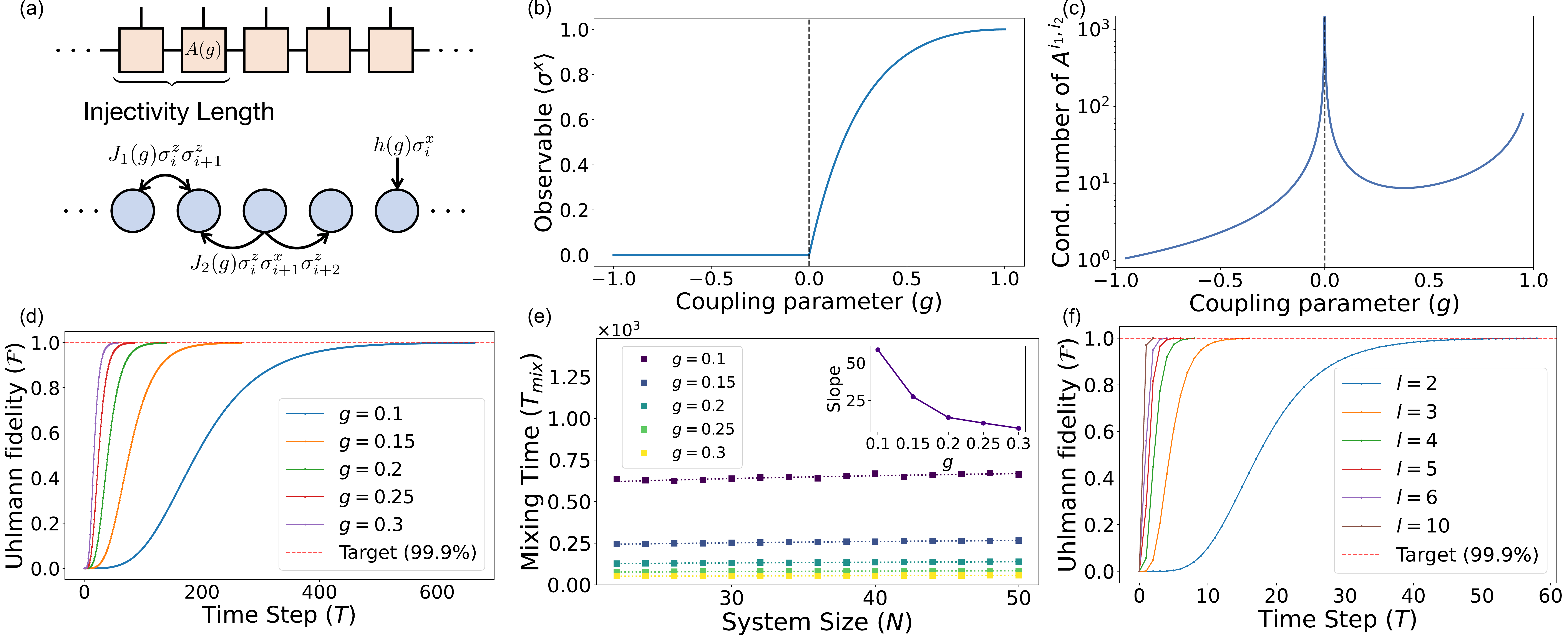}

    \caption{\justifying Numerical study of the dissipative protocol beyond the high-injectivity assumption. (a) A schematic illustrating the local 1D chain and the Parent Hamiltonian interactions. (b) Magnetization $\langle \sigma^x \rangle$ as a function of the coupling $g$. The sharp discontinuity at $g=g_c=0$ signals a quantum phase transition into a GHZ-like critical state. (c)  Condition number of the two-site blocked tensor $A^{i_1,i_2}$ as a function of the coupling $g$, diverging at the phase transition point $g=0$ where the tensor becomes non-injective. (d) Fidelity convergence for a fixed block size $l=2$ across different couplings $g$ ($N=50, l=2$). (e) Logarithmic scaling of $T_\textnormal{mix}$ with system size $N$ across different couplings $g$ ($l=2)$. The inset displays the extracted slope of $t_\textnormal{mix}$ vs $\log N$ across $g$. The dotted lines show the fits to $\log N$. (f) Fidelity convergence for a fixed coupling $g=0.3$ across different block sizes $l$ ($N=60$).}
    \label{fig:combined_numerics}
\end{figure*}

For this channel, we can prove a result analogous to theorem \ref{theorem:theorem_fp} which shows that its repeated application eventually always prepares the target PEPS.
\begin{theorem}\label{theorem:discrete_fp}
    The PEPS $\ket{\psi}$ $[$Eq.~\eqref{eq:PEPS-def}$]$ is the unique fixed-point of the quantum channel $\mathcal{E}$ in Eq.~\eqref{eq:local_channel}. Furthermore, $\mathcal{E}$ has no other eigenvalues with magnitude equal to 1.
\end{theorem}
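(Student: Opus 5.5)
The plan is to prove a uniform Doeblin-type minorization: there is an integer $m$ and a constant $c>0$ such that $\mathcal{E}^m(\sigma)\succeq c\,\ket{\psi}\!\bra{\psi}$ for every density matrix $\sigma$. Together with invariance of $\ket{\psi}\!\bra{\psi}$, this gives both claims at once. Indeed, it lets us write $\mathcal{E}^m = c\,\mathrm{Tr}(\cdot)\ket{\psi}\!\bra{\psi} + (1-c)\mathcal{F}$, where $\mathcal{F}$ is again a channel (it is CP because the minorization holds on the tensor extension as well, by the same argument applied to $\mathcal{E}^m\otimes\mathrm{id}$, or simply by the Kraus-path argument below, which is linear). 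Iterating yields $\norm{\mathcal{E}^{mn}-\mathrm{Tr}(\cdot)\ket{\psi}\!\bra{\psi}}_{1\to1}\le 2(1-c)^n$. Hence every eigenvector with $|\lambda|=1$ must equal a multiple of $\ket{\psi}\!\bra{\psi}$ (with $\lambda=1$), which is exactly uniqueness and the absence of other peripheral eigenvalues.

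\textbf{Step 1 (fixed point).} Since $\ket{\psi}\in\bigotimes_i S_i$, each $\mathcal{E}_i$ acts trivially on $\ket{\psi}\!\bra{\psi}$. The jump operators have the virtual form $L_{\alpha,e}=\mathbf{A}\,(\ket{\phi_0}\!\bra{\phi_\alpha}_e)\,\mathbf{A}^{-1}$ on $S=\bigotimes_i S_i$, where $\mathbf{A}=\bigotimes_i A_i$. Because the virtual state of $\ket{\psi}$ carries $\ket{\phi_0}$ on every edge, we get $L_{\alpha,e}\ket{\psi}=0$. Consequently $H_{\mathrm{eff},e}\ket{\psi}=0$ and $K_{0,e}\ket{\psi}=\ket{\psi}$, so every $\mathcal{E}_e$, and therefore $\mathcal{E}$, fixes $\ket{\psi}\!\bra{\psi}$.

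\textbf{Step 2 (one Kraus path reaching $\ket{\psi}$).} Take $m=k$. After the first application of $\prod_i\mathcal{E}_i$, any state is supported on $S$, and afterwards the $\mathcal{E}_i$ act as the identity on such states. With probability at least $k!/k^k$, the random choices over $k$ rounds visit every colour class $\text{E}_1,\dots,\text{E}_k$ exactly once, so each edge channel is applied once. For a pure state $\mathbf{A}\ket{v}\in S$, expand the virtual vector $\ket{v}$ in the product basis $\bigotimes_e\ket{\phi_{\alpha_e}}_e$ and choose a multi-index $(\alpha_e)$ with nonzero coefficient. Now follow the Kraus path that uses $\sqrt{\Gamma}L_{\alpha_e,e}$ on each edge with $\alpha_e\neq0$ and $K_{0,e}$ on edges with $\alpha_e=0$.

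Because the virtual jump operators act on distinct edges, they commute. Moreover $K_{0,e}$ acts as the identity on any vector whose virtual edge-$e$ component is $\ket{\phi_0}$, since $H_{\mathrm{eff},e}$ annihilates it. The composite Kraus operator therefore maps $\mathbf{A}\ket{v}$ to $\Gamma^{n/2}\langle \textstyle\bigotimes_e\phi_{\alpha_e}|v\rangle\ket{\psi}$, up to the $K_0$ actions. The point to verify carefully is that $K_{0,e}$ applied to the remaining excited components, which is not diagonal in the virtual basis when $A_i$ is non-isometric, does not cancel this amplitude. To handle this, I would order the path so that each edge's $K_{0,e}$ or $L_{\alpha_e,e}$ is applied to the projected vector. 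Equivalently, I would compute $\langle\psi|\mathcal{E}^k(\ket{w}\!\bra{w})|\psi\rangle$ as a sum of nonnegative terms and show one term is nonzero, using that $\langle\psi|K=\langle\psi|$ restricted suitably or tracking $\bra{\psi}$ backwards through the Kraus operators. This yields $\langle\psi|\mathcal{E}^k(\sigma)|\psi\rangle>0$ for every state $\sigma$.

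\textbf{Step 3 (uniformity).} The map $\sigma\mapsto\langle\psi|\mathcal{E}^k(\sigma)|\psi\rangle$ is continuous and strictly positive on the compact set of states, so it is bounded below by some $c>0$. Since $\ket{\psi}$ is a pure fixed point, the decomposition in the first paragraph follows.

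The main obstacle is Step 2: the non-unitarity of $\mathbf{A}$ and the non-diagonal action of $K_{0,e}$ on excited virtual components. I expect to resolve it by tracking $\bra{\psi}$ backward via the adjoint Kraus operators, which, in virtual coordinates, only see the $\ket{\phi_0}$ components.
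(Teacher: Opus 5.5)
Step 1 and your reduction from a uniform Doeblin bound to both claims are fine. The gap is Step 2, the heart of the argument: it is not established, and the mechanism you propose for it does not work.

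\textbf{The gap in Step 2.} You need that for every $\ket{v}\neq 0$ some $k$-round Kraus path has $\bra{\psi}K_{\mathrm{path}}\mathbf{A}\ket{v}\neq 0$. Equivalently, the vectors $K_{\mathrm{path}}^\dagger\ket{\psi}$ must span $\bigotimes_i S_i$. This is clear for isometric tensors, where $\mathbf{A}^{-1}K_{0,e}\mathbf{A}$ is diagonal in the basis $\{\ket{\phi_\alpha}_e\}$ and acts only on edge $e$. For non-isometric tensors the picture changes.
\begin{itemize}
\item The virtual-frame operator $\mathbf{A}^{-1}K_{0,e'}\mathbf{A}$ contains $A_i^\dagger A_i$ acting on \emph{all} virtual legs at the endpoints of $e'$.
\item Apply it to a component of $\ket{v}$ that is excited on $e'$ but not on a neighbouring edge $e$. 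It generally produces a component carrying $\ket{\phi_0}_{e'}$ and $\ket{\phi_{\alpha_e}}_e$, which the later $L_{\alpha_e,e}$ maps onto $\ket{\psi}$.
\item So components of $\ket{v}$ other than the one you selected feed into the amplitude and can cancel $\Gamma^{n/2}\langle\bigotimes_e\phi_{\alpha_e}|v\rangle$.
\item These components need not have larger support, so there is no evident triangular structure to exploit.
\end{itemize}
Tracking $\bra{\psi}$ backwards does not help. Although $\bra{\psi}K_{0,e}=\bra{\psi}$, one has $\bra{\psi}L_{\alpha,e}\mathbf{A}=\bra{\Phi_0}\mathbf{A}^\dagger\mathbf{A}\ket{\phi_0}_e\bra{\phi_\alpha}_e$ with $\ket{\Phi_0}=\bigotimes_e\ket{\phi_0}_e$. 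Since $\mathbf{A}^\dagger\mathbf{A}\neq I$, the remaining bra acquires excited components on the edges adjacent to $e$; already $\bra{\psi}\mathbf{A}=\bra{\Phi_0}\mathbf{A}^\dagger\mathbf{A}$. The theorem assumes nothing about $\delta$, so a perturbative argument around the isometric case cannot close this either.

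\textbf{A second, repairable error in Step 3.} The bound $\bra{\psi}\mathcal{E}^k(\sigma)\ket{\psi}\ge c$ does not imply $\mathcal{E}^k(\sigma)\succeq c\ket{\psi}\!\bra{\psi}$; a pure $\ket{\chi}\neq\ket{\psi}$ with nonzero overlap is a counterexample. What the overlap bound does give is $Y=\mathcal{E}^{k\dagger}(\ket{\psi}\!\bra{\psi})\succeq cI$. Together with $0\preceq Y\preceq I$ and $\bra{\psi}Y\ket{\psi}=1$, this forces $Y\ket{\psi}=\ket{\psi}$. Hence $\mathcal{E}^{k\dagger}(I-\ket{\psi}\!\bra{\psi})\preceq(1-c)(I-\ket{\psi}\!\bra{\psi})$, so the infidelity contracts geometrically, which suffices.

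\textbf{How the paper proceeds.} The paper needs no quantitative path amplitudes.
\begin{itemize}
\item It takes a fixed point $\rho_\infty$ and uses the single-site channels to confine its support $\mathcal{S}_\infty$ to $\bigotimes_i S_i$.
\item From the fixed-point equation it derives two facts: $\mathcal{S}_\infty$ is invariant under every edge-layer Kraus operator $J_{q,\vec{\alpha}}\propto\prod_{e\in\mathrm{E}_q}K_{\alpha_e,e}$, and $\bra{\psi}J_{q,\vec{\alpha}}\ket{\phi}=0$ for $\vec{\alpha}\neq 0$ and $\ket{\phi}\in\mathcal{S}_\infty$ (Lemma~\ref{lemma:unique_fp_channel}).
\item It then excludes the rest of the peripheral spectrum by the block-triangular argument of Lemma~\ref{lemma:no_imaginary}, not by a contraction estimate.
\end{itemize}
The coefficient-extraction step as written in Lemma~\ref{lemma:unique_fp_channel} treats $K_{0,e}$ as if it removed excited components, so it runs into the very subtlety you flagged.

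\textbf{A way to close the gap.} The following also rescues your strategy.
\begin{itemize}
\item For $2\Gamma H_{\mathrm{eff},e}\prec I$, the operator $J_{q,\vec{0}}$ is invertible. So a finite-dimensional subspace invariant under all $J_{q,\vec{\alpha}}$ is also invariant under $J_{q,\vec{0}}^{-1}J_{q,\vec{\alpha}}=\sqrt{\Gamma}L_{\alpha,e}$, for $\vec{\alpha}$ excited only on $e$.
\item This uses $K_{0,e}L_{\alpha,e}=L_{\alpha,e}$, which holds because the range of $L_{\alpha,e}$ lies in $\ker H_{\mathrm{eff},e}$. Likewise $\bra{\psi}J_{q,\vec{\alpha}}\propto\bra{\psi}L_{\alpha,e}$.
\item The $L_{\alpha,e}$ commute, any product containing two of them on the same edge vanishes, and their common kernel in $\bigotimes_i S_i$ is $\mathrm{span}(\ket{\psi})$.
\item Applying a maximal nonvanishing product to a vector of a nonzero invariant subspace therefore shows that the subspace contains $\ket{\psi}$.
\item With the orthogonality relation this gives $\mathcal{S}_\infty=\mathrm{span}(\ket{\psi})$.
\item Applied to the subspace reachable from $\mathrm{supp}\,\sigma$, it gives $\bra{\psi}\mathcal{E}^n(\sigma)\ket{\psi}>0$ for some $n\le\dim\bigotimes_i S_i$.
\item This overlap is nondecreasing under $\mathcal{E}$, by the same argument applied to $\mathcal{E}^\dagger(\ket{\psi}\!\bra{\psi})$. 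Compactness then yields your Doeblin bound, but with $m$ of the order of the Hilbert-space dimension rather than $m=k$.
\end{itemize}
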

\noindent Furthermore, analogous to theorem \ref{theorem:peps}, we can also rigorously characterize the mixing time of the channel $\mathcal{E}$ when $\delta$ [Eq.~\eqref{eq:isometry}] is sufficiently small i.e., smaller than a threshold $\delta_\text{th}$ which depends entirely on the degree of the graph $\mathcal{G}$.
\begin{theorem}
\label{theorem:discrete}
  There exists a constant $\delta_\textnormal{th}$ dependent only on the maximum degree $\mathfrak{d}_0$ of the graph $\mathcal{G}$ such that if $\delta \leq \delta_\textnormal{th}$, then choosing $\Gamma = \Theta(\delta)$ independent of the system-size $N$,
\[\norm{\mathcal{E}^T - \textnormal{Tr}(\cdot) \ket{\psi}\!\bra{\psi}}_{1 \to 1} \leq O(N \mathfrak{d}_0 e^{-\Theta(T)}).
\]
Furthermore, the threshold $\delta_\textnormal{th} = \mathcal{O}(1/\mathfrak{d}_0)$ as $\mathfrak{d}_0 \to \infty$.
\end{theorem}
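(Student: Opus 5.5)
The plan is to transfer the energy-decay argument sketched for Theorem~\ref{theorem:peps} to discrete time, using the parent Hamiltonian $H=\sum_e h_e$ [Eq.~\eqref{eq:parent_ham}] as a Lyapunov function. For $\delta$ small, $H$ has a gap $\Delta_H=\Omega(1)$ above its unique frustration-free ground state $\ket{\psi}$ \cite{schuch2011}. Hence
\[
1-\langle\psi|\rho|\psi\rangle\le \mathrm{Tr}(\rho H)/\Delta_H .
\]
It therefore suffices to show $\mathrm{Tr}(H\,\mathcal{E}(\rho))\le (1-c)\,\mathrm{Tr}(H\rho)+(\text{error})$ for a constant $c>0$, uniformly in $\rho$ and $N$.

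One subtlety is that $\mathrm{Tr}(H\rho)$ is only meaningful as an energy once the single-site supports are correct. So I will first use the single-site channels. After one application of $\prod_i\mathcal{E}_i$, every site is supported in $S_i$. The edge channels preserve this, because their Kraus operators act within $S_i\otimes S_j$ on that subspace: the $L_{\alpha,e}$ reapply $A_i\otimes A_j$, and $K_{0,e}$ is a function of $L^\dagger L$. Thus from the second step on we may work inside $\bigotimes_i S_i$, which $A=\bigotimes_i A_i$ maps onto bijectively. The initial state contributes at most $\mathrm{Tr}(H\rho_0)\le |\text{E}|\,\|h\|\le N\mathfrak{d}_0\|h\|$, which is the source of the $N\mathfrak{d}_0$ prefactor. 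Converting fidelity to trace norm via Fuchs--van de Graaf then gives the stated $1\to1$ bound after $T$ steps.

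The core estimate is the one-step energy change under a single edge channel $\mathcal{E}_e$, examined in the Heisenberg picture on $h_{e'}$. Expanding $\mathcal{E}_e^\dagger(X)=\sum_\alpha K_{\alpha,e}^\dagger XK_{\alpha,e}$ gives
\[
\mathcal{E}_e^\dagger = \mathrm{id}+\Gamma\mathcal{L}_e^\dagger+R_e,
\qquad \|R_e\|=O(\Gamma^2),
\]
where the remainder comes from expanding $(I-2\Gamma H_{\text{eff}})^{1/2}$. I then reuse the two local inequalities from the Lindbladian analysis: $\mathrm{Tr}[\mathcal{L}_e^\dagger(h_e)\rho]\le-(1-O(\delta))\langle h_e\rangle$ and $\mathrm{Tr}[\mathcal{L}_{e}^\dagger(h_{e'})\rho]\le O(\delta)\langle h_{e}\rangle$ for $e'$ sharing a vertex with $e$. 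The key point is that the $O(\Gamma^2)$ remainder must also be bounded by a multiple of $\langle h_e\rangle$ (respectively $\langle h_{e}\rangle$ for the neighbouring term), not merely by a constant. This holds because every term containing some $L_{\alpha,e}$ annihilates states with $h_e$-energy zero: indeed $L_{\alpha,e}=L_{\alpha,e}\,\tilde h_e$ for a suitable local projector $\tilde h_e$ comparable to $h_e$, with comparison constants $1\pm O(\delta)$ coming from $\|A_i^\dagger A_i-I\|\le\delta$. This yields $R_e^\dagger$-terms bounded by $O(\Gamma^2)\langle h_e\rangle$.

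Summing over the edges in one colour class $\text{E}_m$ is exact, since those channels act on disjoint supports and compose as a tensor product. Averaging over the $k=O(\mathfrak{d}_0)$ colour classes, each edge is acted on with probability $1/k$, which gives
\[
\mathrm{Tr}(H\mathcal{E}(\rho))\le\Bigl(1-\tfrac{\Gamma}{k}\bigl(1-O(\delta)-O(\mathfrak{d}_0\delta)-O(\Gamma)\bigr)\Bigr)\mathrm{Tr}(H\rho).
\]
Here the $\mathfrak{d}_0\delta$ term counts the neighbouring edges that can raise $h_{e'}$. Choosing $\Gamma=\Theta(\delta)$ (or any small constant) and requiring $\delta\le\delta_{\text{th}}=O(1/\mathfrak{d}_0)$ makes the bracket positive. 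The contraction factor is then a constant independent of $N$, and iterating gives $\mathrm{Tr}(H\rho_T)\le N\mathfrak{d}_0 e^{-\Theta(T)}$.

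I expect the main obstacle to be the control of the cross terms $\mathrm{Tr}[\mathcal{E}_e^\dagger(h_{e'})\rho]$ when $e$ and $e'$ share a vertex, including the second-order pieces. The non-isometry of $A_i$ means that reapplying $A_i\otimes A_j$ distorts the virtual state on the neighbouring bond. The bound must be in terms of $\langle h_e\rangle$ and not $\langle h_{e'}\rangle$ or a constant, otherwise the Lyapunov argument fails. I would handle this by writing $A_i=U_i(I+O(\delta))$ via polar decomposition and noting that for isometric tensors $\mathcal{E}_e^\dagger(h_{e'})=h_{e'}$ exactly, since the jumps then only touch bond $e$ and commute with $Q_{e'}$. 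Consequently every deviation carries a factor $\delta$ together with a factor $L_{\alpha,e}$ or $H_{\text{eff},e}$, which is then dominated by $\tilde h_e$ as above.
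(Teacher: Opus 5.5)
Your proposal follows the paper's proof in all essentials. The parent Hamiltonian is the Lyapunov function, you expand $\mathcal{E}_e^\dagger=\mathrm{id}+\Gamma\mathcal{L}_e^\dagger+R_e$, and you reuse the Lindbladian diagonal and off-diagonal lemmas at first order. The $O(\Gamma^2)$ remainders are controlled through their support on $S_e^\perp$, degree counting gives $\delta_{\rm th}=O(1/\mathfrak{d}_0)$, and the gap $\Delta_H$ converts energy into fidelity. Your leakage step is cleaner than the paper's $H'=H+\sum_iF_i$. After one layer $\prod_i\mathcal{E}_i$ the state lies exactly in $\bigotimes_iS_i$, and the edge channels preserve this subspace. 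There $H$ has kernel spanned by the PEPS and gap at least $\Delta_H$. You are also more explicit than the paper about the $1/k$ factor from random colour-class sampling and about converting fidelity into the $1\to1$ norm.

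Two steps are misstated, though both can be repaired with the paper's Case II/III arguments.

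(i) An operator that annihilates $S_e$ from one side only ($Z=ZP^\perp_{S_e}$) gives only $\abs{\mathrm{Tr}(Z\rho)}\le\norm{Z}\sqrt{\mathrm{Tr}(P^\perp_{S_e}\rho)}$. A square-root bound cannot close a linear recursion; linear bounds need perpendicular projectors on both sides of $\rho$. For the remainder term $\{h_{e'},Y_e\}$, the two-sided estimate $\smallnorm{P^\perp_{S_e}\rho P^\perp_{S_{e'}}}_1\le\sqrt{\mathrm{Tr}(P^\perp_{S_e}\rho)\,\mathrm{Tr}(P^\perp_{S_{e'}}\rho)}$ gives $O(\Gamma^2)(\langle h_e\rangle+\langle h_{e'}\rangle)$, not $O(\Gamma^2)\langle h_e\rangle$. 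Likewise, the paper's first-order cross lemma contains an $O(\delta)\langle h_{e'}\rangle$ piece. Both only add $O(\mathfrak{d}_0(\delta+\Gamma))$ to your bracket, so the conclusion survives. However, the polar-decomposition heuristic in your last paragraph ("a factor $\delta$ together with a factor $L_{\alpha,e}$") does not by itself control the first-order cross term. There you must rely on the paper's block decomposition. In it, $H_{\text{eff},(a,i)}h_{i,j}P_{S_{a,i}}=O_{a,i,j}P_{S_{a,i}}$ with $\norm{O_{a,i,j}}=O(\delta)$, and $O_{a,i,j}P_{S_{i,j}}=0$ supplies the second projector.

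(ii) Summing over a colour class is not exact in the Heisenberg picture. Suppose $h_{i,j}$ is touched at both ends by edges $(a,i),(j,k)\in\text{E}_m$, and set $\Delta_e=\mathcal{E}_e^\dagger-\mathrm{id}$. Then $\mathcal{E}^\dagger_{a,i}\mathcal{E}^\dagger_{j,k}(h_{i,j})$ contains the extra term $\Delta_{a,i}\Delta_{j,k}(h_{i,j})$ beyond the single-edge changes; this is the paper's Case III. It is $O(\Gamma^2)$ and two-sided term by term, so it is harmless, but it must be bounded.
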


\noindent Thus, the channel $\mathcal{E}$ prepares the state $\ket{\psi}$ to an error $\varepsilon$ after $O(\log(N\frak{d}_0/\varepsilon)) $ applications as long as $\delta \leq \delta_\text{th}$. We remark here that while the discrete-time dissipative dynamics has the same steady-state as that of the continuous time dynamics constructed earlier, their dynamics would in general be different. In particular, for the channel $\mathcal{E}$ to well approximate the Lindbladian, we would in general need to choose $\Gamma \sim 1/\text{poly}(N)$, while in theorem \ref{theorem:discrete} it is chosen to be independent of $N$. The proof of this theorem, detailed in the supplementary material, follows a strategy similar to that in the case of theorem \ref{theorem:peps}: We track the energy of the parent Hamiltonian during the dissipative dynamics and show that it continuously decreases with repeated applications of $\mathcal{E}$ if $\Gamma$ is appropriately chosen.

As an illustration, we consider dissipatively preparing a one-parameter family of translationally invariant MPS~\cite{wolf2006quantum} with periodic boundary conditions, defined by the local tensors [Fig.~\ref{fig:combined_numerics}(a)]:
\begin{equation}
    A^0 = \frac{1}{\sqrt{1 + g}}\begin{pmatrix} 1 & g \\ 0 & 0 \end{pmatrix}, \quad A^1 = \frac{1}{\sqrt{1 + g}} \begin{pmatrix} 0 & 0 \\ 1 & 1 \end{pmatrix},
\end{equation}
where $g \in (-1, 1)$. This MPS is the ground state of the $g-$dependent Hamiltonian:
\begin{equation}
    \label{eq:parent_ham_mps_fam}
    H(g)=\sum_i \big(J_1(g) \sigma^z_i\sigma^z_{i+1} +J_2(g)\sigma^z_i\sigma^x_{i+1}\sigma^z_{i+2}- h(g) \sigma^x_i\big),
\end{equation}
where $J_1(g) = 2(g^2-1), h(g) = (1+g)^2 $ and $J_2(g) = (g-1)^2$. At $g = 0$, this Hamiltonian exhibits a second-order phase transition [Fig.~\ref{fig:combined_numerics}(b)] which can be seen as a non-analyticity in $O(g) = \langle \sigma^x \rangle$. Apart from $g = 0$, this MPS is injective after blocking two sites i.e., two-site tensor $A^{i_1, i_2} = A^{i_1} A^{i_2}$ is injective. This is shown in Fig.~\ref{fig:combined_numerics}(c) where the minimum singular value of the two site tensor $A^{i_1, i_2}$ viewed as a map from the bond-space to the physical-space is calculated as a function of $g$: At $g = 0$, this singular value becomes 0 corresponding to the tensor becoming non-injective otherwise it is non-zero.

We initialize the qubits in the maximally mixed state and compute $\rho(T) = \mathcal{E}^T(\rho(0))$ and track its Uhlmann Fidelity $\mathcal{F} = \langle \psi | \rho(T) \ket{\psi}$ relative to the target MPS $\ket{\psi}$
as a function of $T$. Fig.~\ref{fig:combined_numerics}(d) shows this fidelity as a function of $T$ for different $g$---we observe that the fidelity exponentially converges to $1$, and the convergence time increases as $g \to 0$ i.e., we go closer to the phase-transitin point. Fig.~\ref{fig:combined_numerics}(e) shows the time $T_\text{mix}$ needed for the fidelity to reach 99.9$\%$ as a function of $N$: We see a logarithmic scaling of this time with $N$ as suggested by our theoretical results (Theorem \ref{theorem:discrete}). As physically expected, the slope of $T_\text{mix}$ vs $\log N$ also increases as $g$ is tuned closer to the phase transition point. Finally, in Fig.~\ref{fig:combined_numerics}(f), we simulate the discrete-time protocol with $l$ tensors blocked into one tensor: We observe that increasing $l$ reduces the mixing time of the dissipative process. This can be attributed to the fact that larger $l$ makes the tensor closer to isometric and thus reducing its mixing time.

Finally, we remark that the dissipative method works well without blocking sites (i.e, for $l=2$) [Fig.~\ref{fig:combined_numerics}(d - e)] even reasonably close to the phase transition point where the tensor is expected to be highly non-isometric (for example at $g=0.1$, where the condition number is $\sim20$  [Fig.~\ref{fig:combined_numerics}(c)]). This provides evidence that the dissipative processes constructed in this paper can be used beyond the regime where we are currently able to theoretically prove rapid-mixing.

\emph{Conclusion and outlook.---} We have presented a method for the efficient dissipative preparation of injective tensor network states, including matrix product states (MPS) and $\delta$-isometric projected entangled pair states (PEPS). This is achieved via constructing both continuous-time and discrete-time processes whose unique steady state is the target PEPS. These processes are found to be rapid-mixing, preparing the target state in time logarithmic in system size, while simultaneously guaranteeing robustness to local errors. In 1D, our method achieves an exponential speedup over previously known dissipative MPS protocols, matching the optimal scaling of unitary methods, and in higher dimensions, it provides a polynomial speedup over known unitary methods for PEPS with geometrically local graphs and an exponential speedup for those with non-geometrically local graphs.

Future theoretical directions include extending these constructions to non-injective PEPS and implementing dissipative state preparation with error-corrected digital quantum devices. On the experimental side, our protocols could be implemented in a range of platforms, including superconducting qubits with engineered dissipation, cold atoms in optical lattices where local interactions and entanglement can be tuned, and solid-state systems supporting highly controllable interactions. These dissipative methods open a pathway toward fast and noise-resilient preparation of complex many-body states for quantum simulation, computation, and metrology.

\emph{Acknowledgements.---}  R.T. acknowledges funding from the European Union’s Horizon Europe research and innovation program under grant agreement number 101221560 (ToNQS). RT and GS acknowledge funding from the Deutsche Forschungsgemeinschaft (DFG, German Research Foundation) under Germany’s Excellence Strategy – EXC-2111 – 390814868. J.I.C. acknowledges the project THEQUCO within the Munich Quantum Valley (MQV), which is supported by the Bavarian state government with funds from the Hightech Agenda Bayern Plus. He also acknowledges support from the German Federal Ministry of Education and Research (BMBF) through the funded project AL-MANAQC, grant number 13N17236 within the research program ``Quantum Systems”.
\bibliography{paper_bib}

\onecolumngrid
\newpage
\begin{center}
\textbf{\large Supplemental material to \\ ``Dissipative preparation of injective tensor network states''}
\vspace{0.3cm}

Drishti Baruah,$^{1, 2, *}$ Georgios Styliaris,$^{1, 2}$ J.~Ignacio Cirac,$^{1, 2}$ and Rahul Trivedi$^{1, 2, \dagger}$

\vspace{0.1cm}
{\small \textit{$^1$Max Planck Institute of Quantum Optics, Hans Kopfermann Str. 1, 85748 Garching, Germany}}\\
{\small \textit{$^2$Munich Center for Quantum Science and Technology, Schellingstr. 4, 80799 Munich, Germany}}\\
\vspace{0.1cm}
{\small (Dated: \today)}
\end{center}

\begingroup
\renewcommand{\thefootnote}{\fnsymbol{footnote}}
\footnotetext[1]{%
\begin{tabular}[t]{@{}l@{}}
drishti.baruah@mpq.mpg.de \\
\llap{$^\dagger$\,}rahul.trivedi@mpq.mpg.de
\end{tabular}}
\endgroup
\vspace{0.3cm}
This Supplemental Material is organized as follows: First, in \cref{sec:notation}, we set up a few key definitions and notational setup that we use throughout the text. In \cref{sec:cont_time}, we include the proofs of Theorems 1 and 2. In particular, we go in detail through the continuous-time dissipative protocol, constructing the parent Lindbladian which prepares the target PEPS. We prove the uniqueness of its steady state and rapid mixing for highly injective PEPS. In \cref{sec:MPS}, we show how blocking (which is coarse-graining) sites for injective matrix product states can satisfy the high-injectivity assumption, implying rapid mixing for any injective MPS, proving Theorem 3. Finally, in \cref{sec:discrete_time}, we include the proofs of Theorems 4 and 5. We go through the discrete-time dissipative protocol, constructing a quantum channel to prepare the target PEPS and similarly as in the continuous-time case, we prove uniqueness of steady state and rapid mixing for highly-injective PEPS.
\vspace{0.3cm}
\section{Preliminaries and Notations}\label{sec:notation}
We begin by establishing the mathematical definitions and notations used throughout.
\subsection{Norms and Distances}
For bounded linear operators acting on a Hilbert space $\mathcal{H}$, we utilize the following norms:
\begin{itemize}
    \item \textbf{Operator Norm:} Denoted by $\Vert X \Vert$, representing the maximum singular value of $X$.
    \item \textbf{Trace Norm (Schatten 1-norm):} Denoted by $\Vert X \Vert_1 = \tr{}{\sqrt{X^\dagger X}}$. The distance between two quantum states $\rho$ and $\sigma$ is naturally bounded using the trace distance $\frac{1}{2}\Vert \rho - \sigma \Vert_1$.
    \item \textbf{Induced Superoperator Norm ($1\rightarrow 1$):} For a superoperator (such as a quantum channel $\mathcal{E}$ or Liouvillian $\mathcal{L}$), we use the induced trace norm defined as $\Vert \mathcal{E} \Vert_{1\rightarrow 1} = \sup_{\Vert X \Vert_1 = 1} \Vert \mathcal{E}(X) \Vert_1$.
\end{itemize}

\subsection{Tensor Network Setup}
We consider Projected Entangled Pair States (PEPS) defined on a graph $\mathcal{G}=(\mathcal{V},E)$ with maximum degree $\mathfrak{d}_{0}$. The construction assumesentangled states $\ket{\phi_0} \in \mathbb{C}^{D} \otimes \mathbb{C}^{D}$ on each edge $e \in E$, mapped to the physical space by local linear maps (tensors) $A_i: (\mathbb{C}^{D})^{\otimes |\mathcal{N}_i|} \to \mathbb{C}^d$ at each vertex $i \in \mathcal{V}$. 

The global target state is defined via the tensor network contraction mapping the virtual bond indices to the physical indices:
\begin{equation}\label{eq:PEPS-def}
    \ket{\psi} = \left(\bigotimes_{i\in\mathcal{V}} A_i\right) \bigotimes_{e\in E} \ket{\phi_0}_e.
\end{equation}

We always assume the maps $A_i$ are injective. We denote $S_i := \textnormal{Im}(A_i) \subseteq \mathbb{C}^d$. Also we will assume the convention that $A_i^{-1} P_{S_i}^\perp = 0$ i.e. $A_i^{-1}$ is the pseudo-inverse of $A_i$.

We quantify  how far the tensors are from being singular via a parameter $\delta \le 1/2$, such that $\Vert A_i^\dagger A_i - I \Vert \le \delta$. For the uniqueness of the steady state, we do not make any assumptions about $\delta$. For the proofs analysing the mixing time, we require $\delta$ to be less than a certain threshold which we derive later. We often refer to the injective PEPS that also satisfy $\delta \leq \delta_\textnormal{th}$ as \textit{highly injective}.

Next we formulate the parent Hamiltonian $H$ which has the injective PEPS $\psi$ [Eq.~\eqref{eq:PEPS-def}] as its unique ground state:\begin{subequations}
\label{eq:parent_ham}
    \begin{align}
        H&=\sum\limits_{(i,j)\in \text{E}}\ h_{i,j},\\
        h_{i,j}&= \vcenter{\hbox{\includegraphics[scale=0.5]{parent_ham.pdf}}},
    \end{align}
\end{subequations}
where $Q = I - \ket{\phi_0}\!\bra{\phi_0}$. The Hamiltonian is constructed such that it projects any state orthogonal to the target shared state onto the target state itself. We define the ground-state subspace corresponding to each edge $(i,j)$ as
\begin{equation}
\label{eq:projector}
    S_{i,j}=\{\ket{\psi}: h_{i,j}\ket{\psi}=0\}.
\end{equation}
Note that $\ket{\psi}\in S_{i,j} \Rightarrow \exists \ket{X}: \ket{\psi}=\vcenter{\hbox{\includegraphics[scale=0.5]{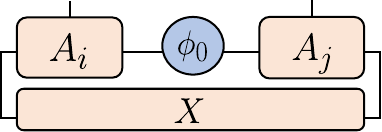}}}.$
\subsection{List of Notations}
The following is a summary of the operators, subspaces, and dynamical maps utilized throughout the main text and supplementary proofs.
\begin{itemize}
    \item[$S_i$:] The physical subspace at site $i$, defined as the image of the local tensor: $S_i := \textnormal{Im}(A_i) \subseteq \mathbb{C}^d$.
    \item[$P_{S_i}$:] The orthogonal projector onto the valid site subspace $S_i$.
    \item[$F_i$:] The single-site violation operator tracking leakage out of the valid tensor network manifold, defined as the projection onto the orthogonal complement: $F_i := P_{S_i}^\perp = I - P_{S_i}$.
    \item[$h_e$:] The local, positive semi-definite Parent Hamiltonian term corresponding to edge $e=(i,j)$. It is constructed to perfectly annihilate the target state ($h_e \ket{\psi} = 0$).
    \item[$S_e$:] The target ground-state subspace shared between two adjacent sites connected by edge $e$, defined algebraically as $S_e = \{\ket{\phi} : h_e \ket{\phi} = 0\}$.
    \item[$P_{S_e}$:] The orthogonal projector onto the edge ground-state subspace $S_e$. Its orthogonal complement is $P_{S_e}^\perp$.
    \item[$L_{\alpha, e}$:] The local dissipative jump operators acting on edge $e$. For $\alpha > 0$, they strictly annihilate the target state ($L_{\alpha, e} \ket{\psi} = 0$).
    \item[$H_{\textnormal{eff},e}$:] The effective Hamiltonian governing the diagonal non-Hermitian back-action of the jump operators on edge $e$, explicitly defined as $H_{\textnormal{eff},e} = \frac{1}{2} \sum_{\alpha>0} L_{\alpha,e}^\dagger L_{\alpha,e}$.
    \item[$\mathcal{L}_i$:] The local single-site correction Lindbladian designed to dissipate population from $S_i^\perp$ into $S_i$.
    \item[$\mathcal{L}_e$:] The local two-site Parent Lindbladian driving the system into the edge ground-state subspace $S_e$.
    \item[$\mathcal{E}_i, \mathcal{E}_e$:] The corresponding single-site and two-site discrete-time quantum channels (CPTP maps) utilized to approximate the continuous-time Lindbladian evolution.
\end{itemize}
\section{Analysis of continuous-time processes (Proof of theorems 1 and 2)}\label{sec:cont_time}
\subsection{Uniqueness of steady state}

To establish theorem 1, we first introduce two lemmas which allow us to characterize more general Lindbladians which share structural properties to the parent Lindbladian.
\begin{lemma}\label{lemma:unique_fp_lindblad} 
    Consider a Lindbladian $\mathcal{L}$ defined on the Hilbert space $\mathcal{H} = (\mathbb{C}^{d + 1})^{\otimes N}$ given by
    \[
    \mathcal{L} = \sum_{i = 1}^N \sum_{\alpha  =1}^d \mathcal{D}_{L_{\alpha, i}} \text{ where } L_{\alpha, i} = X \ket{0}_i\!\bra{\alpha} X^{-1},
    \]
    where $X:\mathcal{H} \to \mathcal{H}$ is an invertible operator. Then, $\mathcal{L}(\rho_\infty) = 0$ if and only if $\rho_\infty \propto  \ket{\psi}\!\bra{\psi}$ where $\ket{\psi} = X \ket{0}^{\otimes N}$.
\end{lemma}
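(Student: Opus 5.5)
The plan is to combine a support-invariance argument for steady states with a simple combinatorial observation about the ``lowering'' operators $J_{\alpha,i} := \ket{0}_i\!\bra{\alpha}$. Note that $L_{\alpha,i} = X J_{\alpha,i} X^{-1}$, but $L_{\alpha,i}^\dagger L_{\alpha,i}$ is not similar to $J^\dagger_{\alpha,i}J_{\alpha,i}$. So I would not try to conjugate the whole Lindbladian by $X$. Instead I would work only with invariant subspaces, which do transform nicely under the similarity $X$.

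\emph{``If'' direction.} Since $\bra{\alpha}0\rangle = 0$ for $\alpha \geq 1$, we have $L_{\alpha,i}\ket{\psi} = X\ket{0}_i\!\bra{\alpha}X^{-1}X\ket{0}^{\otimes N} = 0$. Hence $L_{\alpha,i}\ket{\psi}\!\bra{\psi}L_{\alpha,i}^\dagger = 0$ and $L^\dagger_{\alpha,i}L_{\alpha,i}\ket{\psi} = 0$, so every dissipator annihilates $\ket{\psi}\!\bra{\psi}$.

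\emph{Step 1: supports of steady states are invariant.} Let $\rho \geq 0$ satisfy $\mathcal{L}(\rho) = 0$, and let $P$ be the projector onto $\text{supp}(\rho)$, with $P^\perp = I - P$. Sandwiching $\mathcal{L}(\rho)=0$ with $P^\perp$ kills the anticommutator terms, because $P^\perp\rho = \rho P^\perp = 0$. This gives
\[
0=\sum_{i,\alpha} P^\perp L_{\alpha,i}\rho L_{\alpha,i}^\dagger P^\perp .
\]
This is a sum of positive operators, so each term vanishes, i.e.\ $P^\perp L_{\alpha,i}\sqrt{\rho} = 0$. Therefore $P^\perp L_{\alpha,i} P = 0$, and $V := \text{supp}(\rho)$ is invariant under every $L_{\alpha,i}$. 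Equivalently, $W := X^{-1}V$ is a nonzero subspace invariant under every $J_{\alpha,i}$.

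\emph{Step 2: every nonzero $J$-invariant subspace contains $\ket{0}^{\otimes N}$.} Take $0 \neq w \in W$ and pick a computational basis string $s = (s_1,\dots,s_N)$ with $w_s := \langle s | w\rangle \neq 0$. Apply $\prod_{i: s_i \neq 0} J_{s_i,i}$; these operators act on distinct sites and commute. The product maps $\ket{s}\mapsto\ket{0}^{\otimes N}$ and annihilates every other basis string $s'$. For any $s' \neq s$ there is a site with $s'_i\neq s_i$: if $s_i \neq 0$, then $\bra{s_i}$ kills $\ket{s'}$; if $s_i = 0$, that site is untouched and the resulting string is not all zeros. Keeping only the all-zero component then gives exactly $w_s\ket{0}^{\otimes N}$. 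I would handle this bookkeeping carefully, since the image also contains non-all-zero strings: I expect to verify that the image is precisely $w_s\ket{0}^{\otimes N}$ plus strings $s'$ that are nonzero only on sites where $s_i=0$. The cleaner route is to pick $s$ with nonzero coefficient that is \emph{maximal} in support, i.e.\ no $s'$ with $w_{s'}\neq 0$ has support strictly containing that of $s$ while agreeing with $s$ on $\text{supp}(s)$. With that choice the image is exactly $w_s\ket{0}^{\otimes N}$. Hence $\ket{0}^{\otimes N}\in W$, and so $\ket{\psi} = X\ket{0}^{\otimes N}\in V$ for every steady state.

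\emph{Step 3: uniqueness.} Suppose there were a steady state $\rho \neq \ket{\psi}\!\bra{\psi}$. Then $\Delta = \rho - \ket{\psi}\!\bra{\psi}$ is a nonzero, traceless, Hermitian steady operator. By the standard argument that $e^{t\mathcal{L}}$ is trace-preserving and trace-norm contractive, the positive and negative parts $\Delta_\pm$ are themselves steady. They are nonzero, with orthogonal supports. Normalizing them gives two steady states with orthogonal supports, yet by Step 2 both supports contain $\ket{\psi}$, which is a contradiction.

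I expect the main obstacle to be this last step's reliance on the positive/negative-part fact. I would prove it directly: $\|e^{t\mathcal{L}}\Delta_+\|_1 + \|e^{t\mathcal{L}}\Delta_-\|_1 \geq \|\Delta\|_1 = \text{Tr}\Delta_+ + \text{Tr}\Delta_-$, with equality forced by trace preservation and positivity. This forces $e^{t\mathcal{L}}\Delta_\pm$ to have orthogonal supports, and hence to equal $\Delta_\pm$. I also expect the support-maximality choice in Step 2 to be the delicate combinatorial point.
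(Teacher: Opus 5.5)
Your proof is correct, but it reaches uniqueness by a different route from the paper. Step~1 is shared: the support $\mathcal{S}_\infty$ of a steady state is invariant under every $L_{\alpha,i}$. The paper then additionally sandwiches the fixed-point equation with $\ket{\psi}$, using $L_{\alpha,i}\ket{\psi}=0$, to obtain $\bra{\psi}L_{\alpha,i}P_{\mathcal{S}_\infty}=0$. Combined with invariance, this gives $\bra{\psi}L_{\alpha_1,i_1}\cdots L_{\alpha_m,i_m}\ket{\phi}=0$ for all $\phi\in\mathcal{S}_\infty$ and $m\geq 1$. Coefficient extraction then yields $\mathcal{S}_\infty\subseteq\mathrm{span}(\ket{\psi})$ directly.

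You never use that orthogonality relation. You prove only the opposite containment $\ket{\psi}\in\mathcal{S}_\infty$. You then get uniqueness from the general fact that the positive and negative parts of a traceless steady operator are again steady, with orthogonal supports; your trace-norm equality argument for this is sound. The paper's route is purely algebraic and pins down the support, hence purity, in one stroke. Yours is more modular: it isolates a clean statement about subspaces invariant under the maps $\ket{0}_i\!\bra{\alpha}$, and otherwise relies only on positivity and trace preservation of $e^{t\mathcal{L}}$.

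The bookkeeping issue you flag in Step~2 is real, and the paper glosses over it. The paper asserts $\big(\prod_{i:\alpha_i\neq 0}L_{\alpha_i,i}\big)\ket{\phi}=\phi_{\vec{\alpha}}\ket{\psi}$ for arbitrary $\vec{\alpha}$. This is false, because strings that agree with $\vec{\alpha}$ on its support but are nonzero elsewhere also survive the product. For non-unitary $X$, these terms are not removed by pairing with $\bra{\psi}$. Choosing a support-maximal string among those with nonzero coefficient, as you do (for instance one of maximal Hamming weight), is exactly what makes the extraction valid. It repairs the paper's argument as well.

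When writing up, state Step~2 directly with that choice. Your preliminary claim that the product annihilates every other basis string is false, as you note yourself. In Step~3, use the normalized projector $\ket{\psi}\!\bra{\psi}/\langle\psi|\psi\rangle$, since $X\ket{0}^{\otimes N}$ need not be normalized.
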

\begin{proof}
It will also be convenient to note that the jump operators $L_{\alpha, e}$ are commuting i.e. $[L_{\alpha, i}, L_{\alpha', i'}] = 0$. It is easy to check that the jump operators $L_{\alpha, i}$ satisfy $L_{\alpha, i} \ket{\psi} = 0 \ \forall \alpha, i$: therefore, $\ket{\psi}$ is indeed a fixed point of $\mathcal{L}$.

Next, we show that $\ket{\psi}$ is the unique fixed point. Assume that $\rho_\infty$ is a fixed point of the parent Lindbladian: We then obtain that
\begin{align}\label{eq:fixed_point}
    \sum_{\alpha, i} L_{\alpha, i} \rho_\infty L_{\alpha, i}^\dagger = \frac{1}{2}\sum_{\alpha, i}\{L_{\alpha, i}^\dagger L_{\alpha, i}, \rho_\infty \}.
\end{align}
Suppose $\mathcal{S}_\infty = \text{span}\{\ket{e} : \ket{e} \textnormal{ are eigenvectors of }\rho_\infty\}$ and $P_{\mathcal{S}_\infty}$ be the projector on $\mathcal{S}_\infty$ and $P^\perp_{\mathcal{S}_\infty} = I -P_{\mathcal{S}_\infty}$. Then, since $\rho_\infty P_{\mathcal{S}_\infty}^\perp = P_{\mathcal{S}_\infty}^\perp \rho_\infty = 0$, it follows that
\begin{align}
    P^\perp_{\mathcal{S}_\infty}\bigg(\sum_{\alpha, i} L_{\alpha, i} \rho_\infty L_{\alpha, i}^\dagger\bigg) P_{\mathcal{S}_\infty}^\perp = 0,
\end{align}
from which it follows that $P_{\mathcal{S}_\infty}^\perp L_{\alpha, i} P_{\mathcal{S}_\infty} = 0 \ \forall \alpha, e$ or, equivalently,
\begin{align}\label{eq:invariant_support}
\forall \ket{\phi} \in \mathcal{S}_\infty: L_{\alpha, i} \ket{\phi} \in \mathcal{S}_\infty.
\end{align}
In other words, $\mathcal{S}_\infty$ is an invariant subspace of the jump operators $L_{\alpha, i}$. Furthermore, using $L_{\alpha, i} \ket{\psi} = 0$, it also follows from Eq.~\eqref{eq:fixed_point} that
\begin{align}
    \bra{\psi}\bigg(\sum_{\alpha, i}L_{\alpha, i} \rho_\infty L_{\alpha, i}^\dagger \bigg) \ket{\psi} = 0,
\end{align}
from which it follows that that $\bra{\psi}L_{\alpha, i} P_{\mathcal{S}_\infty} = 0$ or equivalently
\begin{align}\label{eq:orthogonality_under_jump}
\forall \ket{\phi} \in \mathcal{S}_\infty : \bra{\psi} L_{\alpha, i}\ket{\phi} = 0.
\end{align}
We further note that Eqs.~\eqref{eq:invariant_support} and~\eqref{eq:orthogonality_under_jump} imply that $\forall \ket{\phi} \in \mathcal{S}_\infty, i_1, i_2 \dots i_m \in \{1, 2 \dots N\}, \alpha_1, \alpha_2 \dots\alpha_m \in \{1, 2 \dots d\}$,
\begin{align}\label{eq:orthogonality_multiple_jumps}
 \bra{\psi} L_{\alpha_1, i_1}L_{\alpha_2, i_2} \dots L_{\alpha_m, i_m} \ket{\phi} = 0.
\end{align}
This can be shown easily by noting that, from Eq.~\eqref{eq:invariant_support}, $\ket{\phi'} = L_{\alpha_2, i_2} L_{\alpha_3, i_3} \dots L_{\alpha_m, i_m} \ket{\phi} \in \mathcal{S}_\infty$ and then from Eq.~\eqref{eq:orthogonality_under_jump}, $\langle \psi| L_{\alpha_1, i_1} \ket{\phi'} = 0$.

We will next show that $\mathcal{S}_\infty \subseteq \text{span}(\ket{\psi})$, which would imply that $\rho_\infty \propto \ket{\psi}\!\bra{\psi}$ i.e., the only possible fixed point of the parent Lindbladian $\mathcal{L}$ is $\ket{\psi}$. Suppose $\ket{\phi} \in \mathcal{S}_\infty$: We first express it as
\[
\ket{\phi} = X \sum_{\alpha_1, \alpha_2 \dots \alpha_N = 0}^{d} \phi_{\alpha_1, \alpha_2 \dots \alpha_N} \ket{\alpha_1, \alpha_2 \dots \alpha_n},
\]
which is always possible since $X$ is invertible and hence $\{X\ket{\alpha_1, \alpha_2 \dots \alpha_m}\}_{\alpha_1, \alpha_2 \dots \alpha_m \in \{1, 2 \dots d\}}$ is a linearly independent basis. Now, for a given $\alpha_1, \alpha_2 \dots \alpha_N \neq 0, 0\dots 0$, it follows that
\begin{align*}
\bigg(\prod_{i: \alpha_i \neq 0} L_{\alpha_i, i} \bigg) \ket{\phi} = \phi_{\alpha_1, \alpha_2 \dots \alpha_N} \ket{\psi}.
\end{align*}
However, from Eq.~\eqref{eq:orthogonality_multiple_jumps}, it then follows that
\begin{align*}
    \phi_{\alpha_1, \alpha_2 \dots \alpha_N} = \frac{1}{\norm{\ket{\psi}}^2} \bra{\psi}\bigg(\prod_{i: \alpha_i \neq 0} L_{\alpha_i, i} \bigg) \ket{\phi} = 0,
\end{align*}
and thus $\phi_{\alpha_1, \alpha_2 \dots \alpha_N} = 0$ for all $\alpha_1, \alpha_2 \dots \alpha_N \neq 0, 0 \dots 0$. However, this implies that $\ket{\phi} = \phi_{0, 0 \dots 0} X \ket{0, 0 \dots 0}$ thus proving the lemma.
\end{proof}
\begin{lemma}\label{lemma:no_imaginary}
    Suppose $\mathcal{E}$ is a channel on a finite-dimensional Hilbert space $\mathcal{H}$ with a pure-state $\ket{\psi}$ as its unique fixed-point i.e., $\mathcal{E}(\rho_\infty) = \rho_\infty \implies \rho_\infty \propto \ket{\psi}\!\bra{\psi}$. Then, $\textnormal{spec}(\mathcal{E}) \subseteq \{z\in \mathbb{C}: |z| < 1\} \cup \{1\}$ i.e., $\mathcal{E}$ has no eigenvector other than $\ket{\psi}\!\bra{\psi}$ on its peripheral spectrum.
\end{lemma}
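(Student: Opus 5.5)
The plan is to reduce the statement to a positivity argument on the peripheral part of $\mathcal{E}$. We would use compactness of the set of channels together with the fact that a state $\tau$ has $\langle\psi|\tau|\psi\rangle = 1$ only if $\tau = \ket{\psi}\!\bra{\psi}$.

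\emph{Step 1: the peripheral projector.} Since $\mathcal{E}$ is a channel on a finite-dimensional space, it is power bounded in $\norm{\cdot}_{1\to1}$. Hence its peripheral eigenvalues $\lambda_j$, with $|\lambda_j| = 1$, have trivial Jordan blocks. Write $\mathcal{E} = \sum_j \lambda_j \mathcal{P}_j + \mathcal{R}$, where the $\mathcal{P}_j$ are the peripheral spectral projections and $\mathcal{R}$ has spectral radius $<1$. By Dirichlet/Kronecker simultaneous approximation we can choose $n_k\to\infty$ with $\lambda_j^{n_k}\to 1$ for all $j$. Then $\mathcal{E}^{n_k}\to \mathcal{P} := \sum_j \mathcal{P}_j$, since $\mathcal{R}^{n_k}\to 0$. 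The limit $\mathcal{P}$ is CPTP, as a limit of channels. It commutes with $\mathcal{E}$, and its range $R$ is exactly the span of all peripheral eigenvectors. Because states span all operators, the set $\{\mathcal{P}(\rho) : \rho \text{ a state}\}$ spans $R$. It therefore suffices to show that $\mathcal{P}(\rho) = \ket{\psi}\!\bra{\psi}$ for every state $\rho$.

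\emph{Step 2: every state in $R$ equals $\ket{\psi}\!\bra{\psi}$.} Fix a state $\tau = \mathcal{P}(\rho) \in R$ and set $\tau_m = \mathcal{E}^m(\tau)$, each of which is a state in $R$. The Cesàro averages $\frac{1}{n}\sum_{m<n}\tau_m$ have limit points that are $\mathcal{E}$-fixed states. By the uniqueness hypothesis these limit points equal $\ket{\psi}\!\bra{\psi}$, with $\ket{\psi}$ normalized. Consequently the averages of $f_m := \langle\psi|\tau_m|\psi\rangle$ tend to $1$, while $f_m\le 1$ for every $m$.

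On $R$ the dynamics is $\tau_m = \sum_j \lambda_j^m \mathcal{P}_j(\tau)$, an almost periodic sequence. Suppose $f_0 < 1-2\varepsilon$. By almost periodicity, the set of $m$ with $\norm{\tau_m - \tau}_1 < \varepsilon$ has positive lower density; this is the set of $m$ at which all phases $\lambda_j^m$ are simultaneously close to $1$. On this set $f_m < 1-\varepsilon$, which contradicts the Cesàro mean of $f_m$ being $1$. Hence $f_0 = 1$. Since $\tau \succeq 0$ with unit trace, this forces $\tau = \ket{\psi}\!\bra{\psi}$.

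\emph{Step 3: conclusion.} By Steps 1 and 2, $R = \text{span}\{\ket{\psi}\!\bra{\psi}\}$. So $1$ is the only peripheral eigenvalue, it is simple, and every other eigenvalue lies strictly inside the unit disk.

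The main obstacle is the recurrence argument in Step 2. We must make precise that the orbit on $R$ returns close to its starting point along a set of times of positive density, uniformly in the finitely many phases. The plan is to use the standard fact that $m\mapsto(\lambda_j^m)_j$ is an ergodic rotation on the closed subgroup of the torus generated by $(\lambda_j)_j$. Visits to a neighbourhood of the identity then occur with frequency equal to that neighbourhood's positive Haar measure.
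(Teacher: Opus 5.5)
Your proof is correct, but it takes a genuinely different route from the paper's.

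\textbf{The paper's route.} The paper works with a Kraus representation. The condition $\mathcal{E}(\ket{\psi}\!\bra{\psi})=\ket{\psi}\!\bra{\psi}$ forces $K_\alpha\ket{\psi}=c_\alpha\ket{\psi}$, so every Kraus operator is block upper-triangular with respect to $\mathrm{span}(\ket{\psi})\oplus\mathrm{span}(\ket{\psi})^\perp$. A Perron--Frobenius fixed point of the compressed CP map $\mathcal{F}=\sum_\alpha A_\alpha(\cdot)A_\alpha^\dagger$, combined with trace preservation, shows that $\mathcal{F}$ has spectral radius $<1$. A Cauchy--Schwarz equality argument then removes the off-diagonal blocks of any peripheral eigenvector, since otherwise $0\oplus\ket{x}\!\bra{x}$ would be a second fixed state. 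This is self-contained linear algebra, and it yields concrete structure: a strict contraction on the complement of $\ket{\psi}$.

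\textbf{Your route.} Your argument never uses Kraus operators. Power-boundedness makes the peripheral part semisimple, and the peripheral projection is a limit of powers, hence a channel. The mean ergodic theorem then finishes the job together with one observation: $\tau\mapsto\bra{\psi}\tau\ket{\psi}$ exposes $\ket{\psi}\!\bra{\psi}$ as an extreme point of the state space. This makes it transparent exactly where purity enters. The statement fails for a mixed unique fixed state, e.g.\ a dephased cyclic permutation of three levels. Along the way you also get semisimplicity and simplicity of the eigenvalue $1$.

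\textbf{Simplifying Step 2.} The one heavy input, recurrence of the torus rotation in Step 2, can be dropped entirely. Write $g_m=1-f_m=\sum_j\mu_j\lambda_j^m$ with distinct unimodular $\lambda_j$. Since $0\le g_m\le 1$, you have $\frac{1}{n}\sum_{m<n}g_m^2\le\frac{1}{n}\sum_{m<n}g_m\to 0$. On the other hand, $\frac{1}{n}\sum_{m<n}|g_m|^2\to\sum_j|\mu_j|^2$, because $\frac{1}{n}\sum_{m<n}(\lambda_j\bar\lambda_k)^m\to\delta_{jk}$. Hence every $\mu_j=0$ and $f_0=1$.

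A minor point on wording: the set of $m$ with $\lVert\tau_m-\tau\rVert_1<\varepsilon$ \emph{contains}, rather than equals, the set of simultaneous near-returns of the phases. That inclusion is all your argument needs.
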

\begin{proof}
    Suppose $\mathcal{E} = \sum_\alpha K_\alpha (\cdot) K_\alpha^\dagger$ with $\sum_\alpha K_\alpha^\dagger K_\alpha = I$. Since $\mathcal{E}(\ket{\psi}\!\bra{\psi}) = \ket{\psi}\!\bra{\psi}$, we obtain that $K_\alpha \ket{\psi} = c_\alpha \ket{\psi}$ for some $c_\alpha \in \mathbb{C}$ satisfying $\sum_\alpha \abs{c_\alpha}^2 = 1$. Decomposing $\mathcal{H} = \mathcal{K} \oplus \mathcal{K}^\perp$ where $\mathcal{K} = \text{span}(\ket{\psi})$, the Kraus operators $\{K_\alpha\}_\alpha$ can thus be expressed as
    \[
    K_\alpha = \begin{bmatrix}
        c_\alpha & \bra{v_\alpha} \\
        0 & A_\alpha
    \end{bmatrix},
    \]
    for some $\ket{v_\alpha} \in \mathcal{K}^\perp, A_\alpha : \mathcal{K}^\perp \to \mathcal{K}^\perp$. Note that it follows from $\sum_\alpha K_\alpha^\dagger K_\alpha = I$ that
    \begin{align}\label{eq:normalization_blocks}
    \sum_\alpha A_\alpha^\dagger A_\alpha + \sum_\alpha \ket{v_\alpha}\!\bra{ v_\alpha} = I.
    \end{align}
    
    Consider now the completely-positive (CP) map on the Hilbert space $\mathcal{K}^\perp$, $\mathcal{F} = \sum_\alpha A_\alpha (\cdot) A_\alpha^\dagger$: we first show that its spectral radius is strictly smaller than 1. Since from Eq.~\eqref{eq:normalization_blocks} we obtain $\sum_\alpha A_\alpha^\dagger A_\alpha \preceq I$, it also follows that $\mathcal{F}$ has a spectral radius $\leq 1$. Suppose that $\mathcal{F}$ had a spectral-radius = 1: since $\mathcal{F}$ is CP, we obtain that $\exists \sigma \succeq 0$ such that $\mathcal{F}(\sigma) = \sigma$. Then, 
    \[
    \mathcal{E}\bigg(\begin{bmatrix}
        0 & 0 \\
        0 & \sigma
    \end{bmatrix}\bigg) = \begin{bmatrix}
        \sum_\alpha \bra{v_\alpha} \sigma \ket{v_\alpha} & \sum_\alpha \bra{v_\alpha}\sigma A_\alpha^\dagger \\
        \sum_\alpha A_\alpha \sigma \ket{v_\alpha} & \sigma
    \end{bmatrix}.
    \]
    However, since $\mathcal{E}$ is also trace-preserving, we obtain that $\sum_\alpha \bra{v_\alpha} \sigma \ket{v_\alpha} = 0$. Since $\sigma \succeq 0$, we obtain that $\bra{v_\alpha} \sigma \ket{v_\alpha} = 0 \ \forall \alpha$ and thus $\sigma \ket{v_\alpha} = \bra{v_\alpha} \sigma = 0 \ \forall \alpha$. Thus, we conclude that $\mathcal{E}$ has another fixed-point which contradicts the assumption that $\ket{\psi}\!\bra{\psi}$ is its only fixed point. Therefore, we conclude that the spectral radius of $\mathcal{K}$ is strictly smaller than 1.

     Now, assume that $\exists X$ such that $\mathcal{E}(X) = \lambda X$ for some $\lambda \neq 1$ with $\abs{\lambda}= 1$. We will assume that
     \begin{align}\label{eq:form_X}
     X = \begin{bmatrix}
         a & \bra{x} \\
         \ket{y} & W
     \end{bmatrix},
     \end{align}
     for some $a \in \mathbb{C}, \ket{x}, \ket{y} \in \mathcal{K}^\perp, W:\mathcal{K}^\perp \to \mathcal{K}^\perp$. It is useful to note that
     \begin{align}\label{eq:channel_expression}
     \mathcal{E}(X) = \begin{bmatrix}
         a + \sum_\alpha (c_\alpha^* \bra{v_\alpha} y\rangle + c_\alpha \bra{x} v_\alpha \rangle + \bra{v_\alpha} W \ket{v_\alpha} )  & \sum_\alpha(c_\alpha \bra{x} A_\alpha^\dagger + \bra{v_\alpha} W A_\alpha^\dagger) \\
         \sum_\alpha(c_\alpha^* A_\alpha \ket{y} + A_\alpha W \ket{v_\alpha}) & \mathcal{F}(W)
     \end{bmatrix}
     \end{align}
    Thus, it follows from $\mathcal{E}(X) = \lambda X$ that $\mathcal{F}(W) = \lambda W$. However, by assumption $\abs{\lambda} = 1$ and, as shown above, the spectral radius of $\mathcal{F}$ is strictly less than 1. Therefore, $W = 0$: using this together with Eq.~\eqref{eq:channel_expression} and $\mathcal{E}(X) = \lambda X$, we then obtain that
    \begin{align}\label{eq:cross_diagonal_terms}
    \sum_\alpha c_\alpha^* A_\alpha \ket{x} =   \lambda^* \ket{x} \text{ and }\sum_\alpha c_\alpha^* A_\alpha \ket{y} = \lambda \ket{y}.
     \end{align}
     Now,
     \[
     \abs{\lambda}^2\norm{\ket{x}}^2 = \left \Vert{\sum_\alpha c_\alpha^* A_\alpha \ket{x}} \right \Vert^2 \numleq{1} \bigg(\sum_\alpha \abs{c_\alpha}^2\bigg) \bigg(\sum_\alpha \norm{A_\alpha \ket{x}}^2\bigg) \numleq{2} \norm{\ket{x}}^2,
     \]
     where in (1) we have used the Cauchy-Schwarz inequality and in (2) we have used the fact that $\sum_\alpha \abs{c_\alpha}^2 = 1$ and $\sum_\alpha A_\alpha^\dagger A_\alpha \preceq I$. Thus, since $\abs{\lambda}^2 = 1$, both the inequalities in (1) and (2) have to be equalities---this implies that
     \begin{subequations}\label{eq:cauchy_eq}
     \begin{align}
         &\sum_\alpha \norm{A_\alpha \ket{x}}^2 = \sum_\alpha \bra{x} A_\alpha^\dagger A_\alpha \ket{x} =  \norm{\ket{x}}^2, \\
         &\exists \ket{f} \in \mathcal{K}^\perp \text{ such that } \forall \alpha : A_\alpha \ket{x} = c_\alpha \ket{f}.
     \end{align}
     \end{subequations}
     From Eqs.~\eqref{eq:cross_diagonal_terms} and (\ref{eq:cauchy_eq}b), it thus follows that $\lambda^* \ket{x} = \sum_\alpha c_\alpha^* A_\alpha \ket{x} = \ket{f}$ and therefore 
     \begin{align}\label{eq:cauchy_equality_1}
     \forall \alpha: A_\alpha \ket{x} = c_\alpha \lambda^* \ket{x}.
     \end{align}
     Furthermore, from Eqs.~(\ref{eq:cauchy_eq}a) and \eqref{eq:normalization_blocks}, it follows that $\sum_{\alpha}\abs{ \bra{x}v_\alpha\rangle}^2 = 0$ and therefore
     \begin{align}\label{eq:cauchy_equality_2}
         \forall \alpha: \bra{v_\alpha}x\rangle = 0.
     \end{align}
     Equations \eqref{eq:cauchy_equality_1} and \eqref{eq:cauchy_equality_2} together imply that
     \[
     \forall \alpha : K_\alpha \begin{bmatrix}
         0 \\
         \ket{x}
     \end{bmatrix} = c_\alpha \lambda^* \begin{bmatrix}
         0 \\
         \ket{x}
     \end{bmatrix} \implies \mathcal{E}\bigg(\begin{bmatrix}
         0 & 0 \\
         0 & \ket{x}\!\bra{x}
     \end{bmatrix}\bigg) = \abs{\lambda}^2 \begin{bmatrix}
         0 & 0 \\
         0 & \ket{x}\!\bra{x}
     \end{bmatrix}.
     \]
     Since $\abs{\lambda} = 1$ by assumption, if $\ket{x} \neq 0$, we would have constructed a fixed-point of $\mathcal{E}$ other than $\ket{\psi}$, contradicting with its uniqueness. Thus, we conclude that $\ket{x} = 0$. A similar argument can be made starting from Eq.~\eqref{eq:cross_diagonal_terms} to also conclude that $\ket{y} = 0$. Returning to Eq.~\eqref{eq:form_X} and Eq.~\eqref{eq:channel_expression}, we would then obtain that $\mathcal{E}(X) = \lambda X \implies a = \lambda a$: if $\lambda \neq 1$, then this would imply $a = 0$. Thus, we conclude that if $\mathcal{E}(X) = \lambda X$ for some $\lambda \neq 1$ with $\abs{\lambda} = 1$ then $X = 0$ which proves the lemma.
\end{proof}
Now we look at the parent Lindbladian of interest and its properties. To recap, we denote the image of the map $A_i$ defined in Eq.~\eqref{eq:tensorA} as $S_i = \text{Im}(A_i)$. We define the two-site Lindblad jump operators as: \begin{align}
        L_{\alpha,(i,j)}&=\vcenter{\hbox{\includegraphics[scale=0.5]{Lindblad_op.pdf}}}.
    \end{align}
    where $\{\ket{\phi_\alpha}\}_{\alpha = 1}^{D^2 - 1}$ being a set of orthonormal states also orthogonal to $\ket{\phi_0}$ used in constructing $\ket{\psi}$. 
The parent Lindbladian $\mathcal{L}$ is then given by:
\begin{equation}
    \mathcal{L} = \sum_{e\in E}\mathcal{L}_{e}+\sum_{i\in\mathcal{V}}\mathcal{L}_{i},
\end{equation}
where the single-site and two-site Lindbladians are respectively defined as:
\begin{align}
    \mathcal{L}_i(\cdot) &= \frac{1}{\dim(S_i)} P_{S_i} \textnormal{Tr}_i(P_{S_i}^\perp (\cdot)) - \frac{1}{2}\{P_{S_i}^\perp, \cdot\}, \label{eq:single_site}\\
    \mathcal{L}_e(\cdot) &= \sum_{\alpha=1}^{D^2-1} \left( L_{\alpha,e} (\cdot) L_{\alpha,e}^\dagger - \frac{1}{2}\{L_{\alpha,e}^\dagger L_{\alpha,e}, \cdot\} \right).
\end{align} Now we prove that $\mathcal{L}$ has the injective PEPS $\psi$ as its unique fixed point and it also has no purely imaginary eigenvalues.
\begin{proof}[Proof of Theorem 1]
The proof proceeds in three main steps: restricting the dynamics to the valid physical subspace, establishing the uniqueness of the fixed point within that subspace, and characterizing the peripheral spectrum.
 
We denote $\rho_\infty$ as any state satisfying $\mathcal{L}(\rho_\infty) = 0$. Uisng  \eqref{eq:single_site}, we obtain that:\begin{align}
    \text{Tr}\left(P_{S_i}^\perp \mathcal{L}_i(\rho_\infty)\right) &= \text{Tr}\left( P_{S_i}^\perp \left[ \frac{1}{\dim(S_i)} P_{S_i} \text{Tr}_i(P_{S_i}^\perp \rho_\infty) \right] \right) - \frac{1}{2} \text{Tr}\left( P_{S_i}^\perp \{P_{S_i}^\perp, \rho_\infty\} \right) \notag\\
    &\overset{(1)}{=} 0 - \frac{1}{2} \text{Tr}\left( P_{S_i}^\perp P_{S_i}^\perp \rho_\infty + P_{S_i}^\perp \rho_\infty P_{S_i}^\perp \right) \notag\\
    &\overset{(2)}{=} -\text{Tr}(P_{S_i}^\perp \rho_\infty),
\end{align}
where in (1) we used the orthogonality $P_{S_i}^\perp P_{S_i} = 0$, and in (2) we used $(P_{S_i}^\perp)^2 = P_{S_i}^\perp$ alongside the cyclic property of the trace.

Next, we must verify that no other terms in the full Lindbladian $\mathcal{L} = \sum_{e\in E}\mathcal{L}_{e}+\sum_{i\in\mathcal{V}}\mathcal{L}_{i}$ contribute to this trace. 
For any site $j \neq i$, the generator $\mathcal{L}_j$ acts on a disjoint local Hilbert space. Because Lindbladians are trace-preserving ($\text{Tr}_j \circ \mathcal{L}_j = 0$ where $\text{Tr}_j$ denotes the partial trace), it follows immediately that $\text{Tr}(P_{S_i}^\perp \mathcal{L}_j(\rho_\infty)) = 0$. 
Furthermore, since the jump operators $L_{\alpha, e}$ for any incident edge $e$ are supported strictly within $S_i$, they satisfy $P_{S_i}^\perp L_{\alpha, e} = L_{\alpha, e}^\dagger P_{S_i}^\perp = 0$. This orthogonality immediately yields $\text{Tr}(P_{S_i}^\perp \mathcal{L}_e(\rho_\infty)) = 0$. Summing up, we get
\begin{equation}
    \text{Tr}\left(P_{S_i}^\perp \mathcal{L}(\rho_\infty)\right) =  \text{Tr}\left(P_{S_i}^\perp \mathcal{L}_i(\rho_\infty)\right) = -\text{Tr}(P_{S_i}^\perp \rho_\infty) = 0.
\end{equation}
Because $\rho_\infty$ is a positive semi-definite density matrix, $\text{Tr}(P_{S_i}^\perp \rho_\infty) = 0$ implies that $P_{S_i}^\perp \rho_\infty P_{S_i}^\perp = 0$. Thus, the steady state $\rho_\infty$ has strictly zero support in the orthogonal complement $S_i^\perp$ for all $i \in \mathcal{V}$.

Within the valid subspace, the steady-state condition reduces to $\sum_{e \in E}\mathcal{L}_e(\rho_\infty) = 0$. Next we wish to invoke Lemma~\ref{lemma:unique_fp_lindblad} and for that we set $\bigotimes_{e\in E} |\phi_0\rangle_e$ as the local vacuum state $|0\rangle^{\otimes N}$ from the Lemma, and let the global invertible operator $X = \bigotimes_{i \in \mathcal{V}} A_i$, then the two-site jump operators take the explicit form:
\begin{equation}
    L_{\alpha, (i,j)} = X \ket{\phi_0}_{(i,j)}\! \bra{\phi_\alpha} X^{-1}.
\end{equation}
Then Lemma~\ref{lemma:unique_fp_lindblad} implies $\mathcal{L}(\rho_\infty) = 0$ if and only if $\rho_\infty \propto  \ket{\psi}\!\bra{\psi}$ where $\ket{\psi} = X \ket{0}^{\otimes N}=\left( \bigotimes_{i\in \mathcal{V}} A_i \right) \bigotimes_{e\in \text{E}} \vert\phi_0\rangle_{e}.$

Finally, we must ensure that $\mathcal{L}$ has no purely imaginary eigenvalues. To see this, note that if $\mathcal{L}$ had purely imaginary eigenvalues then $\mathcal{E}=e^{\mathcal{L}t}$, for $t >0$, would have an eigenvalue on the unit circle. From the uniqueness established above, we already have that $\mathcal{E}(\rho_\infty) = \rho_\infty$ implies $\rho_\infty \propto \ket{\psi}\!\bra{\psi}$. Then Lemma 2 guarantees that $\mathcal{E}$ has no other eigenvalues on the unit circle. Thus, we conclude that $\ket{\psi}\!\bra{\psi}$ is the unique steady state of the full Lindbladian $\mathcal{L}$.
\end{proof}

\subsection{Rapid mixing of the parent Lindbladian for near-isometric PEPS}

\subsubsection{Preliminaries}

\begin{lemma}[Local Parent Hamiltonian Bounds]
\label{lemma:local_ham_bounds}
For any edge $e=(i,j)$, let $h_{i,j}$ be the local parent Hamiltonian term and $\Tilde{h}_{i,j}$ be the projector term composed of isometric tensors. Under the $\delta$-isometry condition with $\delta \le 1/2$, the following bounds hold:
\begin{enumerate}
    \item The deviation from the ideal projector is bounded by $\smallnorm{h_{i,j} - \tilde{h}_{i,j}} \le 8\delta$.
    \item The local spectral gap of $h_{i,j}$ is lower-bounded by $\Delta_{h_{i,j}} \ge 1 - 8\delta$, where the spectral gap is defined to be the difference in energy between its first excited state and ground state.
    \item The operator norm is bounded by $\left\Vert h_{i,j} \right\Vert \le 4$.
\end{enumerate}
\end{lemma}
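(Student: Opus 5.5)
The plan is to make the parent Hamiltonian term explicit and compare it with its isometric counterpart through a polar decomposition. I read the diagram for $h_{i,j}$ as
\[
h_{i,j} = \big(A_i^{-1}\otimes A_j^{-1}\big)^\dagger\, Q_{(i,j)}\, \big(A_i^{-1}\otimes A_j^{-1}\big),
\]
where the pseudo-inverses act on the two physical sites and $Q$ acts on the pair of virtual legs belonging to the edge, tensored with the identity on the remaining virtual legs. This operator is positive semidefinite and annihilates every state of the form given after Eq.~\eqref{eq:projector}. The whole proof depends on this reading being correct.

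\textbf{Setup.} Write the polar decomposition $A_i = V_i |A_i|$, where $V_i$ is an isometry onto $S_i$ and $|A_i| = (A_i^\dagger A_i)^{1/2}$. The convention $A_i^{-1}P_{S_i}^\perp = 0$ then gives $A_i^{-1} = |A_i|^{-1}V_i^\dagger$. Set $W = V_i\otimes V_j$ and $M = |A_i|^{-1}\otimes|A_j|^{-1}$, so that
\[
h_{i,j} = W M Q M W^\dagger, \qquad \tilde h_{i,j} := W Q W^\dagger .
\]
Since $W$ is an isometry and $Q$ is a projector, $\tilde h_{i,j}$ is an orthogonal projector.

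\textbf{Norm bounds (items 1 and 3).} The condition $\smallnorm{A_i^\dagger A_i - I}\le\delta$ places the spectrum of $|A_i|^{-1}$ in $[(1+\delta)^{-1/2},(1-\delta)^{-1/2}]$. Hence $\smallnorm{|A_i|^{-1}}^2\le (1-\delta)^{-1}\le 2$, and therefore $\smallnorm{h_{i,j}}\le\smallnorm{M}^2\le 4$, which is item 3.

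For item 1, first bound $a:=\smallnorm{|A_i|^{-1}-I}\le (1-\delta)^{-1/2}-1\le\delta$. The last inequality holds on $[0,1/2]$ by convexity, since it holds at both endpoints. It follows that $\smallnorm{M-I}\le(1+a)^2-1\le 2.5\delta$ and $\smallnorm{M}+1\le 3$. Then
\[
\smallnorm{MQM-Q}\le \smallnorm{M-I}\,\big(\smallnorm{M}+1\big)\le 7.5\delta\le 8\delta,
\]
and conjugating by the isometry $W$ does not increase the norm, which gives item 1.

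\textbf{Gap (item 2).} Both operators have eigenvalue $0$, and their kernels have the same dimension. To see this, decompose the two-site space as $\mathrm{Im}(W)\oplus\mathrm{Im}(W)^\perp$. The second summand lies in both kernels. On $\mathrm{Im}(W)$, the kernel of $h_{i,j}$ is $W M^{-1}\ker Q$ and the kernel of $\tilde h_{i,j}$ is $W\ker Q$; these have equal dimension because $M$ is invertible. Let $k$ be this common kernel dimension.

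The nonzero spectrum of $\tilde h_{i,j}$ is exactly $\{1\}$. By Weyl's inequality, the ordered eigenvalues of $h_{i,j}$ lie within $8\delta$ of those of $\tilde h_{i,j}$. The lowest $k$ eigenvalues of $h_{i,j}$ are exactly $0$, so every remaining eigenvalue is at least $1-8\delta$. This gives $\Delta_{h_{i,j}}\ge 1-8\delta$.

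\textbf{Main obstacle.} The delicate step is this kernel-dimension matching. Without it, Weyl's inequality alone would only say that some eigenvalues lie within $8\delta$ of $0$, not that they are exactly $0$. The argument therefore needs the exact frustration-free structure $h_{i,j}\ket{\phi}=0$ on $S_e$ together with the invertibility of $M$. The other point to check carefully is the convention for $h$ on $S_i^\perp\otimes S_j^\perp$: if the figure instead adds projectors onto $S_i^\perp$, $S_j^\perp$, those directions move into the excited sector of both operators with energy $1$ in $\tilde h_{i,j}$, and the same argument goes through.
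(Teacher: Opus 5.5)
Your proposal is correct and follows the paper's route: replace each tensor by its isometric part (your polar factor $V_i$ is the paper's $U_iV_i^\dagger$ from the SVD), bound $\Vert h_{i,j}-\tilde h_{i,j}\Vert$ by telescoping, bound $\Vert h_{i,j}\Vert\le(1-\delta)^{-2}\le 4$ directly, and obtain the gap from Weyl's inequality. Your explicit check that $\ker h_{i,j}$ and $\ker\tilde h_{i,j}$ have the same dimension, using the invertibility of $M$, supplies a step the paper leaves implicit when it invokes Weyl's theorem.
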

\begin{proof}
Suppose $A_i$ has a singular value decomposition $A_i=U_i \Sigma_i V_i^\dagger$ for unitaries $U_i, V_i$. The $\delta$-isometry condition implies $\norm{A_i^\dagger A_i - I}=\left\Vert \Sigma_i^2 - I \right\Vert \le \delta$, which yields $$\left\Vert \Sigma_i - I \right\Vert \leq\max \left ( 1 - \sqrt{1-\delta}, \sqrt{1+\delta} -1 \right )\leq \max \left (\frac{\delta }{1+ \sqrt{1-\delta}}, \frac{\delta}{1+\sqrt{1+\delta}}\right )\le \delta.$$ Let $\Tilde{h}_{i,j}$ be defined by replacing the local tensors with perfect isometries:
\begin{equation*}
    \Tilde{h}_{i,j}=\vcenter{\hbox{\includegraphics[scale=0.5]{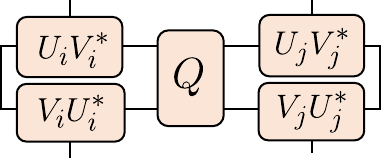}}}.
\end{equation*}
By telescoping the difference $h_{i,j} - \Tilde{h}_{i,j}$ and using the inequality $\left\Vert A_i^{-1} \right\Vert \le (1-\delta)^{-1/2} \le \sqrt{2}$, we obtain:
\begin{align}
    \smallnorm{ h_{i,j}-\Tilde{h}_{i,j} } &\le \left[(1-\delta)^{-\frac32}+(1-\delta)^{-1}+(1-\delta)^{-\frac12} +1\right]\delta \le 8\delta.
\end{align}
Note that $\Tilde{h}_{i,j}^2=\Tilde{h}_{i,j}$ and thus the spectrum of $\Tilde{h}_{i,j} \in \{0,1\}$ and the ground state energy of $h_{i,j}$ is $0$, Weyl's theorem guarantees the gap of $h_{i,j}$ satisfies:
\begin{equation}\label{eq:gap}
    \Delta_{h_{i,j}} \ge 1- 8 \delta.
\end{equation}
Furthermore, bounding the norm directly using $\left\Vert Q \right\Vert = 1$ yields:
\begin{equation}\label{eq:hij_norm}
    \left\Vert h_{i,j} \right\Vert \le \smallnorm{ A_i^{-1}}^2 \smallnorm{ A_j^{-1}}^2 \left\Vert Q \right\Vert \le \left(\frac{1}{1-\delta}\right)^2 \le 4.
\end{equation}
\end{proof}


Next we prove that the parent Hamiltonian possesses a non-vanishing spectral gap that is uniform in the system size.
\begin{lemma}[Uniform Spectral Gap]
\label{lemma:spectral_gap}
For $\delta < \frac{1}{4\mathfrak{d}_0 - 3}$, the Parent Hamiltonian $H$ exhibits a strictly positive spectral gap 
$$\Delta_H \geq \frac{1}{1+\delta}-\frac{4\delta(\mathfrak{d}_0-1)}{1-\delta^2} =\Omega(1) \text{ as }\mathfrak{d}_0 \to \infty,$$
where $\mathfrak{d}_0$ is the maximum degree of the graph.
\end{lemma}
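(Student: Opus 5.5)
The plan is to reduce to a sum of edge projectors and control their pairwise overlaps through the near-isometry of the tensors; this is the standard ``$H^2 \succeq \gamma H$'' argument. For each edge $e=(i,j)$ let $\Pi_e := P_{S_e}^\perp$ be the orthogonal projector onto the complement of the local ground space $S_e$ of Eq.~\eqref{eq:projector}. Since $h_e \succeq 0$ and $\ker h_e = S_e$, I will first establish the operator comparison
\[
\tfrac{1}{1+\delta}\,\Pi_e \preceq h_e \preceq \tfrac{1}{1-\delta}\,\Pi_e .
\]
To do this, write $h_e$ as $(A_i^{-1}\otimes A_j^{-1})^\dagger\, Q\, (A_i^{-1}\otimes A_j^{-1})$ on the support of $P_{S_i}\otimes P_{S_j}$, acting as the identity on the remaining virtual legs. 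On $\text{Im}(A_i\otimes A_j)$, the smallest nonzero eigenvalue of $(A_i^{-1})^\dagger A_i^{-1}$ is at least $\|A_i\|^{-2}\ge (1+\delta)^{-1}$. This sandwich is what produces the leading term $1/(1+\delta)$ in the claimed gap. Consequently $H \succeq \frac{1}{1+\delta}\sum_e \Pi_e$. Both operators annihilate $\ket{\psi}$, so the gap of $H$ is at least $\frac{1}{1+\delta}$ times the gap of $P := \sum_e \Pi_e$. Here it is cleaner to use the directly squared form $H^2 \succeq \gamma H$, which implies $\Delta_H \ge \gamma$.

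Next I expand
\[
H^2 = \sum_e h_e^2 + \sum_{e\neq e'} h_e h_{e'} .
\]
Terms with disjoint edges commute and are products of commuting positive operators, hence PSD, so they can be dropped. For the diagonal terms, $h_e^2 \succeq \Delta_{h_e} h_e$, and I will use the sharper form $h_e^2 \succeq \frac{1}{1+\delta} h_e$, obtained from the sandwich above, rather than the weaker $1-8\delta$ of Lemma~\ref{lemma:local_ham_bounds}. What remains are the pairs $e\neq e'$ sharing a vertex. For these I aim to prove
\[
h_e h_{e'} + h_{e'} h_e \succeq -\varepsilon\,(h_e + h_{e'}), \qquad \varepsilon = \frac{4\delta}{1-\delta^2}\cdot\tfrac{1}{2}\cdot 2 .
\]
The constants will be tuned so that, summed over the at most $\mathfrak{d}_0-1$ neighbours of each edge at each endpoint, they contribute exactly $-\frac{4\delta(\mathfrak{d}_0-1)}{1-\delta^2}\,h_e$. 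This yields $H^2 \succeq \big(\frac{1}{1+\delta}-\frac{4\delta(\mathfrak{d}_0-1)}{1-\delta^2}\big)H$, and hence the stated gap. The condition $\delta<1/(4\mathfrak{d}_0-3)$ is then precisely what makes this quantity positive; I will check this at the end by a short algebraic rearrangement.

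The anticommutator bound is the main obstacle. I would try first to compare with the isometric case. When all $A_i$ are isometries, $h_e$ and $h_{e'}$ are commuting projectors, namely $\tilde h_e$ of Lemma~\ref{lemma:local_ham_bounds}. This holds because they act as $Q$ on distinct virtual bonds conjugated by the same isometry at the shared vertex. So $\{\tilde h_e,\tilde h_{e'}\}\succeq 0$, and the defect comes only from $A_i^{-1}(A_i^{-1})^\dagger \neq I$ at the shared site $i$. Writing $h_e = B_e^\dagger B_e$ with $B_e = Q_e (A^{-1})$, the cross term is $B_e^\dagger B_e B_{e'}^\dagger B_{e'}$. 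In it, the inner factor contains $(A_i^{-1})(A_i^{-1})^\dagger = (A_i^\dagger A_i)^{-1}$ on the virtual space, which differs from $I$ by at most $\delta/(1-\delta)$ in norm. The isometric part gives a PSD contribution, since $Q_e$ and $Q_{e'}$ commute on different bonds. The remainder I will bound by Cauchy--Schwarz in the form $X^\dagger Y + Y^\dagger X \succeq -(\eta X^\dagger X + \eta^{-1} Y^\dagger Y)$, with $X,Y$ proportional to $B_e$ and $B_{e'}$ and the $\delta/(1-\delta)$ factor absorbed into $\eta$. The additional $1/(1+\delta)$ from $\|A\|^2$ combines with it to give $\delta/(1-\delta^2)$ up to the constant $4$. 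Keeping the bookkeeping of these constants exact, so as to reproduce $4\delta(\mathfrak{d}_0-1)/(1-\delta^2)$ rather than a looser multiple, is where I expect most of the care to be needed. The asymptotic claim $\Omega(1)$ then follows whenever $\delta = O(1/\mathfrak{d}_0)$ with a sufficiently small constant.
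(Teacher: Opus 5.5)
Your architecture is the paper's. You use $H^2\succeq\gamma H$, a diagonal bound $h_e^2\succeq c\,h_e$, and drop disjoint pairs as commuting PSD products. Overlapping pairs $e=(a,i)$, $e'=(i,j)$ are handled via $h_eh_{e'}=A^{-\dagger}Q_eK_i^{-1}Q_{e'}A^{-1}$ (with $A=A_a\otimes A_i\otimes A_j$, $K_i=A_i^\dagger A_i$), split into a PSD commuting-projector part plus a remainder controlled by an operator Cauchy--Schwarz inequality. The paper shifts by $\nu=(1+\delta)^{-1}$ and uses $(Q_e+Q_{e'})Y(Q_e+Q_{e'})\succeq0$. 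Your centering at $I$ is the same trick with roughly half the constant ($\frac{\delta}{1-\delta}$ versus $\frac{2\delta}{1-\delta^2}$).

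The step that fails is the sandwich $\frac{1}{1+\delta}\Pi_e\preceq h_e$. Since $h_e$ carries inverse tensors at \emph{both} endpoints, its nonzero spectrum is that of $Q(K_i^{-1}\otimes K_j^{-1})Q$ on $\mathrm{Im}\,Q$. So only $h_e^2\succeq(1+\delta)^{-2}h_e$ follows, and this is sharp. For a single edge with $A_i=\sqrt{1+\delta}\,U_i$, $U_i$ unitary, one has $H=(1+\delta)^{-2}\tilde h_e$ and $\Delta_H=(1+\delta)^{-2}<(1+\delta)^{-1}$. Hence $h_e^2\succeq\frac{1}{1+\delta}h_e$ is false in general, and the statement itself fails for $\mathfrak{d}_0=1$. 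The paper's proof asserts the same diagonal bound with the same one-site reasoning.

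Your off-diagonal constants are also inconsistent. Each edge overlaps at most $2(\mathfrak{d}_0-1)$ others, and each overlapping pair contributes $-\varepsilon(h_e+h_{e'})$. So the target $-\frac{4\delta(\mathfrak{d}_0-1)}{1-\delta^2}$ requires $\varepsilon\le\frac{2\delta}{1-\delta^2}$, not the $\frac{4\delta}{1-\delta^2}$ you wrote. Moreover, returning to physical space costs a factor rather than gaining $\frac{1}{1+\delta}$: $A^{-\dagger}Q_eA^{-1}=h_e\otimes A_j^{-\dagger}A_j^{-1}\preceq\frac{1}{1-\delta}h_e$. The paper silently drops this factor as well.

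Tracked honestly, your centered split gives $\{h_e,h_{e'}\}\succeq-\frac{\delta}{(1-\delta)^2}(h_e+h_{e'})$, hence
\[
\Delta_H\ \ge\ \frac{1}{(1+\delta)^2}-\frac{2\delta(\mathfrak{d}_0-1)}{(1-\delta)^2}.
\]
This dominates the stated expression exactly when $(\mathfrak{d}_0-1)(2-6\delta)(1+\delta)\ge(1-\delta)^2$. That holds for every $\mathfrak{d}_0\ge2$ once $\delta<\frac{1}{4\mathfrak{d}_0-3}\le\frac15$. So with these corrections your route proves the lemma, including the $\Omega(1)$ claim, for all $\mathfrak{d}_0\ge2$.

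The factor of two saved by centering at $I$ is precisely the slack that absorbs the lost diagonal constant. The paper's $\nu$-shift lacks this slack once its two slips are repaired: it then yields $\frac{1}{(1+\delta)^2}-\frac{4\delta(\mathfrak{d}_0-1)}{(1-\delta)^2(1+\delta)}$, which falls below the stated bound.
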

\begin{proof}
We lower-bound the gap $\Delta_H$ by showing $H^2 - \Delta_H H \succeq 0$ via the martingale method~\cite{nachtergaele1996spectral}. 

The parent Hamiltonian terms can be written as $h_{i,j} = (A_i\otimes A_j)^{-\dagger} Q_{i,j} (A_i\otimes A_j)^{-1}$, where $Q_e$ are projectors acting on the virtual bonds. Using the $\delta-$isometry condition, we get $\smallnorm{A_i^\dagger A_i}\leq 1+\delta$, which implies $\smallnorm{(A_i^\dagger A_i)^{-1}}\geq 1/(1+\delta)$, which finally leads to 
\begin{equation}\label{eq:hsquare_norm}
    h_e^2 \succeq \frac{1}{1+\delta} h_e
\end{equation}
Expanding $H^2$ yields $\sum_e h_e^2 + \sum_{e_1 \neq e_2} h_{e_1} h_{e_2}$. Using Eq.~\eqref{eq:hsquare_norm}, we obtain:
\begin{equation}\label{eq:sum_ham_terms}
    H^2 - \Delta_H H \succeq \left( \frac{1}{1+\delta} - \Delta_H \right) \sum_e h_e + \sum_{e_1 \neq e_2} h_{e_1}h_{e_2}.
\end{equation}
If $e_1, e_2$ do not overlap, then $h_{e_1}$ and $h_{e_2}$ commute which together with the fact that they are both PSD implies $h_{e_1} h_{e_2} \succeq 0$. For overlapping edges $e_1 = (a,i)$ and $e_2 = (i,j)$, we evaluate the anti-commutator by mapping it to the virtual space.  Even though $Q_{e_1}$ and $Q_{e_2}$ share the tensor at site $i$, they act on different virtual bonds and therefore perfectly commute ($[Q_{e_1}, Q_{e_2}] = 0$).

Defining $X_i = A_i^{-1} A_i^{-\dagger}$ and $A=A_a \otimes A_i\otimes A_j$, the anti-commutator in the physical space is:
\begin{equation}
    h_{e_1} h_{e_2} + h_{e_2} h_{e_1} = A^{-\dagger} \left( Q_{e_1} X_i Q_{e_2} + Q_{e_2} X_i Q_{e_1} \right) A^{-1}.
\end{equation}
Now let $\mu = 1/(1-\delta)$ and $\nu = 1/(1+\delta)$. The $\delta$-isometry condition ensures $\nu I \preceq X_i \preceq \mu I$. Defining the positive semi-definite operator $Y_i = X_i - \nu I \succeq 0$ and substituting $X_i = Y_i + \nu I$ yields:
\begin{equation}
    Q_{e_1} X_i Q_{e_2} + Q_{e_2} X_i Q_{e_1} = 2\nu Q_{e_1} Q_{e_2} + Q_{e_1} Y_i Q_{e_2} + Q_{e_2} Y_i Q_{e_1}.
\end{equation}
For any positive semi-definite $Y_i \succeq 0$ and Hermitian $Q_{e_1}, Q_{e_2}$, the square $(Q_{e_1} + Q_{e_2}) Y_i (Q_{e_1} + Q_{e_2}) \succeq 0$. Expanding this and rearranging gives a strict lower bound for the cross terms:
\begin{equation}
    Q_{e_1} Y_i Q_{e_2} + Q_{e_2} Y_i Q_{e_1} \succeq - (Q_{e_1} Y_i Q_{e_1} + Q_{e_2} Y_i Q_{e_2}).
\end{equation}
Using the upper bound $Y_i \preceq (\mu - \nu)I$ and the projector property $Q_e^2 = Q_e$, we bound $-Q_{e_1} Y_i Q_{e_1} \succeq -(\mu - \nu)Q_{e_1}$. Therefore:
\begin{equation}
    Q_{e_1} X_i Q_{e_2} + Q_{e_2} X_i Q_{e_1} \succeq 2\nu Q_{e_1} Q_{e_2} - (\mu - \nu)(Q_{e_1} + Q_{e_2}).
\end{equation}
Since $2\nu Q_{e_1} Q_{e_2} \succeq 0$, we can drop it. Mapping back to the physical space by multiplying left and right by $A^{-\dagger}$ and $A^{-1}$ recovers the operator inequality:
\begin{equation}
    \label{eq:martingale}
    h_{e_1} h_{e_2} + h_{e_2} h_{e_1} \succeq -(\mu - \nu)(h_{e_1} + h_{e_2}).
\end{equation}
Let $z$ be the maximum number of overlapping edges for any single edge. Note that $z \le 2(\mathfrak{d}_0-1)$.
Now Eqs.~\eqref{eq:sum_ham_terms} and~\eqref{eq:martingale} together imply
\begin{equation}
    H^2 - \Delta_H H \succeq \left( \frac{1}{1+\delta} - \Delta_H \right) \sum_e h_e - z (\mu - \nu) \sum_e h_e.
\end{equation}
This directly means we satisfy $H^2 - \Delta_H H \succeq 0$ if:
\begin{equation}
    \mu - \nu \le \frac{1}{z} \left( \frac{1}{1+\delta} - \Delta_H \right).
\end{equation}
Solving for $\Delta_H$ and substituting $\mu - \nu = \frac{2\delta}{1-\delta^2}$ yields:
\begin{equation}
    \Delta_H \ge \frac{1}{1+\delta} - z \left( \frac{2\delta}{1-\delta^2} \right).
\end{equation}
To ensure the uniform gap is strictly positive ($\Delta_H > 0$), we require:
\begin{equation}
    \frac{1-\delta - 2\delta z}{1-\delta^2} > 0 \implies \delta < \frac{1}{2z + 1}.
\end{equation}
Since an edge overlaps with at most $z \le 2(\mathfrak{d}_0-1)$ other edges, guaranteeing $\delta < \frac{1}{4\mathfrak{d}_0 - 3}$ ensures the spectral gap $\Delta_H$ remains strictly positive and independent of $N$.
\end{proof}

We now state a technical lemma.
\begin{lemma}
    If we consider the neighboring subspaces $S_{i,j}$ and $S_{a,i}$, then \begin{align*}
        P_{S^\perp_{i,j}}P_{S_{a,i}}= P_{S^\perp_{i,j}} A_{S_{i,j}, S_{a,i}} P_{S^\perp_{i,j}}+ P_{S^\perp_{i,j}} B_{S_{i,j}, S_{a,i}} P_{S^\perp_{a,i}},
    \end{align*}
    where \begin{align*}
        A_{S_{i,j}, S_{a,i}}&=\sum\limits_{n=0}^\infty P_{S_{a,i}}\left((P_{S_{i,j}}P_{S_{a,i}})^n  - P_{S_{i,j}\cap S_{a,i}}\right)\\
        B_{S_{i,j}, S_{a,i}}&=\sum\limits_{n=0}^\infty \left((P_{S_{a,i}}P_{S_{i,j}})^n - P_{S_{i,j}\cap S_{a,i}}\right),
    \end{align*}
    and for $S_{i,j}$ and $S_{a,i}$ as defined in the text, \begin{equation*}
        \norm{A_{S_{i,j}, S_{a,i}}}, \norm{B_{S_{i,j}, S_{a,i}}}\leq \frac{1}{(1-138\delta)}.
    \end{equation*}\label{lemma:proj}
\end{lemma}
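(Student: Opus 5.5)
The plan is to split the statement into an exact algebraic identity, which holds for any two orthogonal projectors, and a quantitative estimate on how far the two projectors are from commuting. Write $P = P_{S_{a,i}}$, $R = P_{S_{i,j}}$, $T = P_{S_{i,j}\cap S_{a,i}}$, and $R^\perp = I-R$, $P^\perp = I-P$.

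\emph{Step 1: convergence.} The standard facts are that $T$ commutes with $P$ and $R$, and that $(RP)^n - T = (RP - T)^n$ for $n\ge 1$. Moreover $\norm{RP - T} = c$, where $c$ is the cosine of the Friedrichs angle between $S_{i,j}$ and $S_{a,i}$. Hence $\norm{(RP)^n - T}\le c^{n}$ (in fact $c^{2n-1}$). If $c<1$, the series defining $A_{S_{i,j},S_{a,i}}$ and $B_{S_{i,j},S_{a,i}}$ converge absolutely and geometrically. The $n=0$ terms are $I-T$ and $P-T$, which have norm at most $1$. The geometric tail gives
\[
\norm{A_{S_{i,j},S_{a,i}}},\ \norm{B_{S_{i,j},S_{a,i}}} \le 1+\sum_{n\ge1}c^{n} = \frac{1}{1-c}.
\]
So the norm bound reduces to showing $c\le 138\,\delta$.

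\emph{Step 2: the identity.} Since $T\preceq R$, we have $R^\perp T=0$, so $R^\perp P = R^\perp (P-T)$. Then I telescope $P - T = \sum_{n\ge 0}\big[P(RP)^n - P(RP)^{n+1}\big]$, which is valid because $P(RP)^n\to T$ by Step 1. Each summand factors as $P(RP)^n(I-RP)$. I split $I - RP = R^\perp + (RP^\perp - R^\perp P)$, or more conveniently $I-RP = R^\perp P + P^\perp$ after inserting $I=P+P^\perp$ on the right. This sorts every term into one ending in $R^\perp$ and one ending in $P^\perp$. Subtracting $T$ inside each sum costs nothing, because $T R^\perp=0$ and $T P^\perp = 0$. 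Finally, multiplying on the left by $R^\perp$ and collecting terms gives the claimed form $R^\perp A R^\perp + R^\perp B P^\perp$. I expect only bookkeeping here to match exactly the stated $A$ (with the leading $P$) and $B$. The identities $(PR)^nP = P(RP)^n$ let me move the $P$ factors to whichever side is needed.

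\emph{Step 3: the main obstacle, bounding $c=O(\delta)$.} In the isometric case, where each $A_i$ is replaced by an isometry as in Lemma~\ref{lemma:local_ham_bounds}, the ground spaces pull back to virtual-space kernels of $Q_{a,i}$ and $Q_{i,j}$. These commute, as used in Lemma~\ref{lemma:spectral_gap}, so the ideal projectors $\tilde P,\tilde R$ commute and $\tilde R\tilde P = \tilde T$, i.e.\ $c=0$. For general $\delta$, I will compare $P$ with $\tilde P$ and $R$ with $\tilde R$. The kernels of $h_{a,i}$ and $\tilde h_{a,i}$ are controlled by Lemma~\ref{lemma:local_ham_bounds}: there $\smallnorm{h-\tilde h}\le 8\delta$ and both have gap at least $1-8\delta$. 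A Davis--Kahan argument then gives $\norm{P-\tilde P}\le 8\delta/(1-8\delta)$, and likewise for $R$.

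The delicate part is the intersection projector $T$, since intersections are not stable under perturbation in general. To handle it, I will use the exact description of the spaces. Both $S_{a,i}$ and $S_{i,j}$ are images under the invertible map $A_a\otimes A_i\otimes A_j$ of commuting virtual subspaces $V_1,V_2$, so $S_{i,j}\cap S_{a,i}$ is the image of $V_1\cap V_2$. Consequently $T$ is itself within $O(\delta)$ of $\tilde T$, with constants coming from $\norm{A^{-1}}\le\sqrt2$. Then
\[
\norm{RP - T}\le \norm{\tilde R\tilde P-\tilde T} + \norm{R-\tilde R}+\norm{P-\tilde P}+\norm{T-\tilde T} = O(\delta).
\]
Tracking the constants in these perturbation estimates with $\delta\le 1/2$ should produce the stated $138\,\delta$.
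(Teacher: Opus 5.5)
Your Steps 1 and 2 follow the paper: the alternating-projections bound $\|(RP)^n-T\|\le c^n$, then a geometric series from repeatedly inserting resolutions of the identity. Step 3 is a genuinely different route to $c=O(\delta)$, and it is valid.

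\emph{The paper's route.} It bounds the cosine of the Friedrichs angle by direct computation. It writes $v\in S_{i,j}\ominus(S_{i,j}\cap S_{a,i})$ and $u\in S_{a,i}\ominus(S_{i,j}\cap S_{a,i})$ as images of virtual vectors $X_{i,j}$, $X_{a,i}$. Orthogonality to the intersection makes certain partial contractions $O(\delta)$. Expanding $\langle v|u\rangle$ with $A_x^\dagger A_x=I+E_x$ then gives $|\langle v|u\rangle|\le\frac{69}{4}\delta\|X_{i,j}\|\|X_{a,i}\|$. Finally it lower-bounds $\|u\|^2,\|v\|^2$ by $\frac18\|X\|^2$.

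\emph{Your route.} You perturb around the polar parts $W_x=U_xV_x^\dagger$, for which the two ground-space projectors commute because $Q_{a,i}$ and $Q_{i,j}$ do. You correctly see that the intersection projector is stable here only because $S_{i,j}\cap S_{a,i}$ is the image of $V_1\cap V_2$ under an injective map. You can even treat $P$, $R$, $T$ uniformly and skip Davis--Kahan. Write $A=WM$ with $M=(A^\dagger A)^{1/2}$; then $A(V)=W(MV)$ and $\|P_{MV}-P_V\|\le\|M-I\|/(1-\|M-I\|)$. Using only $\|M_x-I\|\le\delta$, this gives $c\le 7\delta+O(\delta^2)$, far better than $138\delta$.

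\emph{Comparison.} Your argument explains why the angle is small: the virtual projectors commute, and only non-isometry spoils this. It also gives a sharper constant with no diagrammatic bookkeeping. The paper's argument stays at the level of explicit tensor contractions.

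Two harmless caveats. First, your estimates need $\delta$ small (e.g.\ $8\delta<1$, and $(1+\delta)^3-1<1$ for the three-site $M$), not all $\delta\le 1/2$. This is fine, since the claim only has content for $\delta<1/138$. Second, the description via $A_a\otimes A_i\otimes A_j$ is literally valid only on the block $S_a\otimes S_i\otimes S_j$. On every other block (some site in $S_x^\perp$, where the pseudo-inverse vanishes), one of $P$, $R$ acts as the identity. So they commute there, and only the main block matters.

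\emph{Correction in Step 2.} Your first split is wrong: $R^\perp+(RP^\perp-R^\perp P)=R^\perp P^\perp+RP^\perp=P^\perp\neq I-RP$. The second, $I-RP=R^\perp P+P^\perp$, is correct but does not sort the terms. Indeed $P(RP)^nP^\perp=(PR)^nPP^\perp=0$, and the surviving terms $P(RP)^nR^\perp P$ end in $P$. Use $I-RP=R^\perp+RP^\perp$ instead. Then $P(RP)^n(I-RP)=P(RP)^nR^\perp+(PR)^{n+1}P^\perp$, which yields
\[
R^\perp P=R^\perp\Big[\sum_{n\ge0}\big(P(RP)^n-T\big)\Big]R^\perp+R^\perp\Big[\sum_{n\ge1}\big((PR)^n-T\big)\Big]P^\perp .
\]
The second sum starts at $n=1$, so no bookkeeping will reproduce the stated $B$ exactly. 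Its $n=0$ term $I-T$ adds $R^\perp P^\perp$, which is nonzero in general: for commuting $P,R$ the stated right-hand side equals $R^\perp$, not $R^\perp P$. The paper's own expansion has the same off-by-one; its first term ending in $P_{S^\perp_{a,i}}$ is $P_{S^\perp_{i,j}}P_{S_{a,i}}P_{S_{i,j}}P_{S^\perp_{a,i}}$. This is harmless: the corrected $B$ obeys $\|B\|\le c/(1-c)$, so the norm bound and its later use are unaffected.
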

\begin{proof} We note that
    \begin{align}P_{S^\perp_{i,j}}P_{S_{a,i}} &=P_{S^\perp_{i,j}}P_{S_{a,i}}P_{S^\perp_{i,j}}+P_{S^\perp_{i,j}}P_{S_{a,i}}P_{S_{i,j}}\notag\\
        &=P_{S^\perp_{i,j}}P_{S_{a,i}}P_{S^\perp_{i,j}}+P_{S^\perp_{i,j}}P_{S_{a,i}}P_{S_{i,j}}P_{S^\perp_{a,i}} +P_{S^\perp_{i,j}}P_{S_{a,i}}P_{S_{i,j}}P_{S_{a,i}}\notag\\
        &=P_{S^\perp_{i,j}}\left \{\sum\limits_{n=0}^\infty P_{S_{a,i}}(P_{S_{i,j}}P_{S_{a,i}})^n\right \}P_{S^\perp_{i,j}} + P_{S^\perp_{i,j}}\left \{\sum\limits_{n=0}^\infty (P_{S_{a,i}}P_{S_{i,j}})^n\right \}P_{S^\perp_{a,i}}\notag\\
         &\overset{(1)}{=} P_{S^\perp_{i,j}} A_{S_{i,j}, S_{a,i}} P_{S^\perp_{i,j}} + P_{S^\perp_{i,j}} B_{S_{i,j}, S_{a,i}} P_{S^\perp_{a,i}},
        \end{align}
        where (1) holds because by definition of the projector $P_{S_{i,j}\cap S_{a,i}}$, we get $P_{S^\perp_{a,i}}P_{S_{i,j}\cap S_{a,i}}=P_{S^\perp_{i,j}}P_{S_{i,j}\cap S_{a,i}}=0.$ Also $ A_{S_{i,j}, S_{a,i}},  B_{S_{i,j}, S_{a,i}}$ are as defined in the lemma statement.

By alternating projectors' lemma~\cite{bhatia1997matrix}, we have 
\begin{align}
    \norm{(P_{S_{i,j}}P_{S_{a,i}})^n  - P_{S_{i,j}\cap S_{a,i}}},\norm{(P_{S_{a,i}}P_{S_{i,j}})^n - P_{S_{i,j}\cap S_{a,i}}} \leq (\cos{\theta})^n,
\end{align}
where \begin{align}
    \cos{\theta}=\max\limits_{v,u}\frac{\langle v,u\rangle}{\norm{v}\norm{u}},  v\in S_{i,j}\cap( S_{i,j} \cap  S_{a,i})^\perp,  u\in S_{a,i}\cap( S_{i,j} \cap  S_{a,i})^\perp.\label{eq:cos1}
\end{align}
We next bound $\cos{\theta}$. Note that since  $v \in S_{i,j}$ and $u \in S_{a,i}$, $\exists \ket{X_{i,j}}, \ket{X_{a,i}}$ such that
\begin{align}
    \ket{v}=  \vcenter{\hbox{\includegraphics[scale=0.5]{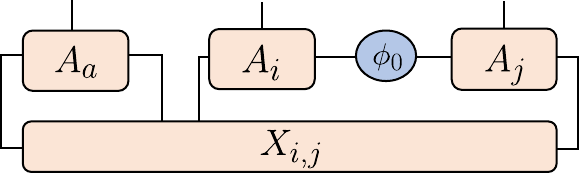}}},\ket{u}= \vcenter{\hbox{\includegraphics[scale=0.5]{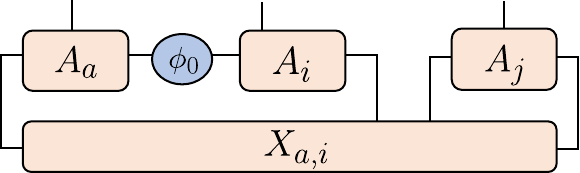}}}.
\end{align}
Then \begin{equation}
    \braket{v}{u}=\vcenter{\hbox{\includegraphics[scale=0.5]{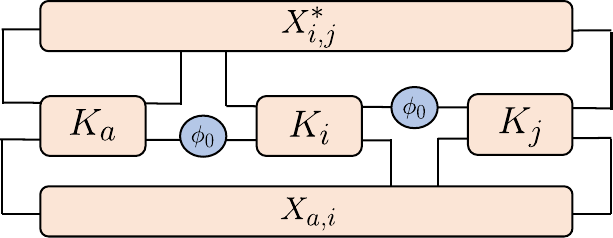}}}.
\end{equation}
Also note that since $v,u \in (S_{i,j} \cap S_{a,i})^\perp$, 
\begin{align}
    \vcenter{\hbox{\includegraphics[scale=0.5]{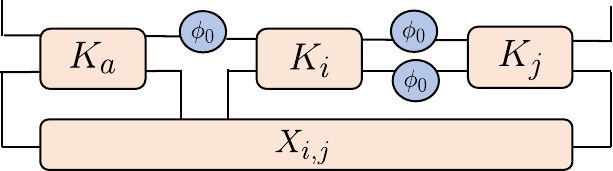}}}&=0,\label{eq:zero_overlap1}\\ \vcenter{\hbox{\includegraphics[scale=0.5]{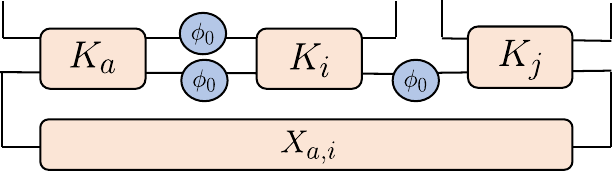}}}&=0.\label{eq:zero_overlap2}
\end{align}
From Eq.~\eqref{eq:zero_overlap1}, we obtain
\begin{align}
    \left\lVert \vcenter{\hbox{\includegraphics[scale=0.5]{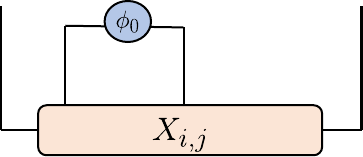}}}\right\rVert &\leq 
    \left\lVert \vcenter{\hbox{\includegraphics[scale=0.5]{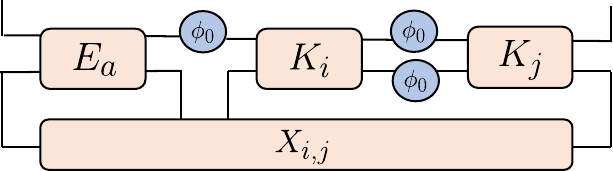}}}\right\rVert  + \left\lVert \vcenter{\hbox{\includegraphics[scale=0.5]{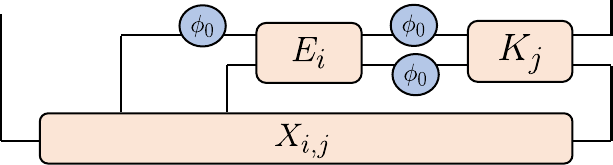}}}\right\rVert\notag + \\\notag &\qquad\left\lVert \vcenter{\hbox{\includegraphics[scale=0.5]{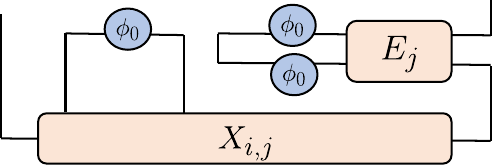}}}\right\rVert\\
    &\leq \norm{\ket{X_{i,j}}}(1+\delta)^2\delta + \norm{\ket{X_{i,j}}}(1+\delta)\delta+\norm{\ket{X_{i,j}}}\delta =\frac{19}{4}\norm{\ket{X_{i,j}}}\delta.
\end{align}
Similarly, from Eq.~\eqref{eq:zero_overlap2}, we get
\begin{align}
    \left\lVert \vcenter{\hbox{\includegraphics[scale=0.5]{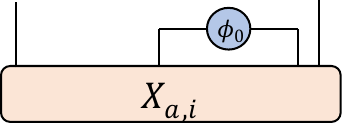}}}\right\rVert \leq 5 \norm{\ket{X_{a,i}}}\delta.
\end{align}
Next we note
\begin{align}
    \vert \braket{v}{u}\vert &\leq  \left\lVert \vcenter{\hbox{\includegraphics[scale=0.5]{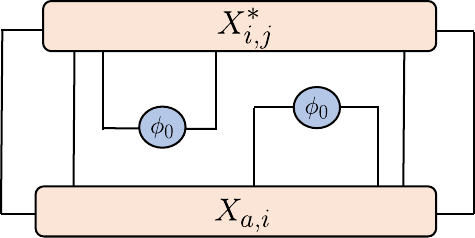}}}\right\rVert + \left\lVert \vcenter{\hbox{\includegraphics[scale=0.5]{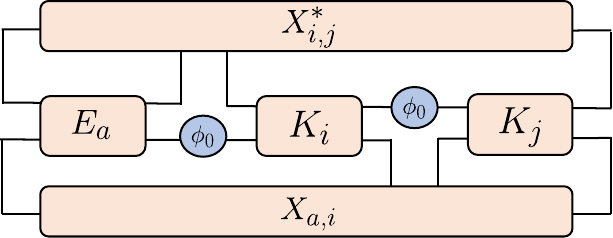}}}\right\rVert +\left\lVert \vcenter{\hbox{\includegraphics[scale=0.5]{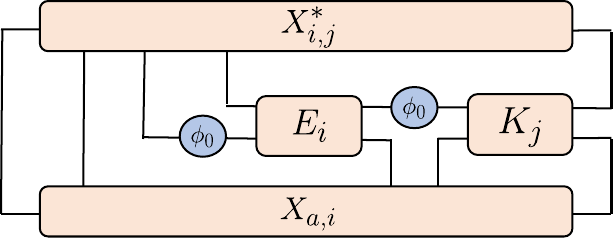}}}\right\rVert + \notag \\ &\qquad\left\lVert \vcenter{\hbox{\includegraphics[scale=0.5]{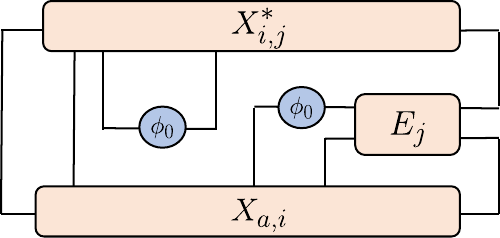}}}\right\rVert \notag 
    \\ &\leq 25 \delta^2 \norm{\ket{X_{a,i}}}\norm{\ket{X_{i,j}}}+ \norm{K_i}\norm{K_j}\norm{E_{a}}\norm{\ket{X_{i,j}}}\norm{\ket{X_{a,i}}} +\norm{B_j}\norm{E_i}\norm{\ket{X_{a,i}}}\norm{\ket{X_{i,j}}}+\norm{E_j}\norm{\ket{X_{a,i}}}\norm{\ket{X_{i,j}}}\notag\\
    &\leq (25\delta^2+(1+\delta)^2\delta +(1+\delta)\delta+\delta )\norm{\ket{X_{i,j}}}\norm{\ket{X_{a,i}}}\notag \\
    &\leq \delta (\frac{25}{2}+\frac94+\frac32 +1)\norm{\ket{X_{i,j}}}\norm{\ket{X_{a,i}}}\notag \\
    &\leq \frac{69\delta}{4}\norm{\ket{X_{i,j}}}\norm{\ket{X_{a,i}}}.
\end{align}
Finally, \begin{align}
    \norm{\ket{u}}^2&=\left\lVert \vcenter{\hbox{\includegraphics[scale=0.5]{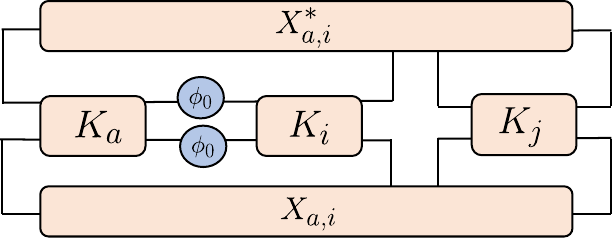}}}\right\rVert \geq (1-\delta)^3\norm{\ket{X_{a,i}}}^2  \geq \frac18 \norm{\ket{X_{a,i}}}^2.
\end{align}
Similarly, \begin{equation}
    \norm{\ket{v}}^2 \geq \frac18\norm{\ket{X_{i,j}}}^2.
\end{equation}
Thus upon combining these bounds and using Eq.~\eqref{eq:cos1},
\begin{equation}
\cos{\theta}\leq \frac{(69/4) \delta \norm{\ket{X_{a,i}}}\norm{\ket{X_{i,j}}}}{1/8 \norm{\ket{X_{a,i}}}\norm{\ket{X_{i,j}}}}\leq 138\delta.
\end{equation}
Assuming $\delta < \frac{1}{138}$, we get
\begin{align}
    \norm{A_{S_{i,j}, S_{a,i}}}, \norm{B_{S_{i,j}, S_{a,i}}} \leq \sum\limits_{n=0}^\infty (\cos{\theta})^n=\frac{1}{1-\cos{\theta}}=\frac{1}{1-138\delta}.
\end{align}
\end{proof}

\subsubsection{Mixing time analysis}\label{sec:lind_peps}
Recall the Parent Lindbladian $\mathcal{L}$ defined in Eq.~\eqref{eq:lindbladian_peps} consisting of jump operators $L_{\alpha,(i,j)}$. Suppose $P_{S_{i,j}}$ is the projector onto the ground-state subspace of $h_{i,j}$ and $P_{S_{i,j}^\perp}$ is the projector onto its orthogonal complement.
These satisfy $L_{\alpha,(i,j)}P_{S_{i,j}}=0$. Thus $L_{\alpha,(i,j)}=L_{\alpha,(i,j)}P_{S_{i,j}^\perp}$. 
\begin{lemma}[Effective Hamiltonian and Interference Bounds]
\label{lemma:heff_and_interference}
Let the effective Hamiltonian be defined as $$H_{\textnormal{eff},(i,j)} = \frac{1}{2}\sum_\alpha L^\dagger_{\alpha,(i,j)}L_{\alpha,(i,j)},$$ and define the interference operator $F_{a,i,j} := \sum_\alpha L_{\alpha,(a,i)}^\dagger [h_{i,j} , L_{\alpha,(a,i)}]$. For $\delta \le 1/2$, the following bounds hold:
\begin{enumerate}
    \item $\left\Vert H_{\textnormal{eff},(i,j)}- \frac{1}{2}h_{i,j} \right\Vert \le 5\delta$.
    \item $\left\Vert F_{a,i,j} \right\Vert \le 48\delta$.
\end{enumerate}
\end{lemma}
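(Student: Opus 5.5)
We will work throughout in the virtual (bond) picture, writing $A_e := A_i\otimes A_j$ (tensored with identity elsewhere) for $e=(i,j)$. In this picture the jump operators are $L_{\alpha,e} = A_e \ket{\phi_0}\!\bra{\phi_\alpha} A_e^{-1}$, and the parent Hamiltonian term is $h_e = A_e^{-\dagger} Q_e A_e^{-1}$ with $Q_e=\sum_{\alpha\geq 1}\ket{\phi_\alpha}\!\bra{\phi_\alpha}$.

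\textbf{Part 1.} Compute
\[
2H_{\mathrm{eff},e} = \sum_\alpha A_e^{-\dagger}\ket{\phi_\alpha}\!\bra{\phi_0}A_e^\dagger A_e\ket{\phi_0}\!\bra{\phi_\alpha}A_e^{-1}.
\]
The middle factor $\bra{\phi_0}A_e^\dagger A_e\ket{\phi_0}$ is a positive operator on the remaining (non-contracted) bonds of sites $i,j$. It equals the identity when the tensors are isometric. Writing $A_e^\dagger A_e = I + \Delta_e$, we have $\norm{\Delta_e}\le (1+\delta)^2-1\le \tfrac52\delta$, and therefore
\[
2H_{\mathrm{eff},e} - h_e = A_e^{-\dagger}\Big(\sum_\alpha \ket{\phi_\alpha}\!\bra{\phi_0}\Delta_e\ket{\phi_0}\!\bra{\phi_\alpha}\Big)A_e^{-1}.
\]
The inner operator equals $Q_e\otimes \bra{\phi_0}\Delta_e\ket{\phi_0}$ (up to ordering of tensor factors), so its norm is at most $\norm{\Delta_e}$. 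Using $\norm{A_i^{-1}}\le(1-\delta)^{-1/2}$ as in Lemma~\ref{lemma:local_ham_bounds}, we get $\norm{2H_{\mathrm{eff}}-h}\le (1-\delta)^{-2}\cdot\tfrac52\delta\le 10\delta$. Dividing by two gives the bound $5\delta$. The only care needed is pinning the constants; $\delta\le 1/2$ absorbs them.

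\textbf{Part 2.} The key observation is that in the exactly isometric case the interference term vanishes. There $h_{i,j}$ pulls back to $Q_{i,j}$ on bond $(i,j)$, while $L_{\alpha,(a,i)}$ pulls back to an operator on bond $(a,i)$. These act on different virtual bonds (the same observation used in the proof of Lemma~\ref{lemma:spectral_gap}), so the commutator is zero. The strategy is therefore to write $F_{a,i,j}$ as a sum of terms each containing a factor $A_i^\dagger A_i - I$ or $A_iA_i^{-1}-P_{S_i}$ at site $i$. Concretely:
\[
h_{i,j}L_{\alpha,(a,i)} = A^{-\dagger}_{ij}Q_{ij}A_{ij}^{-1}A_{ai}\ket{\phi_0}\!\bra{\phi_\alpha}A_{ai}^{-1}.
\]
On site $i$ we have $A_i^{-1}A_i = I$, so the inner product collapses to $A_{ij}^{-\dagger}$ composed with the operator $Q_{ij}$ acting on a state of the form $A_a\ket{\phi_0}\!\bra{\phi_\alpha}A_a^{-1}$. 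These commute virtually, but $A_{ij}^{-\dagger}$ is then left standing on site $i$, so it fails to recombine with the $A_{ai}$ coming from the left.

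Thus the commutator equals
\[
[h_{i,j},L_{\alpha,(a,i)}] = \big(A_i^{-\dagger}X - X'A_i^{-\dagger}\big)\text{-type differences},
\]
and each of these is controlled by $A_i^{-\dagger}A_i^{-1}-I$. The \emph{main obstacle} is how to pull this out. We will insert $A_i^{-\dagger} = A_i(A_i^{-1}A_i^{-\dagger}) = A_i + A_i(X_i-I)$, where $X_i=A_i^{-1}A_i^{-\dagger}$ and $\norm{X_i-I}\le\mu-1\le 2\delta$. We then telescope to get
\[
[h_{i,j},L_{\alpha,(a,i)}] = \sum_\alpha(\text{bounded operator})\cdot(X_i - I)\cdot(\text{bounded operator}),
\]
using that the zeroth-order terms cancel by the virtual commutation.

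It then remains to sum over $\alpha$ together with $L^\dagger_{\alpha,(a,i)}$. Here we avoid a factor $D^2$ by recognising $\sum_\alpha L_{\alpha}^\dagger(\cdot)L_\alpha$-type sums as $A^{-\dagger}_{ai}\big(\sum_\alpha \ket{\phi_\alpha}\!\bra{\phi_0}\cdots\ket{\phi_0}\!\bra{\phi_\alpha}\big)A^{-1}_{ai}$, whose norm is bounded exactly as in Part 1 (the $\alpha$-sum assembles into $Q_{ai}\otimes(\cdot)$).

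Finally, we bound each factor:
\begin{itemize}
\item $\norm{A}\le\sqrt{1+\delta}$,
\item $\norm{A^{-1}}\le\sqrt2$,
\item $\norm{h}\le 4$ (Lemma~\ref{lemma:local_ham_bounds}).
\end{itemize}
Collecting the constants, which are of the same type as in Lemma~\ref{lemma:local_ham_bounds}, yields a bound of the form $c\,\delta$ with $c\le 48$ for $\delta\le1/2$.
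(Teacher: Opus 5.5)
Your approach is the paper's own. Part 1 is identical: your spectral bound $\|A_e^\dagger A_e-I\|\le(1+\delta)^2-1$ is the same quantity $(2+\delta)\delta$ the paper gets by telescoping $K_i\otimes K_j-I$. In Part 2 the paper likewise writes the commutator as a difference of two terms carrying $\widetilde E_i=X_i-I$ or $E_i=K_i-I$ at site $i$, and assembles the $\alpha$-sum into $Q_{ai}\otimes(\cdot)$ to avoid a factor $D^2$. The one concrete gap is the constant in Part 2: you assert it rather than compute it, and it does not follow from your decomposition as written.

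Write $K_\bullet=A_\bullet^\dagger A_\bullet$, $P_\alpha=\ket{\phi_0}\!\bra{\phi_\alpha}_{(a,i)}$ and $\widetilde Q=A_j^{-\dagger}Q_{ij}A_j^{-1}$. Your insertion $A_i^{-\dagger}=A_iX_i$ then gives
\[
F_{a,i,j}=A_{ai}^{-\dagger}\Big(\sum_\alpha P_\alpha^\dagger K_a\big[K_i(X_i-I)P_\alpha-K_iP_\alpha(X_i-I)\big]\Big)\widetilde Q\,A_{ai}^{-1}.
\]
If you bound both bracketed terms alike by $\|K_a\|\|K_i\|\|X_i-I\|$, you get $\|F_{a,i,j}\|\le 2(1+\delta)^2\delta/(1-\delta)^4$. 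This equals $72\delta$ at $\delta=1/2$, so $c\le 48$ is not reached.

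The missing observation is $K_iX_i=A_i^\dagger A_i^{-\dagger}=(A_i^{-1}A_i)^\dagger=I$, i.e.\ $X_i=K_i^{-1}$. In the first term (the one coming from $h_{i,j}L_{\alpha,(a,i)}$), the factor $X_i-I$ sits directly against $K_i$. It therefore collapses to $K_i(X_i-I)=I-K_i$, of norm $\le\delta$ rather than $(1+\delta)\delta/(1-\delta)$. Only the second term, where $P_\alpha$ separates $K_i$ from $X_i-I$, needs $\|K_i\|\|X_i-I\|$. This yields
\[
\|F_{a,i,j}\|\le\frac{1+\delta}{(1-\delta)^3}\Big(\delta+\frac{(1+\delta)\delta}{1-\delta}\Big)\le 48\delta\qquad(\delta\le 1/2),
\]
which is exactly the paper's split into $\|E_i\|$ and $\|K_i\|\|\widetilde E_i\|$.

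There are also a few minor slips:
\begin{itemize}
\item $A_iA_i^{-1}-P_{S_i}$ vanishes identically under the pseudo-inverse convention.
\item $A_i^{-\dagger}A_i^{-1}-I$ should read $A_i^{-1}A_i^{-\dagger}-I$.
\item $\|h\|\le 4$ never enters the bound; what enters is $\|\widetilde Q\|\le\|A_j^{-1}\|^2$.
\end{itemize}
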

\begin{proof}
By definition, $H_{\text{eff},(i,j)}$ is supported exclusively on the orthogonal complement, $H_{\text{eff},(i,j)}= P_{S_{i,j}^\perp}H_{\text{eff},(i,j)} P_{S_{i,j}^\perp}$, and expands to:
\begin{equation}
\label{eq:eff_ham}
    H_{\text{eff},(i,j)} = \frac{1}{2}\sum_\alpha \vcenter{\hbox{\includegraphics[scale=0.5]{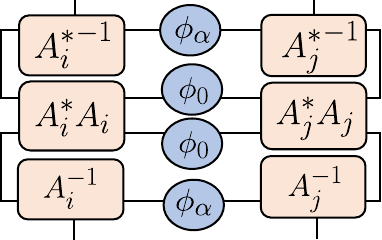}}}.
\end{equation}
Comparing this directly with Eq.~\eqref{eq:parent_ham} and defining the local tensor perturbation $E_i=A^{\dagger}_i A_i - I$, we obtain the exact difference:
\begin{equation}
    H_{\text{eff},(i,j)}- \frac{1}{2}h_{i,j} = \frac{1}{2}\sum_\alpha \left(
\vcenter{
  \hbox{
    \raisebox{-0.5\height}{\includegraphics[scale=0.5]{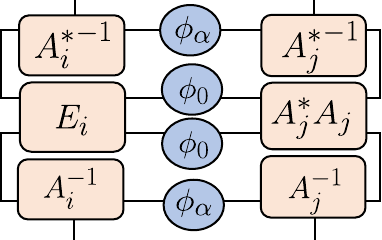}}%
    \,-\,
    \raisebox{-0.5\height}{\includegraphics[scale=0.5]{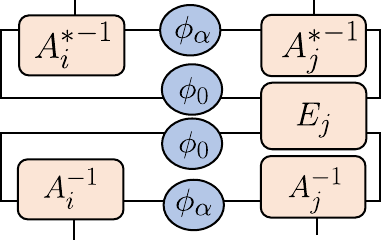}}
  }
}\right).
\end{equation}
Taking the operator norm and using $\left\Vert E_i \right\Vert \le \delta$, we bound the deviation:
\begin{align}\label{eq:Heff_norm}
    \left\Vert H_{\text{eff},(i,j)}- \frac{1}{2}h_{i,j} \right\Vert &\le \frac{1}{2}\left( (\smallnorm{A_j^\dagger A_j } +1)\left\Vert A_i^{-1} \right\Vert^2 \left\Vert A_j^{-1} \right\Vert^2 \left\Vert E_i \right\Vert \right) \notag\\
    &\le \frac{1}{2} (2+\delta)(1-\delta)^{-2}\delta \notag\\
    &\le \frac{1}{2}(2)^2\left(\frac52\right)\delta \le 5 \delta.
\end{align}

To bound the interference operator $F_{a,i,j}$, let $K_i= A_i^\dagger A_i$ and $\widetilde{E}_i=K_i^{-1} - I$. Evaluating the commutator yields:
\begingroup
\allowdisplaybreaks
\begin{align}
  F_{a,i,j} &= \sum_\alpha \left(
    \vcenter{\hbox{\includegraphics[scale=0.5]{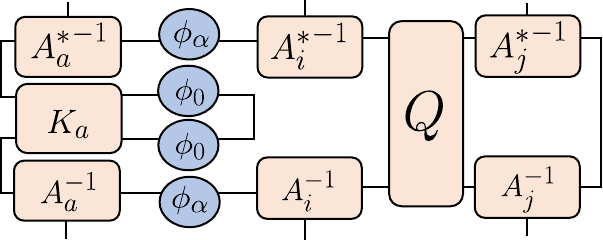}}} - \vcenter{\hbox{\includegraphics[scale=0.5]{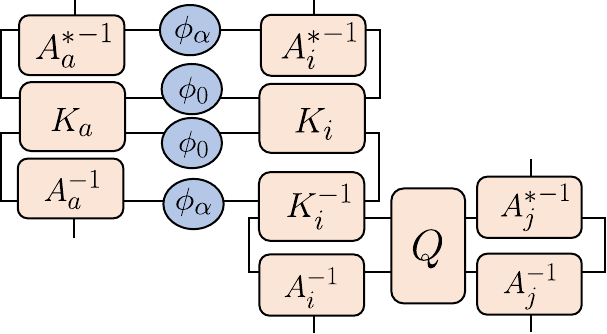}}} \right)\notag\\
  &= \sum_\alpha \left(
    \vcenter{\hbox{\includegraphics[scale=0.5]{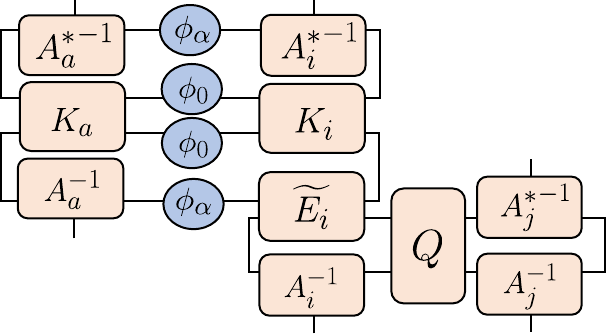}}} - \vcenter{\hbox{\includegraphics[scale=0.5]{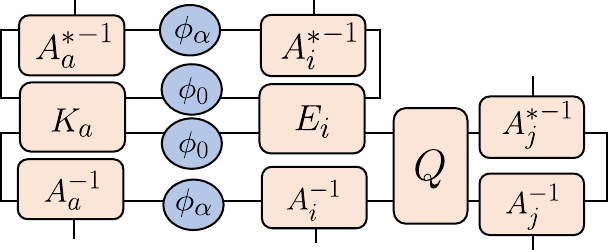}}} \right).
\end{align}
\endgroup
Since $\left\Vert E_i \right\Vert \le \delta$, the inverse perturbation is bounded by $\smallnorm{\widetilde{E}_i } \le \max\left( \frac{1}{1-\delta}-1, 1-\frac{1}{1+\delta} \right) \le \delta$. We thus bound the overall norm of the interference operator as:
\begin{align}\label{eq:Faij_bound}
    \left\Vert F_{a,i,j} \right\Vert &\le \left\Vert \sum_\alpha\braket{\phi_\alpha}{\phi_\alpha} \right\Vert \left\Vert A^{-1}_a \right\Vert^2 \left\Vert A^{-1}_i \right\Vert^2 \left\Vert A^{-1}_j \right\Vert^2 \left\Vert K_a \right\Vert \left( \left\Vert K_i \right\Vert \smallnorm{\widetilde{E}_i }+ \left\Vert E_i \right\Vert \right)\notag \\
    &\le (1-\delta)^{-3}(1+\delta)((1+\delta)2\delta+\delta) \notag\\
    &\le 8\left(\frac32\right)\left( \frac32 \cdot 2\delta+\delta\right) \le 48\delta.
\end{align}
\end{proof}
We observe that the parent Hamiltonian $H=\sum\limits_{(i,j)\in E} h_{i,j}$ of the PEPS $\ket{\psi}$ defined in Eq.~\eqref{eq:PEPS-def} has a uniform (in $N$) gap $\Delta_H$. Thus, for any state $\rho$, 
\begin{equation}
    \tr{}{\rho H}\geq \Delta_H \tr{}{(I-\ket{\psi}\!\bra{\psi})\rho},
\end{equation}
which further implies
 \begin{align}\label{eq:energy_parent_ham}
     \braket{\psi\vert \rho}{\psi}\geq 1-\frac{\tr{}{\rho H}  }{\Delta_H}.
 \end{align}

Hence, we need to analyze the expectation value of the parent Hamiltonian as a function of time, because this will allow us to examine the overlap with its ground state. Denote the state at time $t$ by \begin{equation}
    \rho(t)=e^{\mathcal{L}t}\rho(0),
\end{equation} where $\rho(0)$ is an arbitrary initial state. Also define $h_{i,j}(t)=\tr{}{h_{i,j}\rho(t)}$. We now write a Heisenberg equation of motion to analyze the Lindbladian dynamics. 
\begin{align}
    \label{eq:dynamics_parent_ham_peps}
    \frac{d}{dt} \tr{}{h_{i,j}\rho(t)} = \tr{}{\mathcal{L}_{(i,j)}^\dagger (h_{i,j})\rho(t)}+\sum\limits_{a\in \mathcal{N}_i}\tr{}{\mathcal{L}_{(a,i)}^\dagger (h_{i,j})\rho(t)} +\sum\limits_{k\in \mathcal{N}_j}\tr{}{\mathcal{L}_{(j,k)}^\dagger (h_{i,j})\rho(t)}.
\end{align}
To bound the decay of the total energy, we analyze these terms one by one by establishing bounds for the diagonal dissipation and the off-diagonal interference.

\begin{lemma}[Diagonal Dissipation Bound]\label{lemma:diag_dissipation}
For a local parent Hamiltonian term $h_{i,j}$ evolving under its corresponding local Lindbladian generator $\mathcal{L}_{(i,j)}$, the energy decays as:
\begin{align}\label{eq:term1}
    \tr{}{\mathcal{L}_{(i,j)}^\dagger (h_{i,j})\rho(t)}\leq -\left ( 1-8\delta-\frac{40\delta}{1-8\delta}\right )h_{i,j}(t).
\end{align}
\end{lemma}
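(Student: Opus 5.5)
We compute the Heisenberg-picture action directly:
\[
\mathcal{L}_{(i,j)}^\dagger(h_{i,j}) = \sum_\alpha L_{\alpha,(i,j)}^\dagger h_{i,j} L_{\alpha,(i,j)} - \{H_{\textnormal{eff},(i,j)}, h_{i,j}\}.
\]
The plan is to show that the first (jump) term vanishes identically, and that the anticommutator term is bounded below by a multiple of $h_{i,j}$, using Lemma~\ref{lemma:local_ham_bounds} and Lemma~\ref{lemma:heff_and_interference}.

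\emph{The jump term vanishes.} Each $L_{\alpha,(i,j)}$ has the form $X\ket{\phi_0}_{(i,j)}\!\bra{\phi_\alpha}X^{-1}$, so its range lies in the set of vectors obtained by applying $A_i\otimes A_j$ to something of the form $\ket{\phi_0}_{(i,j)}\otimes(\cdot)$. These vectors are annihilated by $h_{i,j}$, because $h_{i,j}$ first applies $(A_i\otimes A_j)^{-1}$ (recovering $\ket{\phi_0}$ on the bond) and then $Q=I-\ket{\phi_0}\!\bra{\phi_0}$. Hence $h_{i,j}L_{\alpha,(i,j)}=0$ and the sandwiched term drops out. This leaves
\[
\textnormal{Tr}(\mathcal{L}_{(i,j)}^\dagger(h_{i,j})\rho) = -\textnormal{Tr}(\{H_{\textnormal{eff},(i,j)},h_{i,j}\}\rho).
\]

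\emph{Bounding the anticommutator.} Write $H_{\textnormal{eff}}=\tfrac12 h + \Delta$ with $\|\Delta\|\le 5\delta$, by Lemma~\ref{lemma:heff_and_interference}. Then
\[
\{H_{\textnormal{eff}},h\} = h^2 + \{\Delta,h\}.
\]
Both $H_{\textnormal{eff}}$ and $h$ are supported on $S_{i,j}^\perp$, since both annihilate $S_{i,j}$, so $\Delta = P_{S^\perp}\Delta P_{S^\perp}$. Lemma~\ref{lemma:local_ham_bounds} gives $h\succeq (1-8\delta)P_{S^\perp}$, which implies $h^2\succeq(1-8\delta)h$. For the cross term, I would use the operator inequality
\[
\{\Delta,h\}\succeq -\epsilon h^2 - \epsilon^{-1}\Delta^2 ,
\]
together with $\Delta^2\preceq 25\delta^2 P_{S^\perp}\preceq \frac{25\delta^2}{1-8\delta}h$. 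An alternative is to write $\{\Delta,h\}= h^{1/2}(\cdots)h^{1/2}$ via $h^{-1/2}$ on $S^\perp$, with $\|h^{-1/2}\|\le(1-8\delta)^{-1/2}$.

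The cleanest version of this step is the conjugation argument. On $S^\perp$ we have
\[
\{\Delta,h\} = h^{1/2}\bigl(h^{1/2}h^{-1}\Delta + \Delta h^{-1}h^{1/2}\bigr)h^{1/2},
\]
and the middle factor has norm at most $2\|h\|^{1/2}\|\Delta\|\,\|h^{-1/2}\|$. This gives a loss of order $\delta/\sqrt{1-8\delta}$ multiplying $h$.

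To match the stated constant $\frac{40\delta}{1-8\delta}$, I would instead use
\[
|\textnormal{Tr}(\{\Delta,h\}\rho)|\le 2\|\Delta\|\,\|h\|\,\textnormal{Tr}(P_{S^\perp}\rho)\le 2\cdot5\delta\cdot4\,\frac{h_{i,j}(t)}{1-8\delta},
\]
using $\|h\|\le4$ and $P_{S^\perp}\preceq h/(1-8\delta)$. For the first inequality to be valid, I would note that $\{\Delta,h\}=P_{S^\perp}\{\Delta,h\}P_{S^\perp}$, so that
\[
|\textnormal{Tr}(X\rho)|\le\|X\|\,\textnormal{Tr}(P_{S^\perp}\rho)
\]
for $X$ supported on $S^\perp$. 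This last inequality holds for positive $\rho$ by compressing $\rho$ to $P_{S^\perp}\rho P_{S^\perp}$.

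Combining the two bounds gives
\[
\textnormal{Tr}(\mathcal{L}^\dagger_{(i,j)}(h_{i,j})\rho)\le -(1-8\delta)h_{i,j}(t)+\frac{40\delta}{1-8\delta}h_{i,j}(t),
\]
which is the claim. The main obstacle is the first step: verifying $h_{i,j}L_{\alpha,(i,j)}=0$ exactly from the diagrammatic definitions, in particular that the bond inverse $A_i^{-1}A_i=I$ composes correctly with the pseudo-inverse convention.
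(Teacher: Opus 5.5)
Your proposal is correct and follows the paper's proof essentially step for step: drop the jump term, split $\{H_{\textnormal{eff}},h\}=h^2+\{\Delta,h\}$, use $h^2\succeq(1-8\delta)h$, and bound the cross term by $2\|\Delta\|\,\|h\|\,\textnormal{Tr}(P_{S^\perp}\rho)\le 40\delta\,h_{i,j}(t)/(1-8\delta)$. Your reason for the vanishing jump term ($h_{i,j}L_{\alpha,(i,j)}=0$, because the range of $L$ lies in $\ker h_{i,j}$ via the left inverse $A^{-1}A=I$) is actually more precise than the paper's phrasing that the jump operators annihilate the target state. The conjugation aside, which you don't end up using, has a slip: the middle factor should be $h^{1/2}\Delta h^{-1/2}+h^{-1/2}\Delta h^{1/2}$.
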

\begin{proof}
We expand the Lindbladian action on the local Hamiltonian. Since the jump operators annihilate the target state, the action reduces strictly to the anti-commutator with the effective Hamiltonian:
\begin{align}
    \tr{}{\mathcal{L}_{(i,j)}^\dagger (h_{i,j})\rho(t)} &=-\tr{}{\{H_{\text{eff},(i,j)}, h_{i,j}\}\rho(t)}\notag\\
    &=-\tr{}{h_{i,j}^2\rho(t)}-\tr{}{\left\{H_{\text{eff},(i,j)}-\frac{1}{2}h_{i,j}, h_{i,j}\right\}\rho(t)}.
\end{align}
Since $h_{i,j}^2\geq \Delta_{h_{i,j}} h_{i,j} \geq(1-8\delta)h_{i,j}$ (using Lemma~\ref{lemma:local_ham_bounds}), we obtain
\begin{equation}
    \tr{}{h_{i,j}^2\rho(t)}\geq (1-8\delta)h_{i,j}(t).
\end{equation}
Since $h_{i,j}=P_{S_{i,j}^\perp}h_{i,j}P_{S_{i,j}^\perp}$ and $H_{\text{eff},(i,j)}=P_{S_{i,j}^\perp}H_{\text{eff},(i,j)}P_{S_{i,j}^\perp}$, we obtain for the anti-commutator term:
\begin{align}
    \left\vert \tr{}{\left\{ H_{\text{eff},(i,j)}-\frac{1}{2}h_{i,j},h_{i,j} \right\}\rho(t)}\right\vert 
    &=\left\vert \tr{}{\left\{ H_{\text{eff},(i,j)}-\frac{1}{2}h_{i,j},h_{i,j} \right\}P_{S_{i,j}^\perp}\rho(t)P_{S_{i,j}^\perp}}\right\vert\notag\\
    &\leq 2 \norm{h_{i,j}}\left\Vert H_{\text{eff},(i,j)}- \frac{1}{2}h_{i,j} \right\Vert\left\Vert P_{S_{i,j}^\perp}\rho(t)P_{S_{i,j}^\perp}\right\Vert_1\notag \\
    &\overset{(1)}{\leq} 2(4)(5\delta)\tr{}{P_{S_{i,j}^\perp}\rho(t)}\notag\\&\leq \frac{40\delta}{1-8\delta}h_{i,j}(t),
\end{align}
where in (1) we have used the bounds from Lemma~\ref{lemma:local_ham_bounds}, and this yields the stated bound.
\end{proof}

\begin{lemma}[Off-Diagonal Interference Bound]\label{lemma:offdiag_interference}
For all $a,i,j\in\mathcal{V}$,
\begin{align}\label{eq:Ldaggerai_2}
    \tr{}{\mathcal{L}_{(a,i)}^\dagger (h_{i,j})\rho(t)} \leq \left ( \frac{18\delta}{1-138\delta}\right )\left ( \frac{h_{i,j}(t)+3 h_{a,i}(t)}{1-8\delta} \right )+\left (\frac{48\delta}{1-8\delta} \right)h_{a,i}(t).
\end{align}
\end{lemma}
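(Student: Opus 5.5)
We begin by writing out the adjoint action explicitly,
\[
\mathcal{L}_{(a,i)}^\dagger(h_{i,j}) = \sum_\alpha \Bigl( L_{\alpha,(a,i)}^\dagger h_{i,j} L_{\alpha,(a,i)} - \tfrac12\{L_{\alpha,(a,i)}^\dagger L_{\alpha,(a,i)}, h_{i,j}\}\Bigr),
\]
and rearrange it with commutators:
\[
\mathcal{L}_{(a,i)}^\dagger(h_{i,j}) = \tfrac12\bigl(F_{a,i,j}+F_{a,i,j}^\dagger\bigr) + \tfrac12\sum_\alpha\bigl(L_{\alpha,(a,i)}^\dagger L_{\alpha,(a,i)} h_{i,j} - h_{i,j} L_{\alpha,(a,i)}^\dagger L_{\alpha,(a,i)}\bigr)\cdot 0 + \ldots
\]
More cleanly, $L^\dagger h L - \frac12\{L^\dagger L,h\} = \frac12(L^\dagger[h,L] + [L^\dagger,h]L)$, so the result is exactly $\frac12(F_{a,i,j}+F_{a,i,j}^\dagger)$.

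We then exploit that every jump satisfies $L_{\alpha,(a,i)} = L_{\alpha,(a,i)}P_{S^\perp_{a,i}}$, and hence $F_{a,i,j} = P_{S^\perp_{a,i}}F_{a,i,j}$. The expectation $\textnormal{Tr}(F_{a,i,j}\rho)$ therefore involves $\rho$ only through its rows in $S^\perp_{a,i}$. We handle the columns by inserting $I = P_{S_{a,i}} + P_{S^\perp_{a,i}}$ on the right, which splits the expectation into two pieces.

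The first piece is $\textnormal{Tr}(P_{S^\perp_{a,i}}F P_{S^\perp_{a,i}}\rho)$. Using $\|F_{a,i,j}\|\le 48\delta$ from Lemma~\ref{lemma:heff_and_interference}, it is bounded by $48\delta\,\textnormal{Tr}(P_{S^\perp_{a,i}}\rho)$. By the local gap bound in Lemma~\ref{lemma:local_ham_bounds}, $h_{a,i}\succeq(1-8\delta)P_{S^\perp_{a,i}}$, so this becomes at most $\frac{48\delta}{1-8\delta}h_{a,i}(t)$. This is the last term of the claimed bound.

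The second piece is the cross term $\textnormal{Tr}(P_{S^\perp_{a,i}}F P_{S_{a,i}}\rho)$. Here we use that $F_{a,i,j}$ contains a commutator with $h_{i,j}$. Since $h_{i,j}=P_{S^\perp_{i,j}}h_{i,j}P_{S^\perp_{i,j}}$, each factor of $h_{i,j}$ adjacent to $P_{S_{a,i}}$ produces a product $P_{S^\perp_{i,j}}P_{S_{a,i}}$. We expand that product with Lemma~\ref{lemma:proj}, which writes it as $P_{S^\perp_{i,j}}A\,P_{S^\perp_{i,j}} + P_{S^\perp_{i,j}}B\,P_{S^\perp_{a,i}}$ with $\|A\|,\|B\|\le (1-138\delta)^{-1}$. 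This turns the cross term into a sum of terms of the form $\textnormal{Tr}(X\rho)$ in which $X$ is sandwiched between projectors $P_{S^\perp_{i,j}}$ and/or $P_{S^\perp_{a,i}}$ on both sides.

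Each such sandwiched term is then controlled by Cauchy--Schwarz, $|\textnormal{Tr}(P X P'\rho)|\le \|X\|\sqrt{\textnormal{Tr}(P\rho)\textnormal{Tr}(P'\rho)}\le \frac{\|X\|}{2}\bigl(\textnormal{Tr}(P\rho)+\textnormal{Tr}(P'\rho)\bigr)$, followed by $\textnormal{Tr}(P_{S^\perp_e}\rho)\le h_e(t)/(1-8\delta)$. For the norm of the middle operator we use the $O(\delta)$ structure behind the bound on $F$: the remaining factor is $L^\dagger L$-type times $\widetilde E_i$ or $E_i$, of size $O(\delta)$. This should give a coefficient of about $18\delta/(1-138\delta)$, with the weights $1$ on $h_{i,j}$ and $3$ on $h_{a,i}$ arising from counting how many projector sides land on each subspace.

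The main obstacle is this cross term. We have to check that after the commutator every term genuinely carries a projector onto $S^\perp_{i,j}$ or $S^\perp_{a,i}$ on both sides; otherwise a bare $P_{S_{a,i}}\rho P_{S_{a,i}}$ contribution would survive that is not controlled by energies. We also have to track the constants so that they come out as $18$ and $3$.
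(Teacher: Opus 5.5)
Your first steps are correct and coincide with the paper. The identity $\mathcal{L}_{(a,i)}^\dagger(h_{i,j})=\frac12(F_{a,i,j}+F_{a,i,j}^\dagger)$ holds exactly, $F_{a,i,j}=P_{S^\perp_{a,i}}F_{a,i,j}$, and the block $P_{S^\perp_{a,i}}\rho P_{S^\perp_{a,i}}$ is bounded by exactly the term $\frac{48\delta}{1-8\delta}h_{a,i}(t)$. The gap is in the cross term. It is not the issue you flag: projectors on both sides come for free from $F=P_{S^\perp_{a,i}}F$ and Lemma~\ref{lemma:proj}. The problem is the $O(\delta)$ size of the prefactor. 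Because $L_{\alpha,(a,i)}P_{S_{a,i}}=0$, the $L^\dagger h L$ part of $F$ vanishes against $P_{S_{a,i}}$, so $F_{a,i,j}P_{S_{a,i}}=-2H_{\textnormal{eff},(a,i)}h_{i,j}P_{S_{a,i}}$ and the commutator you appeal to is no longer visible. Suppose you now insert $P_{S^\perp_{i,j}}$ via $h_{i,j}=h_{i,j}P_{S^\perp_{i,j}}$, expand with Lemma~\ref{lemma:proj}, and bound each resulting term by the norm of its middle operator. Those operators are $H_{\textnormal{eff},(a,i)}h_{i,j}P_{S^\perp_{i,j}}A_{S_{i,j},S_{a,i}}$ and $H_{\textnormal{eff},(a,i)}h_{i,j}P_{S^\perp_{i,j}}B_{S_{i,j},S_{a,i}}$, and $H_{\textnormal{eff},(a,i)}h_{i,j}$ has norm $\Theta(1)$. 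For isometric tensors it equals $\frac12 h_{a,i}h_{i,j}$, a product of commuting projectors acting on different bonds. The procedure as written therefore yields an $O(1)$ coefficient in front of $h_{i,j}(t)+3h_{a,i}(t)$, which is useless for Theorem~\ref{theorem:peps}. Your sentence that ``the remaining factor is $L^\dagger L$-type times $\widetilde E_i$'' is exactly the claim that needs proof, and it holds only when restricted to $S_{a,i}$.

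The missing step is the core of the paper's proof, and it has to come \emph{before} Lemma~\ref{lemma:proj}: show that $H_{\textnormal{eff},(a,i)}h_{i,j}P_{S_{a,i}}=O_{a,i,j}P_{S_{a,i}}$ with $\|O_{a,i,j}\|\le 18\delta$. On a vector $(A_a\otimes A_i)(|\phi_0\rangle_{a,i}\otimes|X\rangle)$, $h_{i,j}$ re-applies $A_i^{-\dagger}=A_iK_i^{-1}$. After the $(A_a\otimes A_i)^{-1}$ inside $H_{\textnormal{eff},(a,i)}$, the bra $\langle\phi_\alpha|$ on bond $(a,i)$ meets $K_i^{-1}Q_{i,j}|\phi_0\rangle_{a,i}$, and since $\langle\phi_\alpha|\phi_0\rangle=0$ only $\widetilde E_i=K_i^{-1}-I$ survives. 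You also need $O_{a,i,j}P_{S_{i,j}}=0$, which holds because the factor $Q_{i,j}(A_i\otimes A_j)^{-1}$ of $h_{i,j}$ remains on the right of $O_{a,i,j}$. This is what legitimately gives $O_{a,i,j}P_{S_{a,i}}=O_{a,i,j}P_{S^\perp_{i,j}}P_{S_{a,i}}$. Lemma~\ref{lemma:proj} and your Cauchy--Schwarz step then bound each of the two conjugate cross terms by $\frac{9\delta}{1-138\delta}\cdot\frac{h_{i,j}(t)+3h_{a,i}(t)}{1-8\delta}$, which is where the $18$ and the weights $1,3$ come from. A repair closer to your wording also exists: $L_{\alpha,(a,i)}$ maps $S_{i,j}$ into itself, so $F_{a,i,j}P_{S_{i,j}}=0$ and $F_{a,i,j}P_{S_{a,i}}=F_{a,i,j}P_{S^\perp_{i,j}}P_{S_{a,i}}$. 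Applying Lemma~\ref{lemma:proj} with prefactor $\|F_{a,i,j}\|\le 48\delta$ then works, but gives $24$ instead of $18$.
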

\begin{proof}
We bound the second term by explicitly expanding the action of the neighboring generator:
 \begin{align}
    \tr{}{\mathcal{L}_{(a,i)}^\dagger (h_{i,j})\rho(t)}
    &=\sum\limits_\alpha \tr{}{\left(L^\dagger_{\alpha,(a,i)}h_{i,j}L_{\alpha,(a,i)} -\frac12\{ h_{i,j}, L^\dagger_{\alpha,(a,i)}L_{\alpha,(a,i)}\}\right)\rho(t)}\notag\\
    &=\sum\limits_\alpha\left( \tr{}{L^\dagger_{\alpha,(a,i)}h_{i,j}L_{\alpha,(a,i)}P_{S^\perp_{a,i}}\rho(t)P_{S^\perp_{a,i}}}\right. -\frac12 \tr{}{h_{i,j} L^\dagger_{\alpha,(a,i)}L_{\alpha,(a,i)}P_{S^\perp_{a,i}}\rho(t)} -\notag\\ &\qquad\left.\frac12\tr{}{ L^\dagger_{\alpha,(a,i)}L_{\alpha,(a,i)}h_{i,j}\rho(t)P_{S^\perp_{a,i}}}\right)
\end{align}
We now use the relation $L_{\alpha,(a,i)}=L_{\alpha,(a,i)}P_{S^\perp_{a,i}}$ and the operator $F_{a,i,j}$ defined in Lemma~\ref{lemma:heff_and_interference} to get:
\begin{align}\label{eq:Ldaggerai_intermediate}
    \tr{}{\mathcal{L}_{(a,i)}^\dagger (h_{i,j})\rho(t)} 
    &=\frac12\tr{}{F_{a,i,j}P_{S^\perp_{a,i}}\rho(t)P_{S^\perp_{a,i}}}+\frac12\tr{}{F^\dagger_{a,i,j}P_{S^\perp_{a,i}}\rho(t)P_{S^\perp_{a,i}}} -\tr{}{h_{i,j}H_{\text{eff},(a,i)}P_{S^\perp_{a,i}}\rho(t)P_{S_{a,i}}}\notag\\&-\tr{}{H_{\text{eff},(a,i)}h_{i,j}P_{S_{a,i}}\rho(t)P_{S^\perp_{a,i}}}.
\end{align}
We use the bound on $\norm{F_{a,i,j}}$ from Lemma~\ref{lemma:heff_and_interference} to get:
\begin{align}
    \left\vert \frac12 \tr{}{F_{a,i,j}P_{S^\perp_{a,i}}\rho(t)P_{S^\perp_{a,i}}} \right\vert, \left\vert \frac12 \tr{}{F^\dagger_{a,i,j}P_{S^\perp_{a,i}}\rho(t)P_{S^\perp_{a,i}}} \right\vert \leq \frac{1}{2}\norm{F_{a,i,j}}\smallnorm{P_{S^\perp_{a,i}} \rho(t) P_{S^\perp_{a,i}} }_1\leq \frac{24\delta}{1-8\delta}h_{a,i}(t).
\end{align}
Next we consider $H_{\text{eff},(a,i)}h_{i,j}P_{S_{a,i}}$. We note that for any state $\ket{\psi}$, 
\begin{align}
   H_{\text{eff},(a,i)}h_{i,j}P_{S_{a,i}}\ket{\psi} =\frac{1}{2}\sum_\alpha\vcenter{\hbox{\includegraphics[scale=0.45]{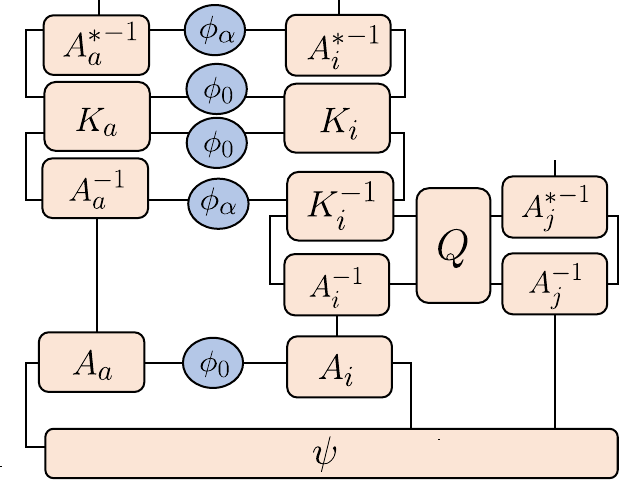}}} =\frac{1}{2}\sum_\alpha\vcenter{\hbox{\includegraphics[scale=0.45]{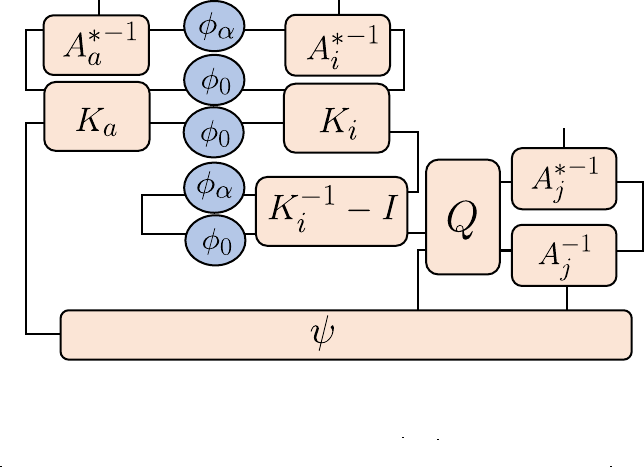}}}
\end{align}
We thus obtain that \begin{align}
    H_{\text{eff},(a,i)}h_{i,j}P_{S_{a,i}}=O_{a,i,j}P_{S_{a,i}}, \text{ where }
   O_{a,i,j} =\frac{1}{2}\sum_\alpha\vcenter{\hbox{\includegraphics[scale=0.5]{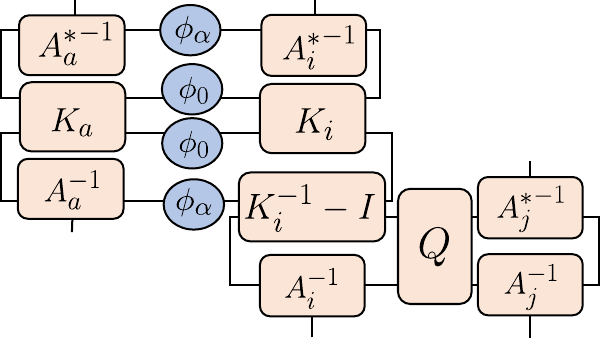}}}.
\end{align}
Here, we bound the norm of this operator as:
\begin{equation}
    \norm{O_{a,i,j}}\leq \frac{1}{2}(1-\delta)^{-3}(1+\delta)^2\norm{K^{-1}_i-I}\leq 18\delta.
\end{equation}
We now evaluate the following (where we use the relation $O_{a,i,j}P_{S_{i,j}}=0$),
\begin{align}
    \tr{}{H_{\text{eff},(a,i)}h_{i,j}P_{S_{a,i}}\rho(t)P^\perp_{S_{a,i}}} &=\tr{}{O_{a,i,j}P_{S_{a,i}}\rho(t)P^\perp_{S_{a,i}}}\notag\\
    &\overset{(1)}{=}\tr{}{O_{a,i,j}P^\perp_{S_{i,j}}P_{S_{a,i}}\rho(t)P^\perp_{S_{a,i}}}\notag\\
    &=\tr{}{O_{a,i,j}P^\perp_{S_{i,j}}A_{S_{i,j},S_{a,i}}P^\perp_{S_{i,j}}\rho(t)P^\perp_{S_{a,i}}} +\tr{}{O_{a,i,j}P^\perp_{S_{i,j}}B_{S_{i,j},S_{a,i}}P^\perp_{S_{a,i}}\rho(t)P^\perp_{S_{a,i}}},
\end{align}
where we invoke Lemma~\ref{lemma:proj} in (1).
Thus, \begin{align}
    \vert \tr{}{H_{\text{eff},(a,i)}h_{i,j}P_{S_{a,i}}\rho(t)P^\perp_{S_{a,i}}} \vert  \leq \norm{O_{a,i,j}}\norm{A_{S_{i,j},S_{a,i}}}\smallnorm{ P^\perp_{S_{i,j}} \rho(t) P^\perp_{S_{a,i}}}_1 +\norm{O_{a,i,j}}\norm{B_{S_{i,j},S_{a,i}}}\smallnorm{P^\perp_{S_{a,i}} \rho(t) P^\perp_{S_{a,i}}}_1.
\end{align}
Additionally, we use the local energies to substitute the norms:
\begin{equation}
    \label{eq:P_same} 
   \smallnorm{P^\perp_{S_{a,i}} \rho(t) P^\perp_{S_{a,i}}}_1 = \frac{h_{a,i}(t)}{1-8\delta},
\end{equation}\begin{align}\label{eq:Pdiff}
   \smallnorm{P^\perp_{S_{i,j}} \rho(t) P^\perp_{S_{a,i}}}_1 \leq \frac{(h_{i,j}(t)h_{a,i}(t))^\frac12}{1-8\delta} \leq \frac{h_{i,j}(t)+h_{a,i}(t)}{2(1-8\delta)}.
\end{align}
Thus, \begin{align}
    \vert  \tr{}{H_{\text{eff},(a,i)}h_{i,j}P_{S_{a,i}}\rho(t)P^\perp_{S_{a,i}}} \vert  
    &\leq \left(\frac{18\delta}{1-138\delta}\right)\left(\frac{h_{i,j}(t)+h_{a,i}(t)}{2(1-8\delta)}\right)+\left(\frac{18\delta}{1-138\delta}\right )\left (\frac{h_{a,i}(t)}{1-8\delta}\right )\notag\\
    &=\left(\frac{9\delta}{1-138\delta}\right)\left(\frac{h_{i,j}(t)+3h_{a,i}(t)}{1-8\delta}\right).
\end{align}
Combining these upper bounds yields the final stated inequality.
\end{proof}
Now assuming $\delta \leq \delta_\textnormal{th}$ where $\delta_\textnormal{th} = \mathcal{O}(1/\mathfrak{d}_0)$ as $\mathfrak{d}_0 \to \infty$, we prove rapid mixing of the parent Lindbladian.
\begin{proof}[Proof of Theorem 2]
Let us consider the Hamiltonian
\begin{align}
    H'=H +\sum\limits_i F_i,
\end{align}
where $H$ is the parent Hamiltonian from Eq.~\eqref{eq:parent_ham} and $F_i = P_{S_i}^\perp$ are the violation operators corresponding to each site $i$. This Hamiltonian $H'$ has the same spectral gap $\Delta_H$ as $H$, and has the PEPS $\ket{\psi}$ as its unique ground state. Therefore, combining this with Eq.~\eqref{eq:energy_parent_ham},
\begin{equation}\label{eq:mix_new}
    \left\vert \braket{\psi \vert \rho(t)}{\psi}-1\right\vert\leq \frac{\langle H'(t)\rangle}{\Delta_H}.
\end{equation}
To bound this overlap, we analyze the total time derivative $$\frac{d}{dt} \langle H'(t)\rangle = \frac{d}{dt} \langle H(t)\rangle + \frac{d}{dt} \langle F(t)\rangle.$$

Let $F=\sum\limits_i F_i$ denote the total violation operator. In the Heisenberg picture, this operator will only be acted upon by $\sum_i \mathcal{L}_i$ as the two-site Parent Lindbladian terms $\sum_e \mathcal{L}_e$ only act on states which are already in the valid subspace. Thus, the total violation operator will decay as:
\begin{equation}\label{eq:violation_decay}
    \frac{d}{dt}\langle F(t)\rangle \leq -\Gamma \langle F(t)\rangle \implies \langle F(t)\rangle \leq |E| \max_i\langle F_i(0)\rangle e^{-\Gamma t} \leq |E|e^{-\Gamma t},
\end{equation}
where $\Gamma=\min\limits_i \Gamma_i$.

For the original Parent Hamiltonian $H$, using Lemmas \ref{lemma:diag_dissipation} and \ref{lemma:offdiag_interference}, we obtain:
\begin{align}
     \mathrm{Tr}\bigg(\sum_e \mathcal{L}^\dagger_e(h_{i,j})\rho(t)\bigg)\leq-\gamma h_{i,j}(t)+\alpha \left(\sum\limits_{a\in \mathcal{N}_i}h_{a,i}(t)+\sum\limits_{k\in \mathcal{N}_j}h_{j,k}(t)\right), \text{ where}\end{align} $$
    \gamma=1-8\delta-\left(\frac{40\delta}{1-8\delta}\right)+\frac{36\delta}{(1-138\delta)(1-8\delta)},
    \alpha=\frac{54\delta}{(1-138\delta)(1-8\delta)}+\left(\frac{48\delta}{1-8\delta}\right).$$
Thus, summing over all edges yields the global dissipation rate from the two-site Lindbladians:
\begin{align}\label{eq:decay_lind_parent_ham}
    \mathrm{Tr}\bigg(\sum_e \mathcal{L}^\dagger_e(H)\rho(t)\bigg) \leq -(\gamma- 2\mathfrak{d}_0\alpha) \langle H(t)\rangle,
\end{align}
where $|\mathcal{N}_i| \leq \mathfrak{d}_0$ for all $i \in \mathcal{V}$, with $\mathfrak{d}_0$ being the maximum degree of the graph. By observing that $\gamma = 1 - \mathcal{O}(\delta)$ and $\alpha = \mathcal{O}(\delta)$, the condition $\gamma - 2\mathfrak{d}_0\alpha \geq 1/2$ is satisfied provided that $\delta$ scales inversely with the maximum degree, specifically $\delta = \mathcal{O}(\mathfrak{d}_0^{-1})$.

The time-derivative of $H$ will also have a dependence on the violation operator because of the single-site correction Lindbladians $\sum_i \mathcal{L}_i$. This preserves the same scaling as before:
\begin{align}
    \frac{d}{dt} \langle H(t)\rangle &= \mathrm{Tr}\bigg(\sum_i \mathcal{L}^\dagger_i(H)\rho(t)\bigg)+\mathrm{Tr}\bigg(\sum_e \mathcal{L}^\dagger_e(H)\rho(t)\bigg)\notag\\
    &\leq \Gamma \langle F(t)\rangle +\mathrm{Tr}\bigg(\sum_e \mathcal{L}^\dagger_e(H)\rho(t)\bigg)\notag\\
    &\overset{(1)}{\leq}\Gamma |E| e^{-\Gamma t} -  \frac12\langle H(t)\rangle,
\end{align}
where in (1) we used Eqs.~\eqref{eq:decay_lind_parent_ham} and~\eqref{eq:violation_decay}.
Upon solving we obtain $\langle H(t)\rangle\leq \mathcal{O}(|E|) e^{-\kappa t}$ for some constant $\kappa$.

Thus the total energy is bounded by $\langle H'(t)\rangle= \langle H(t)\rangle +\langle F(t)\rangle\leq \mathcal{O}(|E|) e^{-\kappa' t}$ for some constant $\kappa'$. Substituting this back into Eq.~\eqref{eq:mix_new}, we get:
\begin{equation}
    \left\vert \braket{\psi \vert \rho(t)}{\psi}-1\right\vert \leq \frac{\mathcal{O}(|E|)}{\Delta_H} e^{-\kappa' t}.
\end{equation}
Therefore, we obtain:
\begin{equation}
    \left\vert \braket{\psi \vert \rho(t)}{\psi}-1\right\vert\leq \varepsilon, \forall t\geq \tau= \mathcal{O}\left( \log\left( \frac{1}{\Delta_H\varepsilon}\right)\right)+  \mathcal{O}\left(\log\left(\frac{|E|}{\varepsilon}\right)\right).
\end{equation} 
\end{proof}
\section{Satisfying $\delta$-Unitarity for Matrix Product States (Proof of Theorem 3)}\label{sec:MPS}
    Consider a non-translationally invariant MPS $\ket{\psi}$ defined by a set of local, site-dependent tensors $\{A_i\}$,
    \begin{align}
\label{eq:mps}
    \ket{\psi}=\vcenter{\hbox{\includegraphics[scale=0.5]{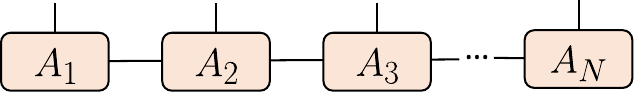}}}.
\end{align}
While we allow $A_i$ to be dependent on $i$, we will assume that $\norm{A_i}, \norm{A_i^{-1}} \leq O(1)$ independent of the system-size $N$.

We now define the blocked tensor: We construct the blocked tensor $A_n^l$, say from site $n$ to $n+l-1$, we gauge the boundaries by applying the operators $\sqrt{\rho_{n}^{-1}}$ and $\sqrt{\sigma_{n+l}^{-1}}$ to the virtual legs,
\begin{align}\label{eq:Al_blocked}
     \vcenter{\hbox{\includegraphics[scale=0.5]{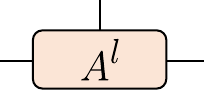}}}=\vcenter{\hbox{\includegraphics[scale=0.6]{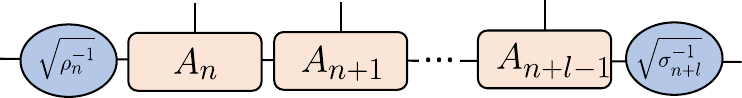}}},
\end{align}
where $\sigma_n$ and $\rho_n$ are defined as follows:
\begin{align}
    \vcenter{\hbox{\includegraphics[scale=0.5]{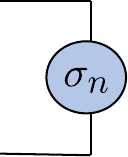}}}=\vcenter{\hbox{\includegraphics[scale=0.5]{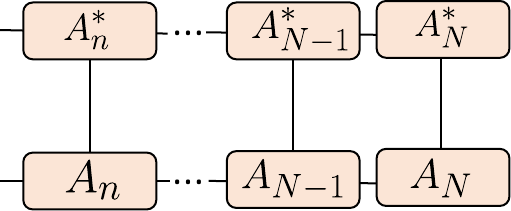}}},\qquad\vcenter{\hbox{\includegraphics[scale=0.5]{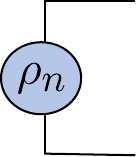}}}=\vcenter{\hbox{\includegraphics[scale=0.5]{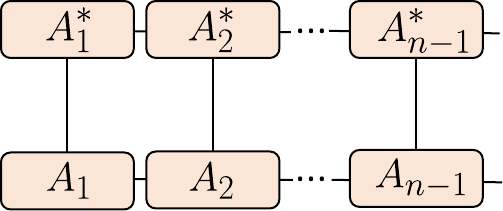}}}.
\end{align}
Clearly these satisfy $\vcenter{\hbox{\includegraphics[scale=0.5]{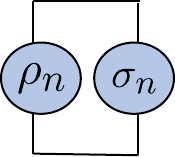}}}=\braket{\psi}{\psi}=1$. 

Next we define the notion of a state having exponentially decaying correlations~\cite{hastings2006decay}, i.e., to have a finite correlation length $\xi>0$. This means that the connected correlation function between any two local observables $O_X$ and $O_Y$ supported on disjoint finite regions $X$ and $Y$ decays exponentially with the distance $d(X, Y)$, i.e., $\exists C, \xi$ such that

\begin{equation}\label{eq:exp-decay_corr}
    |\langle \psi | O_X O_Y | \psi \rangle - \langle \psi | O_X | \psi \rangle \langle \psi | O_Y | \psi \rangle| \le C |X| |Y| e^{-d(X,Y)/\xi} \|O_X\| \cdot \|O_Y\|.
\end{equation}
\begin{lemma}\label{prop:mps}
  Let $\ket{\psi}$ be an injective Matrix Product State (MPS) $[$Eq.~\eqref{eq:mps}$]$ defined on $N$ sites, which has exponentially decaying correlations and assume $\max_{i \in \{1, 2 \dots N\}}\smallnorm{A_i^{-1}}= O(1)$ as $N \to \infty$.  Then, for any $\delta > 0$, the blocked tensors $A^l$ $[$Eq.~\eqref{eq:Al_blocked}$]$ satisfy $\lVert {A^{l}}^\dagger A^l - I \rVert \leq \delta$ on choosing the block-size $l \geq \Theta(\log(\delta^{-1})).$
\end{lemma}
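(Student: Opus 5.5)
The plan is to express ${A^l}^\dagger A^l$, viewed as an operator on the two virtual legs (left $\otimes$ right) of the block, through the product of $l$ site transfer matrices. Then I would use exponential decay of correlations to show that this product is exponentially close to a rank-one map. The rank-one piece is exactly what the boundary gauge $\sqrt{\rho_n^{-1}}\otimes\sqrt{\sigma_{n+l}^{-1}}$ cancels to the identity.

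\emph{Step 1 (rewriting).} Define the site transfer matrices $\mathbb{E}_k=\sum_s A_k^s\otimes\bar A_k^s$ and $\mathbb{E}_{[n,n+l-1]}=\mathbb{E}_n\cdots\mathbb{E}_{n+l-1}$. Summing over physical indices shows that the matrix elements $\langle a b|{A^l}^\dagger A^l|a'b'\rangle$ are the entries $(\mathbb{E}_{[n,n+l-1]})_{(a a'),(b b')}$. These entries are sandwiched between $\rho_n^{-1/2}$ on the left pair of indices and $\sigma_{n+l}^{-1/2}$ on the right pair. So ${A^l}^\dagger A^l=(\rho_n^{-1/2}\otimes\sigma_{n+l}^{-1/2})\,\mathcal{R}(\mathbb{E}_{[n,n+l-1]})\,(\rho_n^{-1/2}\otimes\sigma_{n+l}^{-1/2})$, up to transposes fixed by the conventions in the figure. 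Here $\mathcal{R}$ denotes the reshuffling that turns a map between the left and right doubled spaces into an operator on left$\otimes$right.

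The environments satisfy $\rho_{n+l}=\rho_n\mathbb{E}_{[n,n+l-1]}$ and $\sigma_n=\mathbb{E}_{[n,n+l-1]}\sigma_{n+l}$, with the normalization $(\rho_n|\sigma_n)=1$. Using these, the candidate limit is $\mathbb{E}_{[n,n+l-1]}\approx|\sigma_n)(\rho_{n+l}|$. This map is rank one, and its reshuffling is the product operator $\sigma_n\otimes\rho_{n+l}$, not yet the gauged identity. Indeed $\sigma_n$ sits in the left leg slot and $\rho_{n+l}$ in the right leg slot, while the gauge uses $\rho_n^{-1/2}$ on the left and $\sigma_{n+l}^{-1/2}$ on the right. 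So the correct boundary matrices (which environment gauges which leg) must be pinned down from the figure before this step is complete. I expect the intended arrangement to be the one in which the gauge conjugation maps $\mathcal{R}$ of the rank-one piece exactly to $I$, and I would verify this as part of Step 1.

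\emph{Step 2 (rank-one approximation).} The key claim is
\[
\big\|\mathbb{E}_{[n,n+l-1]}-|\sigma_n)(\rho_{n+l}|\big\|\le C' e^{-l/\xi'}.
\]
I would derive it from the exponential decay of correlations Eq.~\eqref{eq:exp-decay_corr}. For observables $O_X$ supported left of site $n$ and $O_Y$ right of site $n+l-1$, the connected correlator equals $(\ell_X|\,[\mathbb{E}_{[n,n+l-1]}-|\sigma_n)(\rho_{n+l}|]\,|r_Y)$, where $\ell_X$ and $r_Y$ are the boundary vectors generated by $O_X$ and $O_Y$.

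By injectivity, with $\|A_k^{-1}\|=O(1)$, a single site already spans the virtual space. Choosing $O_X=A_{n-1}^{-\dagger}(\cdot)A_{n-1}^{-1}$-type operators (and similarly $O_Y$) then lets $\ell_X,r_Y$ range over all rank-one virtual operators, with bounded $\|O\|$. Taking the supremum over these converts the correlator bound into the operator-norm bound above, with $C'$ polynomial in $D$, $\|A_k^{\pm1}\|$.

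\emph{Step 3 (conditioning and conclusion).} Next I need $\|\rho_n^{-1}\|,\|\sigma_n^{-1}\|=O(1)$ uniformly in $n$ and $N$. I would show that $\rho_n$ is bounded above because $\|A_k\|=O(1)$. For the lower bound, injectivity gives $\mathbb{E}_k$ a strictly positive image: $\rho\,\mathbb{E}_k\succeq\|A_k^{-1}\|^{-2}\,\mathrm{Tr}(\rho)\,I/\mathrm{poly}(D)$, so $\rho_n$ is bounded below. The normalization $(\rho|\sigma)=1$ then fixes the scales.

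Combining, $\|{A^l}^\dagger A^l-I\|\le\|\rho_n^{-1}\|\,\|\sigma_{n+l}^{-1}\|\,C'e^{-l/\xi'}$. This is at most $\delta$ once $l\ge\xi'\log(C''/\delta)=\Theta(\log\delta^{-1})$.

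\emph{Main obstacle.} The hardest step is Step 2, because it must produce a uniform operator-norm statement from correlation decay for a non-translation-invariant MPS. If the direct inversion of injective boundary tensors proves awkward, I would fall back on a different argument. Exponential decay in an injective MPS gives uniform primitivity, and a contraction (Birkhoff/Hilbert-metric) argument for the products of the positive maps $\mathbb{E}_k$ then yields the rank-one convergence directly.
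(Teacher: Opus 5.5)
Your route is essentially the paper's. The single-site observables $O_X=A_{n-1}^{-\dagger}S_XA_{n-1}^{-1}$ and $O_Y=A_{n+l}^{-\dagger}S_YA_{n+l}^{-1}$ turn the connected correlator into a contraction of the block (equivalently of ${A^l}^\dagger A^l-I$) with arbitrary virtual operators, and a supremum over $S_X,S_Y$ gives the operator-norm bound. Making the rank-one limit $|\sigma_n)(\rho_{n+l}|$ explicit is only a more transparent bookkeeping of the same diagram. Your Step 1 worry resolves as you expect. In your notation, the left leg must be gauged by the environment containing the block and everything to its right ($\sigma_n^{-1/2}$), and the right leg by the left environment ($\rho_{n+l}^{-1/2}$). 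With that assignment the rank-one piece maps exactly to $I$, so the paper's gauge factors must be read as these two environments.

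The step that fails as written is Step 3, and with it the constant in Step 2. The environments are not uniformly bounded. $\rho_n$ is obtained by applying $n-1$ maps $\mathbb{E}_k$ of norm $O(1)$, so it can grow or decay like $c^{n}$, and $(\rho_n|\sigma_n)=1$ fixes only the product of the two scales. For example, replacing $A_k\to 2A_k$ for $k\le N/2$ and $A_k\to A_k/2$ for $k>N/2$ leaves $\ket{\psi}$ and all hypotheses intact. It nevertheless gives $\rho_{N/2}\sim 4^{N/2}$ and $\sigma_{N/2}\sim 4^{-N/2}$, so neither ``$\rho_n$ is bounded because $\|A_k\|=O(1)$'' nor $\|\sigma_n^{-1}\|=O(1)$ holds.

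The same issue hides in Step 2. To produce an arbitrary unit rank-one $\ell_X$ from $S_X$ you must take $\|S_X\|\ge 1/\mathrm{Tr}\,\rho_{n-1}$, and similarly $\|S_Y\|\ge 1/\mathrm{Tr}\,\sigma_{n+l+1}$. So $C'$ is not a function of $D$ and $\|A_k^{\pm1}\|$ alone. Indeed $\|\mathbb{E}_{[n,n+l-1]}-|\sigma_n)(\rho_{n+l}|\|$ itself rescales by $\prod_{k=n}^{n+l-1}c_k^2$, which can beat $e^{-l/\xi}$.

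The repair is short. The gauged block is invariant under $A_k\to c_kA_k$, with the environments renormalized so that $(\rho|\sigma)=1$. Your own injectivity bounds, $\|A_k^{-1}\|^{-2}\,\mathrm{Tr}(\rho)\,I\preceq\rho\,\mathbb{E}_k\preceq\|A_k\|^2\,\mathrm{Tr}(\rho)\,I$, show that one can choose $c_k=\Theta(1)$ with $\mathrm{Tr}\,\rho_n=1$ for all $n$. In this gauge $\rho_n$ and $\sigma_n$ are bounded above and below uniformly. This follows from $(\rho_n|\sigma_n)=1$ together with $\sigma_n\propto\mathbb{E}_n\sigma_{n+1}$. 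The constant $C'$ is then as you claim, and your final estimate goes through. Equivalently, only the bounded condition numbers of the environments matter, and the scale factors cancel in pairs via $\langle\psi|\psi\rangle=1$. The paper's proof does not address this point either: its final constant contains no environment factors. So the same normalization argument is implicitly needed there.
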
 
\begin{proof}
We block $l$ sites in the translationally varying MPS $\ket{\psi}$ defined in Eq.~\eqref{eq:mps}, satisfying the assumptions mentioned above and obtain blocked tensors $A^l$ as defined in Eq.~\eqref{eq:Al_blocked}.
Now let $O_X$ and $O_Y$ denote the boundary observables acting on the physical indices at sites $n-1$ and $n+l$ respectively. Let $S_X$ and $S_Y$ be arbitrary operators acting on the virtual bonds. We dress these operators with the local tensors to define our observables:
\begin{align}
    O_X=\vcenter{\hbox{\includegraphics[scale=0.5]{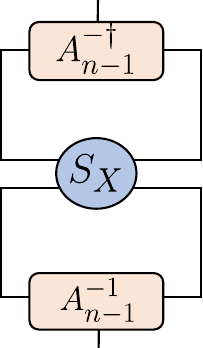}}},\qquad O_Y=\vcenter{\hbox{\includegraphics[scale=0.5]{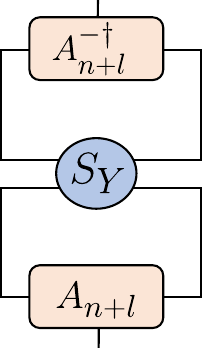}}}.
\end{align}
The operator norms of these observables are upper-bounded by:
\begin{equation}
    \smallnorm{O_X} \le \smallnorm{A_{n-1}^{-1}}^2 \smallnorm{S_X} , \quad \smallnorm{O_Y} \le \smallnorm{A_{n+l}^{-1}}^2 \smallnorm{S_Y}.
\end{equation}
We get the following expressions for the expectation values:
\begin{align}
    &\langle \psi | O_X | \psi \rangle = \vcenter{\hbox{\includegraphics[scale=0.5]{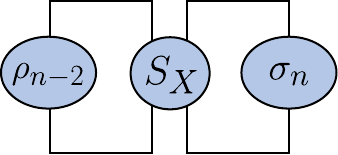}}}, \qquad \langle \psi | O_Y | \psi \rangle = \vcenter{\hbox{\includegraphics[scale=0.5]{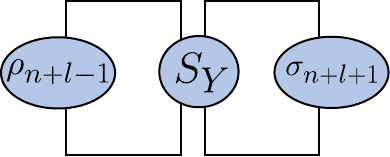}}}, \notag \\
    & \langle \psi | O_X O_Y | \psi \rangle =\vcenter{\hbox{\includegraphics[scale=0.5]{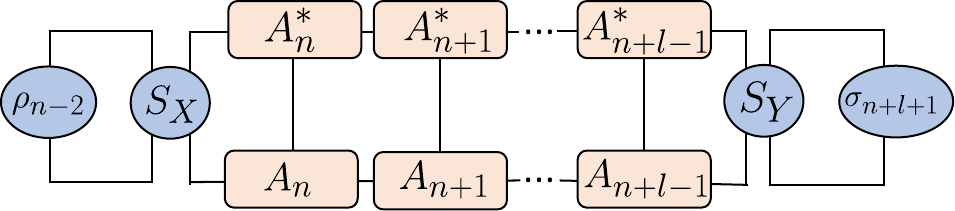}}}.
\end{align}
Now let $\mathcal{E}_{n,n+l-1}={A^l}^\dagger A^l -I.$
Substituting these boundary observables and their norm bounds into the exponential decay inequality Eq.~\eqref{eq:decay_corr} yields:
\begin{align}\label{eq:decay_corr}
    |\langle \psi | O_X O_Y | \psi \rangle - \langle \psi | O_X | \psi \rangle \langle \psi | O_Y | \psi \rangle| &=\vcenter{\hbox{\includegraphics[scale=0.5]{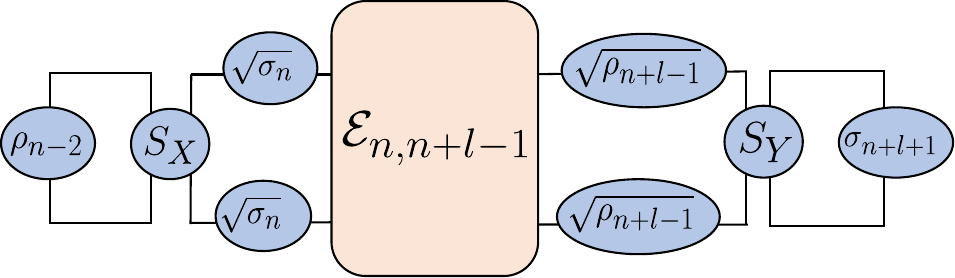}}}\notag\\&\le C e^{-l/\xi} \smallnorm{A_{n-1}^{-1}}^2 \smallnorm{A_{n+l}^{-1}}^2 \smallnorm{S_X}\smallnorm{||S_Y}.
\end{align}
We also note that
\begin{align}
    \Vert \mathcal{E}_{n,n+l-1} \Vert &\leq \sup_{F_X, F_Y} \frac{\left| \vcenter{\hbox{\includegraphics[scale=0.4]{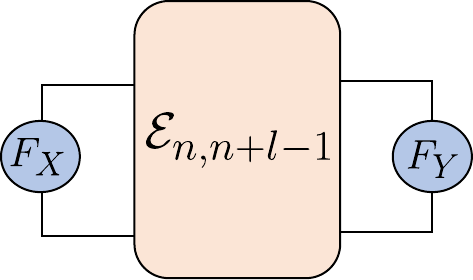}}} \right|}{\Vert F_X\Vert \Vert F_Y\Vert} \notag \\
    &= \sup_{S_X, S_Y} \frac{\left| \vcenter{\hbox{\includegraphics[scale=0.4]{E_observable.pdf}}} \right|}{\Vert S_X\Vert \Vert S_Y\Vert} \notag \\
    &\overset{(1)}{\leq} C e^{-l/\xi} \Vert A_{n-1}^{-1} \Vert^2 \Vert A_{n+l}^{-1} \Vert^2,
\end{align}
where (1) follows from \cref{eq:decay_corr}. Thus, $\lVert {A^{l}}^\dagger A^l - I \rVert \leq O(e^{-l/\xi})$. Hence for the $\delta$-isometry condition to be satisfied we need to block at least $O(\log(1/\delta))$ many sites.
\end{proof}
Combining Lemma~\ref{prop:mps} with Theorems 1 and 2, we have proved Theorem 3.
\section{Analysis of discrete-time processes (Proof of theorems 4 and 5)}\label{sec:discrete_time}
\subsection{Uniqueness of steady state}
We state a technical lemma about characterizing a certain class of quantum channels.
\begin{lemma}\label{lemma:unique_fp_channel}
    Consider channels $\mathcal{E}_i$ defined on the Hilbert space $\mathcal{H} = (\mathbb{C}^{d + 1})^{\otimes N}$ given by
    \[
    \mathcal{E}_i(X) = \sum_{\alpha = 0}^{d} K_{\alpha, i}X K_{\alpha, i}^\dagger \text{ where }K_{\alpha, i} = \sqrt{\Gamma} X \ket{0}_i \! \bra{\alpha} X^{-1} \text{ for } i \in \{1, 2 \dots d\} \text{ and }K_{0, i} = \bigg(I - \Gamma \sum_{i = 1}^d K_{\alpha, i}^\dagger K_{\alpha, i}\bigg)^{1/2},
    \]
    where $X:\mathcal{H} \to \mathcal{H}$ is an invertible operator and $\Gamma$ is chosen to be small enough such that $\Gamma \sum_{i = 1}^d K_{\alpha, i}^\dagger K_{\alpha, i} \prec I \ \forall \alpha \in \{1, 2 \dots N\}$. Partition the indices $\{1, 2 \dots N\}$ into disjoint subsets $\mathcal{I}_1, \mathcal{I}_2 \dots \mathcal{I}_k$ such that
    \[
    \forall q \in \{1, 2 \dots k\} \text{ and }\forall i, j \in \mathcal{I}_q: \mathcal{E}_i \circ \mathcal{E}_j = \mathcal{E}_j \circ \mathcal{E}_i,
    \]
    i.e., the channels corresponding to any one index set commute. Consider the channel $\mathcal{E}$ defined via
    \[
    \mathcal{E} = \frac{1}{k}\sum_{q = 1}^k \mathcal{E}_{\mathcal{I}_q} \text{ where }\mathcal{E}_{\mathcal{I}_q} = \prod_{i \in \mathcal{I}_q}\mathcal{E}_i.
    \]
    Then the channel $\mathcal{E}$ has $\ket{\psi} = X\ket{0}^{\otimes N}$ as its unique fixed point i.e., $\mathcal{E}(\rho_\infty) = \rho_\infty$ if and only if $\rho_\infty \propto \ket{\psi}\!\bra{\psi}$.
\end{lemma}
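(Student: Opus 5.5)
The proof follows the template of Lemma~\ref{lemma:unique_fp_lindblad}, with the Lindbladian fixed-point equation replaced by a channel fixed-point equation. Write $L_{\alpha,i} = X\ket{0}_i\!\bra{\alpha}X^{-1}$ for $\alpha\ge 1$, so that $K_{\alpha,i}=\sqrt{\Gamma}L_{\alpha,i}$.

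\textbf{Step 1: $\ket{\psi}$ is a fixed point.} Each $L_{\alpha,i}$ annihilates $\ket{\psi}$, so $H_i:=\sum_{\alpha\ge1}L_{\alpha,i}^\dagger L_{\alpha,i}$ also annihilates it. Hence $K_{0,i}\ket{\psi}=\ket{\psi}$, which follows from the functional calculus since $\ket{\psi}$ is in the kernel of $H_i$. Therefore every $\mathcal{E}_i$, and so $\mathcal{E}$, fixes $\ket{\psi}\!\bra{\psi}$.

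\textbf{Step 2: a fixed point has support invariant under all jumps and orthogonal to $\bra{\psi}L$.} Let $\rho_\infty=\mathcal{E}(\rho_\infty)$ with support $\mathcal{S}_\infty$.

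The convex combination $\frac1k\sum_q\mathcal{E}_{\mathcal{I}_q}$ is a channel with Kraus operators $k^{-1/2}\prod_{i\in\mathcal{I}_q}K_{\alpha_i,i}$. As in the proof of Lemma~\ref{lemma:unique_fp_lindblad}, compressing the fixed-point equation by $P^\perp_{\mathcal{S}_\infty}$ gives a sum of PSD terms that must each vanish. So every Kraus operator maps $\mathcal{S}_\infty$ into $\mathcal{S}_\infty$.

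To extract single jumps, consider the products in which every factor except one is $K_{0,j}$. I need that $K_{0,j}$ maps $\mathcal{S}_\infty$ into itself; this is the main subtlety.

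One route: because $\Gamma H_j\prec I$, each $K_{0,j}$ is invertible. If the commuting product of all the $K_{0,j}$, $j\in\mathcal{I}_q$, preserves $\mathcal{S}_\infty$, then so does its inverse, since an invertible operator preserving a finite-dimensional subspace maps it bijectively onto itself. That still leaves the question of how to isolate a single $K_{0,j}$.

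The cleaner route is to apply the support argument to the individual channels. For a fixed point $\rho$ of the mixture, $\mathrm{supp}\,\mathcal{E}_{\mathcal{I}_q}(\rho)\subseteq\mathrm{supp}\,\rho$ for every $q$, because a sum of PSD operators dominates each summand. Then peel off one factor $\mathcal{E}_i$ at a time; the commutation hypothesis is what allows each $\mathcal{E}_i$ to be placed outermost. This gives that every Kraus operator of every individual $\mathcal{E}_i$, in particular $K_{0,i}$ and $K_{\alpha,i}$, preserves $\mathcal{S}_\infty$. I expect this peeling to be the step requiring care.

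\textbf{Step 3: orthogonality to $\bra{\psi}L$.} Since every factor preserves $\mathcal{S}_\infty$ and $\ket{\psi}$ is an eigenvector of every $K^\dagger$, a trace argument gives the orthogonality. Compare $\bra{\psi}\mathcal{E}(\rho_\infty)\ket{\psi}=\bra{\psi}\rho_\infty\ket{\psi}$. Each $K^\dagger_{0,j}$ fixes $\ket{\psi}$ and each $K_{\alpha,j}^\dagger\ket{\psi}$ is typically nonzero, so this is not immediate; instead use trace preservation. A cleaner approach is to show $\mathrm{Tr}(P_\psi^\perp\rho_\infty)$ strictly decreases unless $\bra{\psi}K_{\alpha,i}P_{\mathcal{S}_\infty}=0$. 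For this, note that $\ket{\psi}\!\bra{\psi}$ is invariant and $K_{\alpha,i}\ket{\psi}\propto\ket{\psi}$, so the block structure of Lemma~\ref{lemma:no_imaginary} applies: writing $\rho_\infty$ in blocks, its $\mathcal{K}^\perp$ block $W$ satisfies $\mathcal{F}(W)=W$ with $\mathcal{F}$ trace-nonincreasing. Equality of traces forces $\sum\bra{v_\alpha}W\ket{v_\alpha}=0$, i.e.\ $\bra{\psi}K P_{\mathrm{supp}W}=0$ for all Kraus operators $K$. Applying this to the individual channels via the peeling above gives Eq.~\eqref{eq:orthogonality_under_jump} for each $L_{\alpha,i}$ on $\mathcal{S}_\infty$ (after removing the $\ket{\psi}$ component, which is harmless).

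\textbf{Step 4: conclusion.} Step 2 gives invariance of $\mathcal{S}_\infty$ under each $L_{\alpha,i}$ and Step 3 gives orthogonality under each $L_{\alpha,i}$. From here the coefficient argument of Lemma~\ref{lemma:unique_fp_lindblad} applies verbatim: expand $\ket{\phi}\in\mathcal{S}_\infty$ in the basis $X\ket{\alpha_1\dots\alpha_N}$, apply products of jumps, and conclude every coefficient except $\phi_{0\dots0}$ vanishes. Hence $\rho_\infty\propto\ket{\psi}\!\bra{\psi}$.
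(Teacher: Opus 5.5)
The gap is the peeling step, and everything after Step 1 depends on it. What you actually establish in Steps 2 and 3 is a statement about the Kraus operators of $\mathcal{E}$, namely the products $J_{q,\vec\beta}=k^{-1/2}\prod_{i\in\mathcal{I}_q}K_{\beta_i,i}$: these preserve $\mathcal{S}_\infty$, and $\bra{\psi}JP_{\mathcal{S}_\infty}=0$ holds for the jump-containing ones. Step 4, however, invokes Lemma~\ref{lemma:unique_fp_lindblad} for the single jumps $L_{\alpha,j}$.

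The domination $\mathcal{E}_{\mathcal{I}_q}(\rho_\infty)\preceq k\rho_\infty$ does not pass to a single factor. If you write $\mathcal{E}_{\mathcal{I}_q}=\mathcal{E}_j\circ\mathcal{E}_{\mathcal{I}_q\setminus\{j\}}$, then $\mathcal{E}_j$ acts on $\sigma=\mathcal{E}_{\mathcal{I}_q\setminus\{j\}}(\rho_\infty)$. The support of $\sigma$ is $R\,\mathcal{S}_\infty$ with $R=\prod_{l\in\mathcal{I}_q\setminus\{j\}}K_{0,l}$, not $\mathcal{S}_\infty$. So all you learn is $K_{\beta,j}R\,\mathcal{S}_\infty\subseteq\mathcal{S}_\infty$, which you already had. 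Because $\rho_\infty$ is fixed only by the mixture, no bound of the form $\mathcal{E}_j(\rho_\infty)\preceq C\rho_\infty$ is available. The block argument of Lemma~\ref{lemma:no_imaginary} likewise gives orthogonality only for the composite $J$.

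The missing idea is to isolate $L_{\alpha,j}$ rather than $K_{0,j}$; you never need $K_{0,j}\mathcal{S}_\infty\subseteq\mathcal{S}_\infty$. Your Route 1 already contains half of this.

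\begin{itemize}
\item Since $L_{\beta,j}L_{\alpha,j}=0$, the operator $G_j=\sum_\beta L_{\beta,j}^\dagger L_{\beta,j}$ satisfies $G_jL_{\alpha,j}=0$. Hence $K_{0,j}^{-1}L_{\alpha,j}=L_{\alpha,j}$.
\item Use the commutation hypothesis to put $\mathcal{E}_j$ innermost: $\mathcal{E}_{\mathcal{I}_q}=\big(\prod_{l\neq j}\mathcal{E}_l\big)\circ\mathcal{E}_j$. Support invariance holds for any Kraus representation, and so does the paper's compression by $\ket{\psi}$: the no-jump terms reproduce $\bra{\psi}\rho_\infty\ket{\psi}$, so every other term $\bra{\psi}J\rho_\infty J^\dagger\ket{\psi}$ vanishes. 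This representation contains $J_0=k^{-1/2}RK_{0,j}$ and $J_\alpha=k^{-1/2}\sqrt{\Gamma}\,RL_{\alpha,j}$.
\item $J_0$ is invertible and preserves $\mathcal{S}_\infty$, so $J_0^{-1}\mathcal{S}_\infty=\mathcal{S}_\infty$. Therefore $J_0^{-1}J_\alpha=\sqrt{\Gamma}\,L_{\alpha,j}$ preserves $\mathcal{S}_\infty$.
\item Each $K_{0,l}$ is Hermitian with $K_{0,l}\ket{\psi}=\ket{\psi}$, so $\bra{\psi}J_0^{-1}=\sqrt{k}\bra{\psi}$. Hence $\bra{\psi}L_{\alpha,j}P_{\mathcal{S}_\infty}\propto\bra{\psi}J_\alpha P_{\mathcal{S}_\infty}=0$.
\end{itemize}

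The order matters: with $\mathcal{E}_j$ outermost, as you proposed, you would only reach $L_{\alpha,j}K_{0,j}^{-1}$ or $R^{-1}L_{\alpha,j}R$.

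With invariance and orthogonality for each single $L_{\alpha,j}$, the extraction from Lemma~\ref{lemma:unique_fp_lindblad} finishes the proof. Run it as a downward induction on $|\mathrm{supp}\,\vec\alpha|$, because $\prod_{i:\alpha_i\neq 0}L_{\alpha_i,i}\ket{\phi}=\phi_{\vec\alpha}\ket{\psi}$ holds only once the coefficients of larger support are known to vanish.

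The paper does not isolate single jumps. It keeps the group operators $J_{q,\vec\alpha}$ and extracts $\phi_{\vec\alpha}$ from $J_{1,\vec\alpha_1}\cdots J_{k,\vec\alpha_k}\ket{\phi}$, using $K_{0,i}X\ket{\vec\gamma}=X\ket{\vec\gamma}$ for $\gamma_i=0$. Your reduction, once repaired, is the cleaner route for two reasons. It never lets a $K_{0,i}$ act on components with $\gamma_i\neq 0$, where $K_{0,i}$ is not the identity. And it uses only commutation of the channels, not commutation of their Kraus operators.
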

\begin{proof}
The proof of this lemma follows a strategy similar to that of lemma \ref{lemma:unique_fp_lindblad}. We begin by expressing $\mathcal{E}$ in the Krauss representation as follows:
\begin{align}
    \mathcal{E}(X) = \sum_{q = 1}^k \sum_{\vec{\alpha} \in \{0, 1, \dots d\}^{\abs{\mathcal{I}_q}}} J_{q, \vec{\alpha}} X J_{q, \vec{\alpha}}^\dagger \text{ where } J_{q, \vec{\alpha}} = \frac{1}{\sqrt{k}}\prod_{i = 1}^{\abs{\mathcal{I}_q}} K_{\alpha_i, \mathcal{I}_q(i)},
\end{align}
where $\mathcal{I}_q(i)$ is the $i^\text{th}$ element of $\mathcal{I}_q$. Note that the Kraus operators $K_{\alpha_i, \mathcal{I}_q(i)}$ and $K_{\alpha_j, \mathcal{I}_q(j)}$ commute for $i \neq j$ since the channels $\mathcal{E}_{\mathcal{I}_q(i)}$ and $\mathcal{E}_{\mathcal{I}_q(j)}$ commute. Suppose $\rho_\infty$ is a fixed point of $\mathcal{E}$ i.e., $\mathcal{E}(\rho_\infty) = \rho_\infty$: Denote by $\mathcal{S}_\infty = \{\ket{e} : \ket{e} \text{ is an eigenvector of }\rho_\infty\}$, $P_{\mathcal{S}_\infty}$ the projector on $\mathcal{S}_\infty$ and $P_{\mathcal{S}_\infty}^\perp = I - P_{\mathcal{S}_\infty}$. Then, since $P_{\mathcal{S}_\infty}^\perp \mathcal{E}(\rho_\infty)P_{\mathcal{S}_\infty}^\perp = P_{\mathcal{S}_\infty}^\perp \rho_\infty P_{\mathcal{S}_\infty}^\perp = 0$, we obtain that $P^\perp_{\mathcal{S}_\infty}J_{q, \vec{\alpha}}\rho_\infty J_{q, \vec{\alpha}}^\dagger P^\perp_{\mathcal{S}_\infty} = 0$. Thus, we obtain that
\begin{align}\label{eq:invariant_subspace_channel}
\forall \ket{\phi} \in \mathcal{S}_\infty \text{ and }\forall \vec{\alpha} \in \{0, 1 \dots d\}^{\abs{\mathcal{I}_q}}: J_{q, \vec{\alpha}} \ket{\phi} \in \mathcal{S}_\infty.
\end{align}
Next, we observe that $K_{0, i} \ket{\psi} = K_{0, i}^\dagger \ket{\psi} = \ket{\psi} \ \forall i \in \{1, 2 \dots N\}$. Thus, we obtain that $J_{q, 00 \dots 0} \ket{\psi} = \ket{\psi}$ and therefore from $\mathcal{E}(\rho_\infty) = \rho_\infty$, it follows that $\sum_{q = 1}^k \sum_{\vec{\alpha} \neq 00\dots 0}\bra{\psi} J_{q, \vec{\alpha}} \rho_\infty J_{q, \vec{\alpha}}^\dagger \ket{\psi} = 0 \implies \bra{\psi} J_{q, \vec{\alpha}} \rho_\infty J_{q, \vec{\alpha}}^\dagger \ket{\psi} = 0 \ \forall \ \vec{\alpha} \neq 00 \dots 0$. Said differently, this is equivalent to
\begin{align}\label{eq:orthogonal_subspace_channel}
\forall \ket{\phi} \in \mathcal{S}_\infty \text{ and }\forall \vec{\alpha} \neq 00\dots 0: \bra{\psi} J_{q, \vec{\alpha}} \ket{\phi} = 0.
\end{align}
Note also that from Eqs.~\eqref{eq:invariant_subspace_channel} and~\eqref{eq:orthogonal_subspace_channel}, it follows that
\begin{align}\label{eq:orthogonal_subspace_channel_multiple}
    \forall \phi \in \mathcal{S}_\infty: \langle \psi|J_{q_1, \vec{\alpha}_1} J_{q_2,  \vec{\alpha}_2} \dots J_{q_m, \vec{\alpha}_m }\ket{\phi} = 0 \text{ if }\vec{\alpha}_1 \neq 00 \dots 0.
\end{align}
We now prove that $\mathcal{S}_\infty \subseteq \text{span}(\ket{\psi})$ and therefore $\rho_\infty \propto \ket{\psi}\!\bra{\psi}$. We begin by expressing
\[
\ket{\phi} = X\sum_{\alpha_1, \alpha_2 \dots \alpha_N = 0}^d \phi_{\alpha_1, \alpha_2 \dots \alpha_N} \ket{\alpha_1, \alpha_2 \dots \alpha_N}.
\]
Now, consider a fixed $\vec{\alpha} = \{\alpha_1, \alpha_2 \dots \alpha_N\} \neq \{0,0 \dots 0\}$: We define $\vec{\alpha}_q = \{\vec{\alpha}_i : i\in \mathcal{I}_q\}$ i.e., the restriction of $\vec{\alpha}$ to $\mathcal{I}_q$. Since $\vec{\alpha} \neq \{0, 0\dots 0\}$, at least one of $\vec{\alpha}_1, \vec{\alpha}_2 \dots \vec{\alpha}_k$ is not $\{0, 0\dots 0\}$ . Without loss of generality, we assume that it is $\vec{\alpha}_1$. Then,
\begin{align}\label{eq:extracting_coefficient}
J_{1, \vec{\alpha}_1} J_{2, \vec{\alpha}_1}\dots J_{k, \vec{\alpha}_k}\ket{\phi} = \phi_{\alpha_1, \alpha_2 \dots \alpha_N} X \ket{0, 0 \dots 0}.
\end{align}
To see this, we note that if $\alpha_i = 0$, then $K_{0, i} X \ket{\alpha_1, \alpha_2 \dots \alpha_N} = X \ket{\alpha_1, \alpha_2 \dots \alpha_N}$ and if $\alpha_i \neq 0$, then $K_{\alpha_i, i} X \ket{\alpha_1, \alpha_2 \dots \alpha_N} = X\ket{\alpha_1 \dots \alpha_{i - 1}, 0, \alpha_{i + 1} \dots \alpha_N}$. Then, using the fact that $J_{q, \vec{\beta}}$ is a product of commuting operators, we obtain that if $\vec{\alpha}_q = \vec{\beta}$, then
\[
J_{q, \vec{\beta}} X \ket{\alpha_1, \alpha_2 \dots \alpha_N} = X \ket{\alpha_1', \alpha_2' \dots \alpha_N'} \text{ where }\alpha_i' = \begin{cases}
    0 & \text{ if } i \in \mathcal{I}_q, \\
    \alpha_i & \text{ if } i \notin \mathcal{I}_q.
\end{cases}
\]
A repeated application of this identity yields Eq.~\eqref{eq:extracting_coefficient}. Finally, using Eq.~\eqref{eq:orthogonal_subspace_channel_multiple} together with Eq.~\eqref{eq:extracting_coefficient} yields $\phi_{\alpha_1, \alpha_2 \dots \alpha_N} = 0$. Thus, we conclude that $\ket{\phi} = \phi_{0, 0 \dots 0} X\ket{0, 0 \dots 0} \in \text{span}(\ket{\psi})$.
\end{proof}
Now, we construct the channel to prepare the PEPS $\ket{\psi}$ in discrete-time and prove its mixing properties. We define the local channels as follows:

\begin{subequations}\label{eq:discrete_channel_peps}
    \begin{enumerate}
        \item[(i)] Defining $S_i = \text{Im}(A_i)$ as the image of the map $A_i$ defined in Eq.~\eqref{eq:tensorA}, we construct a single-site channel $\mathcal{E}_i$ for every vertex $i \in \mathcal{V}$ that acts as the identity on $S_i$ and redistributes population from the orthogonal complement into $S_i$:
        \[
            \mathcal{E}_i(\cdot) = P_{S_i} (\cdot) P_{S_i} + \frac{P_{S_i}}{\text{dim}(S_i)} \text{Tr}(P_{S_i}^\perp (\cdot)).
        \]
        
        \item[(ii)] For every edge $e = (i,j) \in \text{E}$, we construct a two-site channel $\mathcal{E}_e$ with Kraus operators $K_{0,e}, K_{1,e} \dots K_{D^2-1,e}$ where:
        \begin{align}\label{{eq:kraus_ops}}
            K_{\alpha,e} &= \begin{cases} 
            \sqrt{\Gamma} L_{\alpha,e} & \text{for } \alpha \neq 0, \\ 
            (I - 2\Gamma H_{\text{eff},e})^{1/2} & \text{for } \alpha = 0, \end{cases}
        \end{align}
        with $L_{1,e}, L_{2,e} \dots L_{D^2-1,e}$ being the jump operators from Eq.~(4) and $H_{\text{eff},e} = \frac{1}{2}\sum_\alpha L_{\alpha,e}^\dagger L_{\alpha,e}$. We remark that $\Gamma$ is chosen to be small enough such that $\Gamma H_{\text{eff},e} \preceq I$ and thus $K_{0,e}$ is well-defined.
    \end{enumerate}
\end{subequations}
Now, to construct the full channel on all the sites, as discussed in the main text, we partition the edges $\text{E}$ into disjoint subsets $\text{E}_1, \text{E}_2 \dots \text{E}_k$ such that any two edges in the same subset do not share a vertex and the number of such subsets $k \leq O(\frak{d}_0)$. For each subset $\text{E}_i$, we construct the channel $\mathcal{E}_{\text{E}_i}$ via
\[
\mathcal{E}_{\text{E}_i} = \prod_{e \in \text{E}_i} \mathcal{E}_e.
\]
We note that since no two edges in $\text{E}_i$ share the same vertex, the two-site channels $\mathcal{E}_e$ for $e \in \text{E}_i$ commute and thus can be applied simultaneously. Now, we construct the channel $\mathcal{E}$ by picking one of the channels $\mathcal{E}_{\text{E}_i}$ uniformly at random and applying them. Mathematically, this corresponds to a channel $\mathcal{E}$ given by
\begin{align}\label{eq:local_channel}
\mathcal{E} = \left (\frac{1}{k} \sum_{i = 1}^k \mathcal{E}_{\text{E}_i}\right ) \circ\prod_{i\in \mathcal{V}}\mathcal{E}_i 
\end{align}
Now we prove that $\mathcal{E}$ has the injective PEPS $\psi$ as its unique fixed point and it also has other purely imaginary eigenvalues.
\begin{proof}[Proof of Theorem 4]
The proof proceeds in three main steps: restricting the dynamics to the valid physical subspace, establishing the uniqueness of the fixed point within that subspace, and characterizing the peripheral spectrum.

Suppose $\rho_\infty$ is a steady state satisfying $\mathcal{E}(\rho_\infty) = \rho_\infty$. We evaluate the support of this state by multiplying the steady-state equation by the orthogonal projector $P_{S_i}^\perp$ and taking the trace. By construction, the single-site channel $\mathcal{E}_i$ maps any input state strictly into the valid subspace $S_i$. Therefore, for any state $\sigma$, $\text{Tr}(P_{S_i}^\perp \mathcal{E}_i(\sigma)) = 0$. 

Let $\sigma' = \prod_{i\in \mathcal{V}}\mathcal{E}_i(\rho_\infty)$. This intermediate state is fully supported within $\bigotimes_i S_i$. Next, we apply the two-site channels. Because the constituent jump operators $L_{\alpha,e}$ and the effective Hamiltonian $H_{\text{eff},e}$ act exclusively within the valid subspace ($L_{\alpha, e} P_{S_i}^\perp = P_{S_i}^\perp L_{\alpha, e} = 0$), the Kraus operators $K_{\alpha,e}$ leave the orthogonal complement invariant. Consequently, the edge channels $\mathcal{E}_e$ cannot map populations out of the valid subspace, i.e., $\text{Tr}(P_{S_i}^\perp \mathcal{E}_e(\sigma')) = 0$. Taking the trace over the full steady-state equation yields:
\begin{equation}
    \text{Tr}\left(P_{S_i}^\perp \mathcal{E}(\rho_\infty)\right) = \text{Tr}\left(P_{S_i}^\perp \rho_\infty\right) = 0.
\end{equation}
Thus, $P_{S_i}^\perp \rho_\infty P_{S_i}^\perp = 0$, i.e., the steady state $\rho_\infty$ has strictly zero support in the orthogonal complement $S_i^\perp$ for all $i \in \mathcal{V}$.

Once confined to the valid subspace, the single-site projectors act as the identity ($P_{S_i} \rho_\infty P_{S_i} = \rho_\infty$), meaning $\prod_{i\in \mathcal{V}}\mathcal{E}_i(\rho_\infty) = \rho_\infty$. The steady-state condition thus reduces to:
\begin{equation}
    \rho_\infty = \frac{1}{k} \sum_{i = 1}^k \mathcal{E}_{\textnormal{E}_i}(\rho_\infty).
\end{equation}
Next we wish to invoke Lemma~\ref{lemma:unique_fp_channel} and for that we set $\bigotimes_{e\in E} \ket{\phi_0}_e$ as the local vacuum state $\ket{0}^{\otimes N}$ from the Lemma, and let the global invertible operator $X = \bigotimes_{i \in \mathcal{V}} A_i$, then the Kraus operators take the exact form $K_{\alpha, (i,j)} = \sqrt{\Gamma}\ket{\phi_0}_{(i,j)} \bra{\phi_\alpha}X^{-1}$. Then Lemma~\ref{lemma:unique_fp_channel} implies $\mathcal{E}(\rho_\infty) = \rho_\infty$ if and only if $\rho_\infty \propto  \ket{\psi}\!\bra{\psi}$ where $\ket{\psi} = X \ket{0}^{\otimes N}=\left( \bigotimes_{i\in \mathcal{V}} A_i \right) \bigotimes_{e\in \text{E}} \ket{\phi_0}_{e}.$

Finally, we must ensure the discrete-time mapping does not support persistent oscillations. Having established that $\ket{\psi}\!\bra{\psi}$ is its unique fixed point, we directly invoke Lemma~\ref{lemma:no_imaginary}, which dictates that $\mathcal{E}$ has no other eigenvalues on the unit circle. This concludes the proof.
\end{proof}
\subsection{Rapid mixing of the discrete-time channel for near-isometric PEPS}

For the quantum channel to be well-defined, the operator inside the square root for $K_{0,e}$ must be positive semi-definite. Thus, we impose a constraint upon $\Gamma$:
\begin{equation}
    \label{eq:gamma_constraint1}
    \Gamma \leq \frac{1}{2}\left(\frac{1-\delta}{1+\delta}\right)^2.
\end{equation}
Because $\norm{H_{\textnormal{eff},(i,j)}} \leq 2(1+\delta)^2(1-\delta)^{-2}$, this constraint ensures:
\begin{equation}\label{eq:norm_X}
    \norm{2\Gamma H_{\textnormal{eff},(i,j)}}\leq 1.
\end{equation}
We define one time-step as the application of the full global channel $\mathcal{E}$, which is composed of the single-site correction layer and the randomly sampled two-site sub-layers. The number of such sub-layers $k$ is bounded by $\mathcal{O}(\mathfrak{d}_0)$, where $\mathfrak{d}_0$ is the maximum degree of the graph.

To analyze the expectation value of the parent Hamiltonian in the Heisenberg picture, we bound the action of the local quantum two-site channels $\mathcal{E}_{i,j}$ in three cases.

\begin{lemma}[Discrete Diagonal Bound - Case I]\label{lemma:discrete_diag}
The action of a local channel $\mathcal{E}_{i,j}^\dagger$ on the parent Hamiltonian term $h_{i,j}$ corresponding to the same edge is bounded by:
\begin{align}
    \tr{}{\mathcal{E}^\dagger_{i,j} (h_{i,j})\rho(t)}  \leq \left ( 1 -\Gamma\left ( 1-8\delta-\frac{40\delta}{1-8\delta}\right ) + 72 \Gamma^2\left(\frac{1}{1-8\delta}\right)\right )h_{i,j}(t).
\end{align}
\end{lemma}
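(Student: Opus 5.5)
We compute the Heisenberg-picture action of the edge channel directly from its Kraus operators. Write $G := 2\Gamma H_{\textnormal{eff},(i,j)}$, so that $K_{0,(i,j)} = (I-G)^{1/2}$ and $K_{\alpha,(i,j)} = \sqrt{\Gamma} L_{\alpha,(i,j)}$ for $\alpha\neq 0$. Then
\[
\mathcal{E}^\dagger_{i,j}(h_{i,j}) = (I-G)^{1/2} h_{i,j} (I-G)^{1/2} + \Gamma \sum_{\alpha\neq 0} L_{\alpha,(i,j)}^\dagger h_{i,j} L_{\alpha,(i,j)} .
\]
The jump term vanishes. The range of $L_{\alpha,(i,j)}$ lies in the span of states of the form $(A_i\otimes A_j)\ket{\phi_0}\otimes(\cdot)$, which is exactly the kernel $S_{i,j}$ of $h_{i,j}$. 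Hence $h_{i,j}L_{\alpha,(i,j)}=0$, just as in the proof of Lemma~\ref{lemma:diag_dissipation}. It remains to handle the no-jump term.

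For the no-jump term, expand the square root as $(I-G)^{1/2} = I - \tfrac12 G + R$. Here $R = \sum_{n\ge 2} c_n G^n$ with $c_n\le 0$, which converges in norm because $\norm{G}\le 1$ by Eq.~\eqref{eq:norm_X}. Since $f(x)=1-\tfrac{x}{2}-\sqrt{1-x}\ge 0$ on $[0,1]$ and $f(x)\le x^2/2$ there, functional calculus gives $0\preceq -R\preceq \tfrac12 G^2$. Consequently $\norm{R}\le \tfrac12\norm{G}^2 \le 2\Gamma^2\norm{H_{\textnormal{eff},(i,j)}}^2$. Substituting the expansion,
\[
(I-G)^{1/2} h_{i,j}(I-G)^{1/2} = h_{i,j} - \Gamma\{H_{\textnormal{eff},(i,j)},h_{i,j}\} + \mathcal{R},
\]
where $\mathcal{R}$ collects $\tfrac14 G h_{i,j} G$, the cross terms $\{R,h_{i,j}\}$, and the terms $-\tfrac12(Gh_{i,j}R + Rh_{i,j}G)$ and $Rh_{i,j}R$.

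The first-order piece $-\Gamma\,\tr{}{\{H_{\textnormal{eff},(i,j)},h_{i,j}\}\rho(t)}$ is exactly $\Gamma$ times the quantity bounded in Lemma~\ref{lemma:diag_dissipation}. It is therefore at most $-\Gamma\bigl(1-8\delta-\tfrac{40\delta}{1-8\delta}\bigr)h_{i,j}(t)$, which produces the linear term of the claim.

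For the remainder, every piece of $\mathcal{R}$ is sandwiched by $P_{S_{i,j}^\perp}$ on both sides. This holds because $G$, $R$ and $h_{i,j}$ are each supported on $S_{i,j}^\perp$: they are functions of $H_{\textnormal{eff},(i,j)}$ without constant term, and $H_{\textnormal{eff},(i,j)}$ is supported there. Hence
\[
\abs{\tr{}{\mathcal{R}\rho(t)}} \le \norm{\mathcal{R}}\,\tr{}{P_{S_{i,j}^\perp}\rho(t)} \le \frac{\norm{\mathcal{R}}}{1-8\delta}\,h_{i,j}(t),
\]
using the local gap of Lemma~\ref{lemma:local_ham_bounds}, as in Eq.~\eqref{eq:P_same}.

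We then bound $\norm{\mathcal{R}}$ by $\Gamma^2$ times constants. The inputs are $\norm{h_{i,j}}\le 4$ and $\norm{H_{\textnormal{eff},(i,j)}}\le \norm{\tfrac12 h_{i,j}}+5\delta\le 2+5\delta$ from Lemma~\ref{lemma:heff_and_interference}. The higher-order terms involving $R$ are absorbed using $\norm{G}\le 1$, which reduces every term to $O(\Gamma^2)$.

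The main obstacle is getting the constant $72$ exactly. A naive triangle-inequality bound, $\norm{\mathcal{R}}\le 4\bigl(\Gamma^2\norm{H_{\textnormal{eff}}}^2 + 2\norm{R}+\dots\bigr)$, may overshoot $72$. We would therefore first try grouping $\tfrac14 Gh_{i,j}G+\{R,h_{i,j}\}$ and using the sign information $-R\succeq 0$, which makes part of the cross term helpful rather than harmful. Only if that fails would we loosen to the stated constant via $\delta\le 1/2$-type estimates. In any case the structure of the bound is $1-\Gamma(\cdots)+O(\Gamma^2)/(1-8\delta)$, as claimed.
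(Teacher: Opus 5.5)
Your argument follows the paper's proof step for step. The shared steps are:
\begin{itemize}
\item the Kraus form of $\mathcal{E}^\dagger_{i,j}$;
\item the vanishing jump term, since $h_{i,j}L_{\alpha,(i,j)}=0$;
\item the expansion $\sqrt{I-G}=I-\tfrac12 G+R$ with $\|R\|\le\tfrac12\|G\|^2$ (your justification via $0\preceq -R\preceq \tfrac12 G^2$ is more careful than the paper's one-line claim);
\item Lemma~\ref{lemma:diag_dissipation} for the first-order term;
\item support on $S_{i,j}^\perp$ together with the local gap for the remainder.
\end{itemize}
The one step you leave open is the explicit constant $72$, and with the inputs you list it does not close. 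The term-by-term bound gives $\|\mathcal{R}\|\le\|h_{i,j}\|\bigl(\tfrac14\|G\|^2+2\|R\|+\|G\|\|R\|+\|R\|^2\bigr)\le 2\|h_{i,j}\|\,\|G\|^2$, which is exactly the paper's estimate. But with $\|H_{\textnormal{eff},(i,j)}\|\le 2+5\delta$ and $\|h_{i,j}\|\le 4$ this becomes $32(2+5\delta)^2\Gamma^2\ge 128\Gamma^2$. Your fallback cannot rescue this. Loosening via $\delta\le 1/2$ can only enlarge constants. The sign information $-R\succeq 0$ does not obviously help either: $R$ is a function of $H_{\textnormal{eff},(i,j)}$, which does not commute with $h_{i,j}$, so $\{R,h_{i,j}\}$ is not sign-definite.

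The repair is to bound $\|H_{\textnormal{eff},(i,j)}\|$ directly instead of through Lemma~\ref{lemma:heff_and_interference}. Summing over $\alpha$ gives $2H_{\textnormal{eff},(i,j)}=(A_i\otimes A_j)^{-\dagger}(Q\otimes M)(A_i\otimes A_j)^{-1}$. Here $M=\langle\phi_0|\,A_i^\dagger A_i\otimes A_j^\dagger A_j\,|\phi_0\rangle$ acts on the remaining virtual legs and satisfies $\|M\|\le(1+\delta)^2$. Hence $\|H_{\textnormal{eff},(i,j)}\|\le\tfrac12\bigl(\tfrac{1+\delta}{1-\delta}\bigr)^2$. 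The lemma divides by $1-8\delta$, so it is only meaningful for $\delta<1/8$, and there this bound is below $1$. So $\|G\|\le 2\Gamma$ and $\|\mathcal{R}\|\le 32\Gamma^2\le 72\Gamma^2$. The paper's own step $2\|h_{i,j}\|\|X_{i,j}\|^2\le 72\Gamma^2$ tacitly needs $\|X_{i,j}\|\le 3\Gamma$. That also requires this sharper bound, not the stated $\|H_{\textnormal{eff},(i,j)}\|\le 2(1+\delta)^2(1-\delta)^{-2}$. A smaller point: $h_{i,j}$ is not a function of $H_{\textnormal{eff},(i,j)}$. It is supported on $S_{i,j}^\perp$ simply because $S_{i,j}$ is its kernel.
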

\begin{proof}
Let $X_{i,j}:= 2\Gamma H_{\textnormal{eff},(i,j)}$. The dual channel action is:
\begin{align*}
    \mathcal{E}^\dagger_{i,j} (h_{i,j})&=\Gamma \sum\limits_{\alpha \neq 0}  L^\dagger_{\alpha, (i,j)} h_{i,j} L_{\alpha, (i,j)}+K^\dagger_{0, (i,j)} h_{i,j}K_{0, (i,j)}\\
    &=\sqrt{I-X_{i,j}} h_{i,j}\sqrt{I-X_{i,j}},
\end{align*}
where the first term vanishes because the jump operators annihilate the local ground state. Next, we use the Taylor expansion for the square root:
\begin{align}\label{eq:taylor}
    &\sqrt{I- X_{i,j} }= I  - \frac12 X_{i,j} + Y_{i,j}, \notag \text{ where }\\
    & \qquad Y_{i,j}= -\frac18 X_{i,j}^2 -\frac{1}{16}X_{i,j}^3 -\frac{5}{128}X_{i,j}^4-\dots, 
\end{align}
Since $\norm{X_{i,j}}\leq 1$, we have $\norm{Y_{i,j}}\leq \frac12\norm{ X_{i,j} }^2$. Substituting this into the channel action yields:
\begin{align}\label{eq:same_edge}
    &\mathcal{E}^\dagger_{i,j} (h_{i,j}) = h_{i,j}-\frac12 \{ h_{i,j}, X_{i,j} \}+ R_{i,j}(h_{i,j}),     \end{align}
where we define the remainder superoperator $R_{i,j}(O)$ as:
\begin{align}\label{eq:Rij(O)}
    R_{i,j}(O) & = \frac14 X_{i,j} O X_{i,j}+\{h_{i,j} , Y_{i,j}\}+ Y_{i,j} h_{i,j} Y_{i,j}  -\frac12( X_{i,j} h_{i,j} Y_{i,j}+Y_{i,j} h_{i,j} X_{i,j}).
\end{align}
Note that the operators $h_{i,j}, X_{i,j}, Y_{i,j}$ act entirely within the support of the projector $P_{S_{i,j}^\perp}$. Therefore:
\begin{align}\label{eq:R_rho}
    \tr{}{R_{i,j}(h_{i,j})\rho(t)}=\tr{}{R_{i,j}(h_{i,j})P_{S_{i,j}^\perp}\rho(t)P_{S_{i,j}^\perp}},
\end{align}
which implies $\vert \tr{}{R_{i,j}(h_{i,j})\rho(t)} \vert \leq \norm{R_{i,j}(h_{i,j})} \Vert P_{S_{i,j}^\perp}\rho(t)P_{S_{i,j}^\perp}\Vert_1$. 
Using the bounds $\norm{h_{i,j}} \leq 4$ and $\norm{X_{i,j}} \leq 1$, we get:
\begin{align}\label{eq:norm_Rij}
    \norm{R_{i,j}(h_{i,j})}\leq 2 \norm{h_{i,j}}\norm{ X_{i,j} }^2 \leq 72 \Gamma^2.
\end{align}   
Using $\Vert P_{S_{i,j}^\perp}\rho(t)P_{S_{i,j}^\perp}\Vert_1 \leq h_{i,j}(t)/(1-8\delta)$, we bound the remainder:
\begin{align}
    \tr{}{R_{i,j}(h_{i,j})\rho(t)}\leq 72 \Gamma^2 \left (\frac{h_{i,j}(t)}{1-8\delta}\right ).\label{eq:lemma11_1}
\end{align}
Furthermore, using Lemma~\ref{lemma:diag_dissipation}, we obtain that:
\begin{equation}
    \tr{}{-\{ h_{i,j}, \Gamma H_{\textnormal{eff},(i,j)} \}\rho(t)}\leq -\Gamma\left ( 1-8\delta-\frac{40\delta}{1-8\delta}\right )h_{i,j}(t).\label{eq:lemma11_2}
\end{equation}
Summing these contributions from Eqs.~\eqref{eq:lemma11_1} and~\eqref{eq:lemma11_2} gives the stated inequality.
\end{proof}

\begin{lemma}[Discrete Off-Diagonal Bound - Case II]\label{lemma:discrete_offdiag}
For two different edges sharing one vertex (suppose the overlapping vertex is $i$), the interference is bounded by:
\begin{align}
    \tr{}{\mathcal{E}^\dagger_{a,i} (h_{i,j})\rho(t)} \leq h_{a,i}(t)\left(\frac{12\Gamma}{1-8\delta}\right)\left( 4\delta +3\Gamma\right) + h_{i,j}(t)\left( 1+\frac{18\Gamma^2}{1-8\delta}\right).
\end{align}
\end{lemma}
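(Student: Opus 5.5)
We expand the dual channel $\mathcal{E}^\dagger_{a,i}(h_{i,j}) = \Gamma\sum_{\alpha\neq 0} L^\dagger_{\alpha,(a,i)} h_{i,j} L_{\alpha,(a,i)} + K^\dagger_{0,(a,i)} h_{i,j} K_{0,(a,i)}$ and reduce it to the continuous-time generator plus a second-order remainder, exactly as in Lemma~\ref{lemma:discrete_diag}. Writing $X_{a,i} = 2\Gamma H_{\textnormal{eff},(a,i)}$ and using the Taylor expansion \eqref{eq:taylor}, $\sqrt{I-X_{a,i}} = I - \tfrac12 X_{a,i} + Y_{a,i}$ with $\norm{Y_{a,i}}\le \tfrac12\norm{X_{a,i}}^2$, we obtain
\begin{align*}
\mathcal{E}^\dagger_{a,i}(h_{i,j}) = h_{i,j} + \Gamma\,\mathcal{L}^\dagger_{(a,i)}(h_{i,j}) + R_{a,i}(h_{i,j}),
\end{align*}
where $R_{a,i}(h_{i,j}) = \tfrac14 X_{a,i}h_{i,j}X_{a,i} + \{h_{i,j},Y_{a,i}\} + Y_{a,i}h_{i,j}Y_{a,i} - \tfrac12(X_{a,i}h_{i,j}Y_{a,i}+Y_{a,i}h_{i,j}X_{a,i})$. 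The leading term contributes $h_{i,j}(t)$, which matches the ``$1\cdot h_{i,j}(t)$'' in the statement.

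For the first-order term, we do not simply quote Lemma~\ref{lemma:offdiag_interference}, since its constants, with $1/(1-138\delta)$, do not match the target. Instead we redo the decomposition \eqref{eq:Ldaggerai_intermediate}. The interference pieces $\tfrac12\mathrm{Tr}(F_{a,i,j}P_{S^\perp_{a,i}}\rho P_{S^\perp_{a,i}})$ and their adjoints are bounded using $\norm{F_{a,i,j}}\le 48\delta$ from Lemma~\ref{lemma:heff_and_interference}, together with $\Vert P_{S^\perp_{a,i}}\rho P_{S^\perp_{a,i}}\Vert_1\le h_{a,i}(t)/(1-8\delta)$. This gives $\Gamma\cdot 48\delta\, h_{a,i}(t)/(1-8\delta)$, which is the $12\Gamma\cdot 4\delta/(1-8\delta)$ term.

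The delicate point is the cross terms $\mathrm{Tr}(H_{\textnormal{eff},(a,i)}h_{i,j}P_{S_{a,i}}\rho P^\perp_{S_{a,i}})$ and their conjugates. The target statement has no $\delta\, h_{i,j}$ contribution, so I would try to show that these cross terms combine into something controlled by $h_{a,i}$ alone. One route is to use Cauchy--Schwarz with weights $\epsilon^{-1}$ and $\epsilon$ and absorb into the existing coefficients. Another is to note that the sum of the cross terms and the $-\tfrac12\{\cdot\}$ pieces equals $-\tfrac12\mathrm{Tr}(\{h_{i,j},X_{a,i}\}\rho)$ minus a positive part. If these do not work cleanly, I would accept the Lemma~\ref{lemma:proj} route, which adds $O(\delta\Gamma)h_{i,j}$, and track that the final constants are only slightly weaker.

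For the remainder we bound $\mathrm{Tr}(R_{a,i}(h_{i,j})\rho)$. Every term contains at least one factor $X_{a,i}$ or $Y_{a,i}$ on each side, or a $Y$ on one side. Terms sandwiched by $X_{a,i}$ on both sides are supported on $S^\perp_{a,i}$ and are bounded by $\norm{h_{i,j}}\norm{X_{a,i}}^2\, h_{a,i}(t)/(1-8\delta)$. This uses $\norm{h_{i,j}}\le 4$ and $\norm{X_{a,i}}\le 2\Gamma\cdot 2(1+\delta)^2/(1-\delta)^2\le 6\Gamma$ up to constants, and yields the $36\Gamma^2 h_{a,i}/(1-8\delta)$ term. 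The one-sided terms $\{h_{i,j},Y_{a,i}\}$ are handled by Cauchy--Schwarz, $|\mathrm{Tr}(h_{i,j}Y\rho)|\le \norm{Y}\,\norm{h_{i,j}^{1/2}}\,\sqrt{h_{i,j}(t)}\,\sqrt{\mathrm{Tr}(P^\perp_{S_{a,i}}\rho)}$, followed by AM--GM. This splits into the $18\Gamma^2 h_{i,j}(t)/(1-8\delta)$ term and additional $\Gamma^2 h_{a,i}$ terms.

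Summing the three contributions gives the stated inequality. The expected main obstacle is the cross term in the first-order part, because its natural bound via Lemma~\ref{lemma:proj} reintroduces an $h_{i,j}$ dependence with a $\delta$ prefactor that the statement omits.
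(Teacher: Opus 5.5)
Your outline follows the paper's proof step by step: the same expansion $\sqrt{I-X_{a,i}}=I-\tfrac12X_{a,i}+Y_{a,i}$, the same treatment of the remainder, and the same bound $\norm{F_{a,i,j}}\le 48\delta$. But it does not prove the stated inequality. The first-order cross terms $-2\,\mathrm{Re}\,\mathrm{Tr}\big(H_{\textnormal{eff},(a,i)}h_{i,j}P_{S_{a,i}}\rho(t)P_{S^\perp_{a,i}}\big)$ are left open, and your fallback gives a weaker bound.

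Your diagnosis is right, and the paper does not supply the missing idea. It asserts that $\sum_{\alpha\neq0}L^\dagger_{\alpha,(a,i)}h_{i,j}L_{\alpha,(a,i)}-\{h_{i,j},H_{\textnormal{eff},(a,i)}\}$ is supported on $S^\perp_{a,i}$, and so bounds the whole first-order term by $\Gamma\frac{48\delta}{1-8\delta}h_{a,i}(t)$. That claim is false. Because $L_{\alpha,(a,i)}P_{S_{a,i}}=H_{\textnormal{eff},(a,i)}P_{S_{a,i}}=0$, the $P_{S^\perp_{a,i}}(\cdot)P_{S_{a,i}}$ block of this operator is $-H_{\textnormal{eff},(a,i)}h_{i,j}P_{S_{a,i}}=-O_{a,i,j}P_{S_{a,i}}$. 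This is exactly the operator the paper bounds by $18\delta$ (not by zero) in the proof of Lemma~\ref{lemma:offdiag_interference}. There it survives as the $\frac{18\delta}{1-138\delta}\frac{h_{i,j}(t)+3h_{a,i}(t)}{1-8\delta}$ contribution. It vanishes for exactly isometric tensors, but not in general.

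Neither of your suggested repairs removes this term. First, it has no sign, since the phase of the $S_{a,i}$--$S^\perp_{a,i}$ coherence of $\rho$ is free. Second, $\{h_{i,j},H_{\textnormal{eff},(a,i)}\}$ is an anticommutator of non-commuting PSD operators and is not PSD, so there is no positive part to drop. Third, weighted Cauchy--Schwarz only redistributes weight. The term has size $\Gamma\,O(\delta)\big(\sqrt{h_{a,i}(t)h_{i,j}(t)}+h_{a,i}(t)\big)$. Pushing its $h_{i,j}$ coefficient down to $O(\Gamma^2)$ leaves an $O(\delta^2)$ coefficient on $h_{a,i}(t)$. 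That is compatible with the stated form only if $\Gamma$ is at least of order $\delta$. For $\Gamma\ll\delta$, $h_{i,j}(t)$ of order one and $h_{a,i}(t)\sim\Gamma/\delta$, the term is generically of order $\Gamma^{3/2}\delta^{1/2}$. This exceeds the $O(\Gamma^2)$ slack on the right-hand side.

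What is actually provable is your fallback: the stated bound plus $\Gamma\frac{18\delta}{(1-138\delta)(1-8\delta)}\big(h_{i,j}(t)+3h_{a,i}(t)\big)$ via Lemma~\ref{lemma:proj}. That is, the $h_{i,j}$ coefficient becomes $1+O(\Gamma\delta+\Gamma^2)$. This is all Theorem~\ref{theorem:discrete} needs, since there $\Gamma=\Theta(\delta)$ and only $O(\delta^2)$ corrections from the other layers are used.

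Finally, your remainder constants do not sum to the stated ones either. With $\norm{X_{a,i}}\le 3\Gamma$, the sandwiched terms cost $\frac{36\Gamma^2}{1-8\delta}h_{a,i}(t)$, and $\{h_{i,j},Y_{a,i}\}$ adds $\frac{18\Gamma^2}{1-8\delta}h_{a,i}(t)$. That totals $54\Gamma^2$, versus the $12\cdot 3\,\Gamma^2$ in the statement. The paper's own remainder bound $\frac{18\Gamma^2}{1-8\delta}(3h_{a,i}+h_{i,j})$ has the same mismatch.
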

\begin{proof}
Let $X_{a,i}:= 2\Gamma H_{\textnormal{eff},(a,i)}$ and let $Y_{a,i}$ denote the higher order terms as defined in \eqref{eq:taylor}.
\begin{align}
    \mathcal{E}^\dagger_{a,i} (h_{i,j})  &= \Gamma \sum\limits_{\alpha \neq 0}  L^\dagger_{\alpha, (a,i)} h_{i,j} L_{\alpha, (a,i)}+\sqrt{I-X_{a,i}} h_{i,j}\sqrt{I-X_{a,i}}\notag\\
    &=\Gamma \sum\limits_{\alpha \neq 0}  L^\dagger_{\alpha, (a,i)} h_{i,j} L_{\alpha, (a,i)}+ h_{i,j}-\frac12 \{ h_{i,j}, X_{a,i} \}+ R_{a,i}(h_{i,j}),
\end{align}
where $R_{a,i}(h_{i,j})$ is as defined in \eqref{eq:Rij(O)}.
In this case, $Y_{a,i}$ and $X_{a,i}$ preserve the subspace defined by the projector $P_{S_{a,i}^\perp}$, but $h_{i,j}$ does not, which implies:
\begin{align}
    \tr{}{R_{a,i}(h_{i,j})\rho(t)}&=\frac14 \tr{}{ X_{a,i} h_{i,j} X_{a,i}P_{S_{a,i}^\perp}\rho(t)P_{S_{a,i}^\perp}}-  \frac12\tr{}{( X_{a,i} h_{i,j} Y_{a,i}+Y_{a,i} h_{i,j} X_{a,i})P_{S_{a,i}^\perp}\rho(t)P_{S_{a,i}^\perp}}\notag\\
    &\qquad+\tr{}{Y_{a,i} h_{i,j} Y_{a,i}P_{S_{a,i}^\perp}\rho(t)P_{S_{a,i}^\perp}}+\tr{}{\{h_{i,j} , Y_{a,i}\}\rho(t)} .\label{eq:Raij}
\end{align}   
We bound the last term as follows:
\begin{align}
    \tr{}{\{h_{i,j} , Y_{a,i}\}\rho(t)} &= \tr{}{h_{i,j} Y_{a,i}P_{S_{a,i}^\perp}\rho(t)P_{S_{i,j}^\perp}}+\tr{}{Y_{a,i}h_{i,j}P_{S_{i,j}^\perp}\rho(t)P_{S_{a,i}^\perp}}\notag\\
    &\leq 2\norm{Y_{a,i}}\norm{h_{i,j}}\left\Vert P_{S_{a,i}^\perp}\rho(t)P_{S_{i,j}^\perp}\right\Vert_1 \notag \\
    &\leq 18\Gamma^2 \left(\frac{h_{a,i}(t)+h_{i,j}(t)}{1-8\delta}\right).
\end{align}
All the other terms are of the following form and satisfy:
\begin{equation}
    \vert \tr{}{O P_{S_{a,i}^\perp}\rho(t)P_{S_{a,i}^\perp}} \vert \leq \norm{O}\left\Vert P_{S_{a,i}^\perp}\rho(t)P_{S_{a,i}^\perp}\right\Vert_1.
\end{equation}
Bounding these similarly as in Case I, we obtain:
\begin{align}\label{eq:exp_Raij}
    \vert \tr{}{R_{a,i}(h_{i,j})\rho(t)}\vert  &\leq \norm{h_{i,j}}\norm{X_{a,i}}^2\left\Vert P_{S_{a,i}^\perp}\rho(t)P_{S_{a,i}^\perp}\right\Vert_1+\vert \tr{}{\{h_{i,j} , Y_{a,i}\}\rho(t)}\vert\notag\\
    &\leq \frac{18 \Gamma^2}{1-8\delta} \left (3h_{a,i}(t)+h_{i,j}(t)\right ).
\end{align}
Substituting $X_{a,i} = 2\Gamma H_{\textnormal{eff},(a,i)}$ into the first-order expansion, we use the operator $F_{a,i,j}$ defined in Lemma~\ref{lemma:heff_and_interference} to re-express the leading terms:
\begin{align}
    \sum\limits_{\alpha \neq 0}  L^\dagger_{\alpha, (a,i)} h_{i,j} L_{\alpha, (a,i)}- \{ h_{i,j}, H_{\textnormal{eff},(a,i)} \}= F_{a,i,j}+F_{a,i,j}^\dagger,
\end{align}
and this operator is supported exactly on the projected subspace of $P_{S_{a,i}^\perp}$, which enables us to use the bound from Lemma~\ref{lemma:heff_and_interference} as follows:
\begin{align}
    \tr{}{\Gamma\left(\sum\limits_{\alpha \neq 0}  L^\dagger_{\alpha, (a,i)} h_{i,j} L_{\alpha, (a,i)}- \{ h_{i,j}, H_{\textnormal{eff},(a,i)} \}\right)\rho(t)}& \leq \Gamma \norm{F_{a,i,j}+F_{a,i,j}^\dagger}\left\Vert P_{S_{a,i}^\perp}\rho(t)P_{S_{a,i}^\perp}\right\Vert_1 \notag\\
    &\leq \Gamma\frac{48\delta}{1-8\delta}h_{a,i}(t).
\end{align}
Combining these terms yields the final inequality.
\end{proof}
\begin{lemma}[Discrete Two-Edge Interference - Case III]\label{lemma:discrete_two_edge}
When two local channels $\mathcal{E}_{a,i}^\dagger$ and $\mathcal{E}_{j,k}^\dagger$ (where $a\neq i, k \neq j)$ act on $h_{i,j}$, the action is bounded by:
\begin{align}
    \tr{}{\mathcal{E}^\dagger_{a,i} \circ \mathcal{E}^\dagger_{j,k} (h_{i,j})\rho(t)} \leq (1+\alpha)h_{i,j}(t)+\beta(h_{a,i}(t)+h_{j,k}(t)),
\end{align}
with $\alpha=\mathcal{O}(\Gamma^2)$ and $\beta=\mathcal{O}(\Gamma\delta+\Gamma^2)$.
\end{lemma}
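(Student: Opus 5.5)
The plan is to reduce Case III to two applications of Case II (Lemma~\ref{lemma:discrete_offdiag}). The key observation is that $\mathcal{E}_{a,i}$ and $\mathcal{E}_{j,k}$ act on disjoint pairs of sites when $\{a,i\}\cap\{j,k\}=\emptyset$, so they commute. We first write
\[
\mathcal{E}^\dagger_{j,k}(h_{i,j}) = h_{i,j} + \Delta_{j,k},
\]
where, exactly as in the proof of Lemma~\ref{lemma:discrete_offdiag},
\[
\Delta_{j,k} = \Gamma\bigl(F_{k,j,i}+F_{k,j,i}^\dagger\bigr) + R_{j,k}(h_{i,j}).
\]
Here the interference operator is taken with the roles of the vertices relabelled so that the shared vertex is $j$. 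We then apply $\mathcal{E}^\dagger_{a,i}$ and split the result as
\[
\mathcal{E}^\dagger_{a,i}(h_{i,j}) + \mathcal{E}^\dagger_{a,i}(\Delta_{j,k}).
\]
The first term is bounded directly by Lemma~\ref{lemma:discrete_offdiag}:
\[
\tr{}{\mathcal{E}^\dagger_{a,i}(h_{i,j})\rho(t)} \leq \Bigl(1+\tfrac{18\Gamma^2}{1-8\delta}\Bigr)h_{i,j}(t) + \tfrac{12\Gamma(4\delta+3\Gamma)}{1-8\delta}\,h_{a,i}(t).
\]

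For the second term, we pass to the Schrödinger picture:
\[
\tr{}{\mathcal{E}^\dagger_{a,i}(\Delta_{j,k})\rho(t)} = \tr{}{\Delta_{j,k}\,\sigma}, \qquad \sigma = \mathcal{E}_{a,i}(\rho(t)).
\]
Since $\Delta_{j,k}$ is sandwiched by $P_{S_{j,k}^\perp}$ on at least one side, Case II gives
\[
\tr{}{\Delta_{j,k}\sigma} \leq \tfrac{12\Gamma(4\delta+3\Gamma)}{1-8\delta}\,\mathrm{Tr}(h_{j,k}\sigma) + \tfrac{18\Gamma^2}{1-8\delta}\,\mathrm{Tr}(h_{i,j}\sigma),
\]
with the cross term $\|P_{S_{j,k}^\perp}\sigma P_{S_{i,j}^\perp}\|_1$ handled by Cauchy--Schwarz as in Eq.~\eqref{eq:Pdiff}. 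It remains to convert $\mathrm{Tr}(h_{j,k}\sigma)$ and $\mathrm{Tr}(h_{i,j}\sigma)$ back into energies of $\rho(t)$. For $h_{j,k}$ this is immediate: the edge $(j,k)$ is disjoint from $(a,i)$, so
\[
\mathrm{Tr}(h_{j,k}\sigma) = \mathrm{Tr}\bigl(\mathcal{E}^\dagger_{a,i}(h_{j,k})\rho(t)\bigr) = h_{j,k}(t).
\]
For $h_{i,j}$ we use Case II once more to get
\[
\mathrm{Tr}(h_{i,j}\sigma) \leq (1+O(\Gamma^2))\,h_{i,j}(t) + O(\Gamma\delta+\Gamma^2)\,h_{a,i}(t).
\]
Multiplying by the $O(\Gamma^2)$ prefactor makes this contribution $O(\Gamma^2)h_{i,j}(t) + O(\Gamma^3)h_{a,i}(t)$.

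Collecting the pieces gives
\[
(1+\alpha)\,h_{i,j}(t) + \beta\bigl(h_{a,i}(t)+h_{j,k}(t)\bigr),
\]
with $\alpha = O(\Gamma^2)$ and $\beta = O(\Gamma\delta+\Gamma^2)$, using $\Gamma\leq 1$ and $\delta\leq 1/2$ so that all higher-order products are absorbed.

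The main obstacle is the sign structure. The bound from Case II is one-sided: it is an upper bound on a trace, not an operator inequality. The positivity argument for $\mathrm{Tr}(\Delta_{j,k}\sigma)$ must therefore use only the support property $\Delta_{j,k} = P_{S_{j,k}^\perp}(\cdot) + (\cdot)P_{S_{j,k}^\perp}$ together with trace-norm estimates, and these are valid for any state $\sigma$. We take absolute values throughout, so no sign is needed. A further check is that $F_{k,j,i}$ is bounded by $48\delta$ uniformly, which holds by the same computation as Lemma~\ref{lemma:heff_and_interference} with vertices relabelled.
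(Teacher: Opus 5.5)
Your argument is correct, and it takes a genuinely different and cleaner route than the paper.

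The paper expands $\mathcal{E}^\dagger_{a,i}\circ\mathcal{E}^\dagger_{j,k}(h_{i,j})$ fully in Kraus operators. It bounds the zeroth-order term, the two first-order interference terms and the isolated remainders $R_{a,i}(h_{i,j})$, $R_{j,k}(h_{i,j})$ one by one. It then only asserts that the remaining mixed second-order terms, such as $\Gamma^2\{H_{\mathrm{eff},(a,i)},\{h_{i,j},H_{\mathrm{eff},(j,k)}\}\}$ or $R_{j,k}(R_{a,i}(h_{i,j}))$, are $\mathcal{O}(\Gamma^2)(h_{i,j}+h_{a,i}+h_{j,k})$ via Lemma~\ref{lemma:proj}.

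You never have to touch those terms. By duality you evaluate the $(j,k)$ step on the intermediate state $\sigma=\mathcal{E}_{a,i}(\rho(t))$ and apply Lemma~\ref{lemma:discrete_offdiag} there. This is legitimate because that proof uses only trace-norm and Cauchy--Schwarz estimates valid for any density matrix. Unitality of $\mathcal{E}^\dagger_{a,i}$ on sites disjoint from $(j,k)$ gives $\mathrm{Tr}(h_{j,k}\sigma)=h_{j,k}(t)$. One more application of the same lemma returns $\mathrm{Tr}(h_{i,j}\sigma)$ to $\rho(t)$, which is allowed since its prefactor is nonnegative.

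The split into $h_{i,j}+\Delta_{j,k}$ is not even needed: bounding $\mathrm{Tr}(\mathcal{E}^\dagger_{j,k}(h_{i,j})\sigma)$ by the lemma on $\sigma$ and then on $\rho(t)$ gives the same result. Commutativity of the two channels is also never used. Your route makes Case III a corollary of Case II with explicit $\alpha,\beta$, namely products of the Case II constants. The paper's expansion exposes individual contributions but leaves the cross terms unquantified.

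One caveat applies equally to both proofs, since both inherit Case II. That lemma treats the first-order term $-\Gamma\{h_{i,j},H_{\mathrm{eff},(a,i)}\}$ as if it were supported on $P_{S^\perp_{a,i}}$ on both sides. However, its blocks between $P_{S_{a,i}}$ and $P_{S^\perp_{a,i}}$ are nonzero, of size $\mathcal{O}(\delta)$. They need the Lemma~\ref{lemma:proj} treatment used in the continuous-time Lemma~\ref{lemma:offdiag_interference}. This adds an $\mathcal{O}(\Gamma\delta)$ piece to the coefficient of $h_{i,j}$, so your reduction would give $\alpha=\mathcal{O}(\Gamma\delta+\Gamma^2)$. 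That is equivalent under the choice $\Gamma=\Theta(\delta)$ in Theorem~\ref{theorem:discrete}.
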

\begin{proof}
Because the edges $(a,i)$ and $(j,k)$ are vertex-disjoint, their respective Kraus operators perfectly commute. We fully expand the simultaneous action:
\begin{align}   
    \mathcal{E}^\dagger_{a,i} \circ \mathcal{E}^\dagger_{j,k} (h_{i,j}) &= \Gamma^2 \sum\limits_{\alpha, \beta \neq 0}  L^\dagger_{\alpha, (a,i)}L^\dagger_{\beta, (j,k)} h_{i,j} L_{\beta, (j,k)}L_{\alpha, (a,i)}+\Gamma\sqrt{I-X_{j,k}} \sum\limits_{\alpha \neq 0} \left( L^\dagger_{\alpha, (a,i)} h_{i,j} L_{\alpha, (a,i)}\right)\sqrt{I-X_{j,k}}+\notag\\
    &\qquad \Gamma\sqrt{I-X_{a,i}} \sum\limits_{\alpha \neq 0} \left( L^\dagger_{\alpha, (j,k)} h_{i,j} L_{\alpha, (j,k)}\right)\sqrt{I-X_{a,i}}+ \sqrt{I-X_{a,i}}\sqrt{I-X_{j,k}}h_{i,j}\sqrt{I-X_{j,k}}\sqrt{I-X_{a,i}}\notag\\
    &=h_{i,j}+\Gamma^2 \sum\limits_{\alpha, \beta \neq 0}  L^\dagger_{\alpha, (a,i)}L^\dagger_{\beta, (j,k)} h_{i,j} L_{\beta, (j,k)}L_{\alpha, (a,i)}+\qquad \Gamma \left(\sum\limits_{\alpha \neq 0} L^\dagger_{\alpha, (a,i)} h_{i,j} L_{\alpha, (a,i)} - \{ h_{i,j}, H_{\textnormal{eff},(a,i)}\}\right)+\notag\\
    &\qquad \Gamma \left(\sum\limits_{\alpha \neq 0} L^\dagger_{\alpha, (j,k)} h_{i,j} L_{\alpha, (j,k)} - \{ h_{i,j}, H_{\textnormal{eff},(j,k)}\}\right)+ R_{a,i}(h_{i,j})+R_{j,k}(h_{i,j})+\notag\\
    &\qquad \Gamma^2 \{ H_{\textnormal{eff},(a,i)}, \{h_{i,j},H_{\textnormal{eff},(j,k)}\}\}- \Gamma^2 \left\{\sum\limits_{\alpha \neq 0} L^\dagger_{\alpha, (a,i)} h_{i,j} L_{\alpha, (a,i)}, H_{\textnormal{eff},(j,k)}\right\}- \notag\\
    &\qquad \Gamma^2 \left\{\sum\limits_{\alpha \neq 0} L^\dagger_{\alpha, (j,k)} h_{i,j} L_{\alpha, (j,k)}, H_{\textnormal{eff},(a,i)}\right\} + R_{j,k}(R_{a,i}(h_{i,j}))+ \Gamma R_{j,k}\left( \sum_{\alpha \neq 0} L^\dagger_{\alpha, (a,i)} h_{i,j} L_{\alpha, (a,i)} \right)+\notag\\
    &\qquad \Gamma R_{a,i}\left( \sum_{\alpha \neq 0} L^\dagger_{\alpha, (j,k)} h_{i,j} L_{\alpha, (j,k)} \right)- \Gamma\left(R_{a,i}(\{h_{i,j},H_{\textnormal{eff},(j,k)}\})+R_{j,k}(\{h_{i,j},H_{\textnormal{eff},(a,i)}\})\right).
\end{align}

To rigorously bound the expectation value of this expansion, we group the terms by their order in $\Gamma$ and utilize the interference operators and remainder bounds derived previously in Lemma~\ref{lemma:discrete_offdiag}.

First, the zeroth-order term trivially yields the expectation value $h_{i,j}(t)$.

Second, the first-order $\Gamma$ terms exactly correspond to the interference operators $F$ for each adjacent edge. Following Lemma~\ref{lemma:discrete_offdiag}, their expectation values are bounded individually:
\begin{align}
    \text{Tr}\bigg({\Gamma\left(\sum_{\alpha\neq 0} L^\dagger_{\alpha, (a,i)} h_{i,j} L_{\alpha, (a,i)} - \{ h_{i,j}, H_{\textnormal{eff},(a,i)}\}\right)\rho(t)}\bigg) &= \tr{}{\Gamma(F_{a,i,j} + F^\dagger_{a,i,j})\rho(t)} \leq \Gamma \frac{48\delta}{1-8\delta}h_{a,i}(t), \\
    \text{Tr}\bigg({\Gamma\left(\sum_{\alpha\neq 0} L^\dagger_{\alpha, (j,k)} h_{i,j} L_{\alpha, (j,k)} - \{ h_{i,j}, H_{\textnormal{eff},(j,k)}\}\right)\rho(t)}\bigg) &= \tr{}{\Gamma(F_{j,k,i} + F^\dagger_{j,k,i})\rho(t)} \leq \Gamma \frac{48\delta}{1-8\delta}h_{j,k}(t).
\end{align}

Third, the isolated remainder terms $R_{a,i}(h_{i,j})$ and $R_{j,k}(h_{i,j})$ preserve the exact bounds derived in Eq.~\eqref{eq:exp_Raij}:
\begin{align}
    \tr{}{R_{a,i}(h_{i,j})\rho(t)} &\leq \frac{18\Gamma^2}{1-8\delta}(3h_{a,i}(t) + h_{i,j}(t)), \\
    \tr{}{R_{j,k}(h_{i,j})\rho(t)} &\leq \frac{18\Gamma^2}{1-8\delta}(3h_{j,k}(t) + h_{i,j}(t)).
\end{align}

Finally, the remaining terms in the expansion consist entirely of $\mathcal{O}(\Gamma^2)$ cross-interactions. Because all constituent operators are bounded in norm and annihilate the local ground states, applying Lemma~\ref{lemma:proj} bounds their combined expectation value by $\mathcal{O}(\Gamma^2)\big(h_{i,j}(t) + h_{a,i}(t) + h_{j,k}(t)\big)$.

Summing the bounds from all terms together yields:
\begin{equation}
    \tr{}{\mathcal{E}^\dagger_{a,i} \circ \mathcal{E}^\dagger_{j,k} (h_{i,j})\rho(t)} \leq \big(1 + \mathcal{O}(\Gamma^2)\big)h_{i,j}(t) + \left( \Gamma \frac{48\delta}{1-8\delta} + \mathcal{O}(\Gamma^2) \right)\big(h_{a,i}(t) + h_{j,k}(t)\big).
\end{equation}
Absorbing the diagonal correction into $\alpha = \mathcal{O}(\Gamma^2)$ and the off-diagonal coefficients into $\beta = \mathcal{O}(\Gamma\delta + \Gamma^2)$ directly recovers the stated inequality.
\end{proof}
Now assuming $\delta \leq \delta_\textnormal{th}$ where $\delta_\textnormal{th} = \mathcal{O}(1/\mathfrak{d}_0)$ as $\mathfrak{d}_0 \to \infty$, we prove rapid mixing of the quantum channel.
\begin{proof}[Proof of Theorem 5]
From this point onward we set $\Gamma =\Theta(\delta)$ which ensures all higher-order corrections $\mathcal{O}(\Gamma^2)$ remain subleading compared to the dissipation $\Gamma$. We put everything together to analyze the effect of one full layer of the global channel $\mathcal{E}$ on a local operator $h_{i,j}$.

Using Lemmas \ref{lemma:discrete_diag}, \ref{lemma:discrete_offdiag}, and \ref{lemma:discrete_two_edge}, we have:
\begin{align}
    \tr{}{\mathcal{E}^\dagger(h_{i,j})\rho(t)}\leq (1-c\delta+\mathcal{O}(\delta^2))h_{i,j}(t)+ \sum_{a,b \in L.C.(i,j)} \mathcal{O}(\delta^2)h_{a,b}(t),
\end{align}
where $L.C.(i,j)$ is the set of all pairs that lie in the support of the action of $\mathcal{E}$ on the edge $(i,j)$. Because the graph has maximum degree $\mathfrak{d}_0$, the number of such pairs is upper-bounded by $\mathcal{O}(\mathfrak{d}_0)$.

Summing up $h_{i,j}$ over all $(i,j)\in E$, the global parent Hamiltonian expectation value updates as:
\begin{align}\label{eq:H_update_discrete}
    \langle H(t+1) \rangle &\leq (1-c\delta +\mathcal{O}(\delta^2))\langle H(t) \rangle + \mathfrak{d}_0\mathcal{O}(\delta^2)\langle H(t) \rangle\notag \\
    & \leq (1-\kappa\delta)\langle H(t) \rangle,
\end{align}
where $\kappa$ is a strictly positive constant, provided we set the condition:
\begin{equation}
    \label{eq:delta_constraint}
    \delta < \min \left\{\frac{c'}{\mathfrak{d}_0}, \frac{1}{8}\right\},
\end{equation}
for a sufficiently small constant $c'$. 

Finally, we must account for populations outside the valid tensor image space, exactly as done in the continuous-time case. We define the modified Parent Hamiltonian $H' = H + \sum_{i} F_i$, where $F_i = P_{S_i}^\perp$, which has the same gap $\Delta_H$ and unique ground state $\psi$. The single-site correction channels $\prod_i \mathcal{E}_i$ perfectly dissipate this leakage at a strict exponential rate $\Gamma'$, guaranteeing $\langle F(t) \rangle \leq |E| e^{-\Gamma' t}$. 

Because the two-site channels $\mathcal{E}_e$ leave the orthogonal complement invariant, the coupled total energy strictly decays:
\begin{equation}
    \langle H'(t) \rangle = \langle H(t) \rangle + \langle F(t) \rangle \leq \mathcal{O}(|E|)(1-\kappa'\delta)^t,
\end{equation}
for some effective constant $\kappa'$. By the uniform spectral gap $\Delta_H$ of the parent Hamiltonian, we relate the total energy directly to the state fidelity:
\begin{equation}
    \left\vert \braket{\psi \vert \rho(t)}{\psi}-1\right\vert \leq \frac{\langle H'(t) \rangle}{\Delta_H} \leq \frac{1}{\Delta_H}\mathcal{O}(|E|)(1-\kappa'\delta)^t.
\end{equation}
To achieve a target error $\varepsilon$, we require this bound to be $\leq \varepsilon$. Solving for the required time steps $t$, we obtain:
\begin{equation}
    t \geq \mathcal{O}\left( \frac{1}{\delta} \log\left( \frac{1}{\Delta_H\varepsilon}\right)\right)+  \mathcal{O}\left( \frac{1}{\delta}\log\left(\frac{|E|}{\varepsilon}\right)\right).
\end{equation} 
Since $|E| \leq \mathfrak{d}_0 N$, this confirms the rapid mixing time and concludes the proof of Theorem~\ref{theorem:discrete}.
\end{proof}

\end{document}